\tikzset{
    labl/.style={anchor=south, rotate=90, inner sep=.5mm}
}
\definecolor{mygrey}{rgb}{.45,.4,.7}
\definecolor{mypurple}{RGB}{127, 73, 235}
\definecolor{mycyan}{rgb}{.1,.75,.95}
\definecolor{myred}{rgb}{.9,.1,.15}
\definecolor{myblue}{rgb}{.05,.15,.85}
\definecolor{mygreen}{rgb}{.05, .6, .1}
\newcommand{\rf}[1]{(\ref{#1})}
\newcommand{\tr}{\text{tr}}
\newcommand{\pa}{\partial}
\newcommand{\dd}{\text{d}}
\newcommand{\id}{{\mathds 1}}
\newcommand{\wtd}{\widetilde}
\newcommand{\ov}[1]{\overline{#1}}
\newcommand{\uline}[1]{\underline{#1}}
\renewcommand{\Re}{\text{Re}}
\renewcommand{\Im}{\text{Im}}
\newcommand{\Hom}{\text{Hom}}
\newcommand{\End}{\text{End}}
\newcommand{\lan}{\left\langle}
\newcommand{\ran}{\right\rangle}
\newcommand{\mfr}{\mathfrak}
\newcommand{\cA}{{\mathcal A}}
\newcommand{\cD}{{\mathcal D}}
\newcommand{\cF}{{\mathcal F}}
\newcommand{\cG}{{\mathcal G}}
\newcommand{\cH}{{\mathcal H}}
\newcommand{\cL}{{\mathcal L}}
\newcommand{\cM}{{\mathcal M}}
\newcommand{\cN}{{\mathcal N}}
\newcommand{\cO}{{\mathcal O}}
\newcommand{\cP}{{\mathcal P}}
\newcommand{\cR}{{\mathcal R}}
\newcommand{\bC}{\mathbb{C}}
\newcommand{\bI}{\mathbb{I}}
\newcommand{\bL}{\mathbb{L}}
\newcommand{\bP}{\mathbb{P}}
\newcommand{\bR}{\mathbb{R}}
\newcommand{\bT}{\mathbb{T}}
\newcommand{\sfF}{\mathsf{F}}
\newcommand{\sfN}{\mathsf{N}}
\newcommand{\ii}{\text{i}}
\DeclareMathAlphabet{\mathpzc}{OT1}{pzc}{m}{it}
\newcommand{\SU}{\text{SU}}
\newcommand{\U}{\text{U}}
\newcommand{\GL}{\text{GL}}
\newcommand{\SL}{\text{SL}}
\newcommand{\SO}{\text{SO}}
\newcommand{\gl}{\mathfrak{gl}}
\renewcommand{\u}{\mathfrak{u}} 
\newcommand{\su}{\mathfrak{su}}
\newcommand{\al}{\alpha}
\newcommand{\be}{\beta}
\newcommand{\ga}{\gamma}
\newcommand{\Ga}{\Gamma}
\newcommand{\de}{\delta}
\newcommand{\De}{\Delta}
\newcommand{\ep}{\epsilon}
\newcommand{\vep}{\varepsilon}
\newcommand{\tht}{\theta}
\newcommand{\la}{\lambda}
\newcommand{\om}{\omega}
\newcommand{\Om}{\Omega}
\newcommand{\si}{\sigma}
\newcommand{\Si}{\Sigma}
\newcommand{\Up}{\Upsilon}
\newcommand{\beq}{\begin{equation}}
\newcommand{\eeq}{\end{equation}}
\newcommand{\bea}{\begin{eqnarray}}
\newcommand{\eea}{\end{eqnarray}}
\newcommand{\nn}{\nonumber}
\numberwithin{equation}{section}
\theoremstyle{definition}
\newtheorem{thm}{Theorem}[section]
\newtheorem{prop}[thm]{Proposition}
\newtheorem{lemma}[thm]{Lemma}
\newtheorem{rmk}[thm]{Remark}
\newtheorem*{claim*}{Claim}
\newtheorem*{disc*}{Disclaimers}
\newtheorem*{conv*}{Conventions}
\title{Line Operators in 4d Chern-Simons Theory and Cherkis Bows}
\author[1]{Nafiz Ishtiaque}
\author[2]{Yehao Zhou}
\affil[1]{IHES, 91440 Bures-sur-Yvette, France}
\affil[2]{Kavli IPMU (WPI), UTIAS, The University of Tokyo, Kashiwa, Chiba 277-8583, Japan}
\date{}
\newcommand{\BF}{\text{BF}}
\newcommand{\Map}{\text{Map}}
\newcommand{\sfG}{\mathsf{G}}
\newcommand{\sfM}{\mathsf{M}}
\newcommand{\vr}{\varrho}
\newcommand{\Weyl}{\text{Weyl}}
\tikzset{snake it/.style={decorate, decoration={snake, segment length=2mm, amplitude=.5mm}}}
\tikzset{whitebox/.style={fill=white, rectangle, inner sep=.8mm}}
\newcommand{\Cross}{\mathbin{\tikz [x=1.4ex,y=1.4ex,line width=.2ex] \draw (0,0) -- (1,1) (0,1) -- (1,0);}}
\newcommand{\NScross}[1]{\draw (#1.north east) -- (#1.south west) (#1.north west) -- (#1.south east)}
\newlength\epirule%
\begin{document}

\maketitle

\begin{abstract}
We show that the phase spaces of a large family of line operators in 4d Chern-Simons theory with $\text{GL}_n$ gauge group are given by Cherkis bow varieties with $n$ crosses. These line operators are characterized by Hanany-Witten type brane constructions involving D3, D5, and NS5 branes in an $\Omega$-background. Linking numbers of the five-branes and mass parameters for the D3 brane theories determine the phase spaces and in special cases they correspond to vacuum moduli spaces of 3d $\mathcal{N}=4$ quiver theories. Examples include line operators that conjecturally create T, Q, and L-operators in integrable spin chains.
\end{abstract}

\tableofcontents


\section{Introduction}
4d Chern-Simons (CS) theory \cite{Costello:2013zra, Costello:2013sla} is a holomorphic-topological quantum field theory on a 4-manifold $\Si \times C$ where $\Si$ is a topological surface and $C$ is a Riemann surface with a holomorphic volume form $\om$. It is a gauge theory with a complex gauge group $G_\bC$. The only dynamical field in this theory is the Lie algebra valued connection $\cA_x \dd x + \cA_y \dd y + \cA_{\ov z} \dd \ov z$ where $x, y$ are local coordinates on $\Si$ and $z, \ov z$ are local holomorphic and anti-holomorphic coordinates on $C$. The action for the theory is:
\beq
	\frac{1}{\hbar} \int_{\Si \times C} \om \wedge \tr\left(\cA \wedge (\dd + \ov\pa) \cA + \frac{2}{3} \cA \wedge \cA \wedge \cA \right)\,.
\eeq
The theory is topological on $\Si$ and holomorphic on $C$. Here we only focus on the case $\Si = \bR^2$, $C = \bC$, $\om = \dd z$ and the gauge group is  $G_\bC = \GL_n = \GL(n, \bC)$ with Lie algebra $\gl_n = \gl(n, \bC)$. 

In this paper we study phase spaces of line operators in 4d CS theory that are local in the holomorphic direction $\bC$. A line operator in a gauge theory with gauge group $G$ can be defined by taking a quantum mechanics with $H$-symmetry, a homomorphism $G \to H$ by which $G$ acts on its phase space, and coupling it to the gauge theory. By the phase space of the line operator we refer to the phase of this quantum mechanics. The line operators we study are motivated by certain brane constructions. We label these line operators in the $\GL_n$ theory by
\beq\begin{aligned}
	\text{a spectral parameter,}  \quad z &\; \\
	\text{an $n$-tuple of integers,} \quad \bm K &\;= (K_1, \cdots, K_n) \\
	\text{a $p$-tuple of integers,} \quad \bm L &\;= (L_1, \cdots, L_p) \\
	\text{and, a $(p-1)$-tuple of complex numbers,} \quad \bm\vr &\;= (\vr_1^\bC, \cdots, \vr_{p-1}^\bC)
\end{aligned}\label{LOData}\eeq
satisfying
\beq
	N := \sum_{i=1}^n K_i = \sum_{i=1}^p L_i\,.
\eeq
Our main result is that the line operator labeled by $z$, $\bm\vr$, $\bm K$, and $\bm L$ -- which we denote by  $\bL^z_{\bm\vr}(\bm K, \bm L)$ --  can be described classically in terms of a complex symplectic space called a Cherkis bow variety, denoted in this paper by $\cM^\text{bow}_{\bm\vr}(\bm K, \bm L)$. Our result is similar to how flag varieties correspond to classical phase spaces of Wilson lines in 3d CS theory \cite{Witten:1988hf}. The bow varieties were introduced by Cherkis as certain moduli spaces of instantons \cite{Cherkis:2009hpw, Cherkis:2009jm, Cherkis:2010bn} and further described by Nakajima and Takayama as quiver varieties \cite{Nakajima:2016guo} who also showed that for certain choices of $\bm K$ and $\bm L$ these varieties coincide with the Coulomb branches of 3d $\cN=4$ quiver gauge theories.  Note that the phase spaces are independent of the spectral parameter and we will omit references to this parameter for the most part.

To each line operator $\bL^z_{\bm\vr}(\bm K, \bm L)$ we assign two 3d $\cN=4$ theories which we denote by $T_{\bm\vr}[\U(N)]_{\bm K}^{\bm L}$ and $T^\vee_{\bm\vr}[\U(N)]_{\bm K}^{\bm L}$. These two theories are mirrors \cite{Intriligator:1996ex} of each other. The bow variety $\cM^\text{bow}_{\bm\vr}(\bm K, \bm L)$ is the Coulomb branch of $T_{\bm\vr}[\U(N)]_{\bm K}^{\bm L}$ and the Higgs branch of $T^\vee_{\bm\vr}[\U(N)]_{\bm K}^{\bm L}$. When $\bm K$ and $\bm L$ satisfy certain constraints, the 3d theory $T_{\bm\vr}[\U(N)]_{\bm K}^{\bm L}$ admits a linear quiver description. Coulomb branches of these quiver theories are known to be slices in the affine Grassmannian whose deformation quantization results in shifted truncated Yangians \cite{2012arXiv1209.0349K, Bullimore:2015lsa, Braverman:2016pwk, Braverman:2016wma}. Line operators in 4d CS theory form integrable spin chains which carry natural Yangian actions \cite{Costello:2017dso, Costello:2018gyb}. Our construction then suggests that not only the (deformation) quantization of slices in the affine Grassmannian but also that of Bow varieties should result in algebras with RTT representations (e.g. shifted truncated Yangian). From this point of view, our results are closely related to the Bethe/Gauge correspondence of Nekrasov and Shatashvili \cite{Nekrasov:2009ui, Nekrasov:2009uh, Nekrasov:2009rc}, we will not discuss this connection in depth in this paper. 

It was shown in \cite{Cherkis:2011ee} that bow varieties also arise as certain moduli spaces of vacua in 4d $\cN=4$ theories with impurity walls. These defect 4d theories are effectively described by 3d $\cN=4$ theories and the relevant solutions to the vacuum equations become the Higgs branches of these 3d theories. This establishes an equivalence between bow varieties and Higgs branches as hyper K\"ahler spaces. Using $\Om$-deformation, we localize these defect 4d theories to 2d holomorphic BF theories with line defects. We then identify the complex phase spaces of these BF theories with the complex bow varieties described in \cite{Nakajima:2016guo}. Combinatorial data associated with Hanany-Witten brane configurations \cite{Hanany:1996ie} involving D3-D5-NS5 branes were also used in the mathematical literature \cite{2020arXiv201207814R} to define bow varieties, which is precisely the association between bow varieties and Higgs branches of 3d $\cN=4$ theories made in \cite{Cherkis:2011ee} and in our paper. These authors define stable envelopes for torus equivariant cohomology of bow varieties and being motivated by S-duality in string theory they study symplectic duality between stable envelopes of bow varieties defined by mirror Hanany-Witten configurations. The description of 3d $\cN=4$ vacuum branches as bow varieties derived in \cite{Cherkis:2011ee} and this paper puts the results of \cite{2020arXiv201207814R} in a more physical context.

A similar characterization of some spins  in terms of algebras appears in \cite{Bazhanov:2010jq} in the context of studying solutions to the RTT relations in rational $\gl_n$ spin chains. Given an R-matrix, a variety of T-operators, or more generally referred to as L-operators in \cite{Bazhanov:2010jq} are characterized as follows. A module $V_q$ of $U(\gl_q) \otimes \Weyl^{\otimes q(n-q)}$, where $\Weyl$ is the Weyl algebra or the Heisenberg algebra with $q(n-q)$ oscillators, becomes an induced module for the Yangian $Y(\gl_n)$ via a homomorphism $Y(\gl_n) \to U(\gl_q) \otimes \Weyl^{\otimes q(n-q)}$. Given $V_q$, the authors of  \cite{Bazhanov:2010jq} construct solutions $L_q(z) \in \End \left( \bC^n \otimes V_q \otimes \bC[z] \right)$ to the RTT relation that are linear in $z$. The operators $L_n$ and $L_1$ are called T and Q-operators respectively. \cite{Costello:2021zcl} shows that the Q-operator can be computed from 4d CS theory by computing the expectation value of crossing Wilson and 't Hooft lines. Here Wilson and 't Hooft lines can be constructed as topological line defects labeled by the module $\bC^n$ of $\gl_n$ and the Fock module for $\Weyl^{\otimes(n-1)}$ respectively. We shall show that the family of line operators that we study includes operators with phase spaces whose deformation quantization leads to the algebras $U(\gl_q) \otimes \Weyl^{\otimes q(n-q)}$ for all $q=1,\cdots, n$. It is therefore natural to conjecture that crossing these line operators with Wilson lines shall lead to the L-operators of  \cite{Bazhanov:2010jq}, though we leave the computation of such expectation values for future work.

%

This paper is structured as follows. In \S\ref{sec:branes} we discuss the brane construction of 4d CS theories with line operators. We follow the construction of 4d CS as the $\Om$-deformed world-volume theory of a stack of D5 branes from \cite{Costello:2018txb}. A family of line operators in this theory can be constructed by introducing NS5 branes and suspending D3 branes between the five-branes. We label a line operator by the {linking numbers} of the five-branes -- the linking numbers of the D5 and the NS5 branes constitute the tuples $\bm K$ and $\bm L$ respectively. The remaining parameters $z$ and $\bm\vr$ will denote locations of the NS5 branes in certain directions. Supersymmetric configurations of the D3 branes parameterize the classical phase spaces that we assign to the line operators created by the D3 branes. These supersymmetric configurations can be seen as the spaces of vacua of the world-volume theory of the D3 branes. In general we can describe this theory as a 4d $\cN=4$ theory on an interval, which in principle can effectively be described as a 3d $\cN=4$ theory at low energy. This defines the 3d theory $T^\vee_{\bm\vr}[\U(N)]_{\bm K}^{\bm L}$ and we define the  theory $T_{\bm\vr}[\U(N)]_{\bm K}^{\bm L}$ as its mirror. The phase space assigned to the line operator is the Higgs branch of $T^\vee_{\bm\vr}[\U(N)]_{\bm K}^{\bm L}$ which is also the Coulomb branch of the mirror. For some restricted set of linking numbers we can rearrange the five and the three-branes using Hanany-Witten transitions in a way that leads to a standard description of the world-volume theory of the D3 branes as 3d $\cN=4$ quiver gauge theories  \cite{Hanany:1996ie}. In these quiver cases the $\vr_i^\bC$s will denote the complex {Fayet–Iliopoulos (FI)} parameters in $T^\vee_{\bm\vr}[\U(N)]_{\bm K}^{\bm L}$ and complex {twisted masses} in $T_{\bm\vr}[\U(N)]_{\bm K}^{\bm L}$. 3d $\cN=4$ theories admit FI parameters and twisted masses that are triplets of real scalars rotated by $\SU(2) \times \SU(2)$ R-symmetry acting on the hyper-K\"ahler vacuum branches \cite{Intriligator:1996ex}. However, we will land on a complex description of the vacuum branches as opposed to the hyper-K\"ahler description, and as such only a complex linear combination of two of the components of a real triplet will be prominent in our discussion.

In \S\ref{sec:bc} and \S\ref{sec:phase} we take a more field theoretic route to arrive at the description of the phase spaces as bow varieties. Instead of looking at the 3d $\cN=4$ descriptions, we look directly at the 4d world-volume theory of the D3 branes with boundaries and domain walls provided by the five-branes. The 4d theory is $\cN=4$ super Yang-Mills (SYM) and the five-branes impose 1/2-BPS boundary conditions on the fields of this theory \cite{Gaiotto:2008sa, Gaiotto:2008ak}. In \S\ref{sec:bc} we check that these 1/2-BPS boundaries preserve the Kapustin-Witten twist \cite{Kapustin:2006pk} of the 4d $\cN=4$ SYM labeled by the twist parameter $t=\ii$ (as defined in \S3.1 of \cite{Kapustin:2006pk}, see also \rf{tdef}). In \S\ref{sec:phase} we show that turning on $\Om$-deformation reduces this twisted 4d theory to 2d BF theory on an interval. In this setup the phase space attached to the line operator is the phase space of the BF theory, which is the moduli space of solutions to complex Nahm's equation with boundary conditions. We determine the boundary conditions in the BF theory descended from boundary conditions of the original 4d theory and we show that the phase spaces coincide with Cherkis bow varieties $\cM^\text{bow}_{\bm\vr}(\bm K, \bm L)$. When the linking numbers are restricted such that $T_{\bm\vr}[\U(N)]_{\bm K}^{\bm L}$ admits a quiver description we check that $\cM^\text{bow}_{\bm\vr}(\bm K, \bm L)$ is indeed the coulomb branch of $T_{\bm\vr}[\U(N)]_{\bm K}^{\bm L}$ using results from  \cite{Nakajima:2016guo}. In the final section \S\ref{sec:lineop} we discuss some  examples of line operators, their phase spaces, and the quantized algebras. As special cases we find line operators whose quantization give algebras related to the T, Q, and L-operators as described in \cite{Bazhanov:2010jq}.

The main contributions of the paper are as follows. We show that Bow varieties provide a geometric characterization of spins in rational $\gl_n$ spin chains. We do this by showing that these varieties are phase spaces of line operators in 4d CS theory and then appealing to the known relation between these line operators and integrable spin chains. The result suggests that bow varieties can be endowed with the structure of classical integrable systems which quantize to rational $\gl_n$ spin chains. We identify these phase spaces with supersymmetric vacua of 4d $\cN=4$ SYM theories on intervals with domain walls, as in \cite{Cherkis:2011ee}. These vacua correspond to the Higgs branch vacua of some effective 3d theories, or equivalently, to the Coulomb branch vacua of the mirror theories. Applying $\Om$-deformation to the 4d setup we show that a protected subsector of these 4d theories can be described as 2d BF theories with boundaries and line defects. We map boundary conditions and domain walls of the 4d theory to those of the BF theory. The bow varieties become the complex phase spaces of these defect BF theories.

\section{Brane Construction of Line Operators}\label{sec:branes}
\subsection{General Considerations (Arbitrary Linking Numbers)}
4d CS theory with $\GL_n$ gauge group can be constructed from a stack of $n$ D5 branes. Let $\Si$ be a a 2d surface, $C$ a Riemann surface, and TN the 4d Taub-NUT space. We start in type IIB string theory with the 10d space-time being $T^*\Si \times C \times \text{TN}$. The TN can be described very concretely in terms of a coordinate $\vec x$ of $\bR^3$ and a circle with coordinate $\tht$ -- in terms of which the TN metric can be written as:
\beq
	\dd s^2_\text{TN} = U \dd \vec x \cdot \dd \vec x + \frac{1}{U} (\dd \tht + \vec \om \cdot \dd \vec x)^2\,,
\eeq
where $U = \frac{1}{r} + \frac{1}{\la^2}$ and $\vec\om$ is a vector field on $\bR^3 \backslash \bR$ satisfying $\dd U = \star_{\bR^3} \dd (\vec \om \cdot \dd \vec x)$. We see that the TN circle collapses at the center of $\bR^3$ and at infinity its radius asymptotes to $\la$. $\bR^3$ can be parameterized by a radial coordinate $\rho := \sqrt{\vec x \cdot \vec x}$ and two angular coordinates. A 2d surface inside TN located at fixed values of these two angular coordinates and parameterized by $r$ and $\tht$ has the shape of a ``cigar''. TN can therefore be described as a family of cigars parameterized by an $S^2$, all the cigars sharing a single point -- the tip -- located at the center of the TN. We need to fix a specific supergravity background, the defining characteristic of the background is that it preserves a supercharge $Q$ which induces a B-type $\Om$-deformation of the world-volume theory of any brane wrapping a cigar in TN. This requires turning on -- in addition to a nontrivial metric -- a dilaton and a RR 2-form. For details about this particular background we refer to \cite{Costello:2018txb}.\footnote{Some T-dual version of this background was introduced in \cite{Hellerman:2012zf} with the name {Taub-trap background}.} Here we simply note that if any D-brane is placed in this background wrapping a cigar inside TN then from the point of view of the cigar the D-brane theory looks like an $\Om$-deformed B-model. The supercharge $Q$ squares to a rotation of the TN circle which we schematically write as $Q^2 \sim \cL_{\hbar \pa_\tht}$, here $\pa_\tht$ is the vector field generating rotation of the TN circle and $\hbar$ acts as a deformation parameter.

To find 4d CS theory we introduce a stack of $n$ D5 branes wrapping $\Si \times C \times \text{Cig}$ where $\text{Cig}$ is some chosen cigar inside TN. The world-volume theory of the D5 branes is 6d $\cN=(1,1)$ $\U(n)$ SYM, which upon $\Om$-deformation along $\text{Cig}$ reduces to 4d $\GL_n$ CS theory on $\Si \times C$ at the level of BRST cohomology \cite{Costello:2018txb}. To get line operators in this theory we introduce NS5 branes that share 3 directions with the D5s, wrap the entire TN, and do not wrap any direction in $C$. We further introduce D3 branes suspended between the D5s and the NS5s. The D3 branes have finite extent in one of the directions and at low energy the corresponding world-volume theory is a 3d $\cN=4$ theory, which upon $\Om$-deformation becomes a 1d topological quantum mechanics\footnote{Topological means that the action for the quantum mechanics depends only on the symplectic form of the target, there is no Hamiltonian.} (TQM) coupled to the 4d CS theory \cite{Yagi:2014toa, Ishtiaque:2020ufn}. Different configurations of D3 and NS5 branes lead to different TQMs and each TQM defines a line operator in the 4d CS theory.

We shall simplify our discussions slightly by taking $\Si = \bR^2$ and $C = \bC$. For notational simplicity we also write TN as $\bR^2_{+\hbar} \times \bR^2_{-\hbar}$. The notation is meant to imply that the supercharge $Q$ squares to a bosonic symmetry that rotates $\bR^2_{+\hbar}$ and $\bR^2_{-\hbar}$ in opposite angular directions. These two planes can be thought of as two antipodal cigars inside TN. We choose coordinates with indices running from $0$ to $9$ to label directions in our 10d space-time and we summarize the directions wrapped by the D3, D5, and NS5 branes in Table \ref{braneConfigTable}.
\begin{table}[h]
\beq
	\begin{array}{c|c>{\columncolor[gray]{0.9}}c>{\columncolor[gray]{0.9}}ccc>{\columncolor[gray]{0.9}}c>{\columncolor[gray]{0.9}}cc>{\columncolor[gray]{0.9}}c>{\columncolor[gray]{0.9}}c}
	& & \multicolumn{2}{>{\columncolor[gray]{0.9}}c}{ \bR^2_{+\hbar}} & & & \multicolumn{2}{>{\columncolor[gray]{0.9}}c}{ \bR^2_{-\hbar}} & & \multicolumn{2}{>{\columncolor[gray]{0.9}}c}{ C} \\
	& 0 &  1 &  2 & 3 & 4 &  5 &  6 & 7 &  8 &  9 \\
	\hline
	\text{D5} & \times & \times & \times & & & & & \times & \times & \times \\
	\text{D3} & \times & \times & \times & \times &&&&&& \\
	\text{NS5} & \times & \times & \times && \times & \times & \times &&& 
	\end{array}
\nn\eeq
\caption{Directions wrapped by various branes in the construction of 4d CS theory with line operators. The 10d space-time is $T^*\Si \times C \times TN$. We have replaced TN with $\bR^2_{+\hbar} \times \bR^2_{-\hbar}$ and we have chosen indices for 10d coordinates parameterizing various components as follows: $\Si = \bR^2_{07}$, $T^*\Si = \bR^4_{0734}$, $C = \bC = \bR^2_{89}$, $\bR^2_{+\hbar} = \bR^2_{12}$, and $\bR^2_{-\hbar} = \bR^2_{56}$. We write $\bR^n_{i_1\cdots i_n}$ to refer to the $\bR^n$ space parameterized by coordinates $x^{i_1}, \cdots, x^{i_n}$.}
\label{braneConfigTable}
\end{table}

The NS5 branes and consequently the D3 branes have fixed positions in the $C$ direction. To create a single line operator these branes must be coincident in the $C$ direction, the location of these branes in this direction will be the spectral parameter associated with the line operator.
Different numbers of NS5s and D3s lead to different line operators. Since we want to create 4d CS theory with $\GL_n$ gauge group, the number of D5 branes is fixed to be $n$. Suppose there are $p$ NS5 branes. We choose an ordering of the five-branes in the $x^3$ direction and reading from left to right, we label the NS5 branes as NS5$_1$, ... , NS5$_p$ and the D5 branes as D5$_1$, ... , D5$_n$. A particular configuration can now be described by an $n$-tuple $\bm K = (K_1, \cdots, K_n)$ and a $p$-tuple $\bm L = (L_1, \cdots, L_p)$ of integers satisfying $\sum_{i=1}^n K_i = \sum_{i=1}^p L_p$ where $K_i$ and $L_i$ are the {linking numbers} of D5$_i$ and NS5$_i$ respectively (concept introduced in \cite{Hanany:1996ie} but our terminology is from \cite{Gaiotto:2008ak}):\footnote{To be precise, $L_i$ -- what we are calling the linking number of the $i$th NS5 brane -- is really $n$ minus the linking number of the NS5 brane where $n$ is the total number of D5 branes. However we shall keep referring to $L_i$ as the linking number of NS5$_i$ simply for convenience. Our notion of linking number coincides precisely with that of the ``charge of a five-brane'' from \cite{2020arXiv201207814R}.}
\beq\begin{aligned}
	K_i :=&\; \text{No. of (D3s to the left -- D3s to the right + NS5s to the right) of D5$_i$} \\
	L_i :=&\; \text{No. of (D3s to the right -- D3s to the left + D5s to the left) of NS5$_i$}
\end{aligned}\label{KLdef}\eeq
Here {left} and {right} refers to the $x^3$ direction. If we put all the NS5 branes to the left of all the D5 branes then the linking numbers are simply the numbers of D3 branes ending on the D5 (resp. NS5) branes from the left (resp. right). We depict the corresponding configuration pictorially in Fig. \ref{fig:LOKL}.
\begin{figure}[h]
\begin{center}
\begin{tikzpicture}
\tikzmath{\r=.65; \a=60; \H=1.3;}
\node[draw=black, circle, minimum size=\r cm] (NS51) {};
\NScross{NS51};
\node[above=0cm of NS51] () {\scriptsize  NS5$_1$};
\node[right=.75*\r cm of NS51, circle, minimum size=\r cm] (NS5inv1) {};
\node[right=1.5*\r cm of NS51] (dots) {$\cdots$};
\draw (NS51.\a) -- (NS5inv1.{180-\a});
\draw (NS51.-\a) -- (NS5inv1.{180+\a});
\node[above right=-.8*\r cm and .5*\r cm of NS51] (vdots1) {$\vdots$};
\node[below=0cm of vdots1] () {\scriptsize $L_1\,$D3s};
\node[right=1.5*\r cm of dots, draw=black, circle, minimum size=\r cm] (NS5p-1) {};
\NScross{NS5p-1};
\node[above=0cm of NS5p-1] () {\scriptsize  NS5$_{p-1}$};
\node[left=.75*\r cm of NS5p-1, circle, minimum size=\r cm] (NS5inv2) {};
\draw (NS5inv2.\a) -- (NS5p-1.{180-\a});
\draw (NS5inv2.-\a) -- (NS5p-1.{180+\a});
\node[above left=-.8*\r cm and .5*\r cm of NS5p-1] (vdots2) {$\vdots$};
\node[below left=.8*\r cm and -.7*\r cm of vdots2] (sum1) {\scriptsize $\sum\limits_{j=1}^{p-2} L_j\,\text{D3s}$};
\draw[->] (sum1) to[out=60, in=270] ($(vdots2)-(0,.5)$);
\node[left=.75*\r cm of NS5p-1, circle, minimum size=\r cm] (NS5inv2) {};
\node[right=1.5*\r cm of NS5p-1, draw=black, circle, minimum size=\r cm] (NS5p) {};
\NScross{NS5p};
\node[above=0cm of NS5p] () {\scriptsize  NS5$_{p}$};
\draw (NS5p-1.\a) -- (NS5p.{180-\a});
\draw (NS5p-1.-\a) -- (NS5p.{180+\a});
\node[above left=-.8*\r cm and .7*\r cm of NS5p] (vdots3) {$\vdots$};
\node[below right=.8*\r cm and -.7*\r cm of vdots3] (sum2) {\scriptsize $\sum\limits_{j=1}^{p-1} L_j\, \text{D3s}$};
\draw[->] (sum2) to[out=120, in=280] ($(vdots3)-(0,.5)$);
\coordinate[right=3*\r cm of NS5p] (D51mid);
\node[above right=-.8*\r cm and 1.5*\r cm of NS5p] (vdots4) {$\vdots$};
\node[below=0cm of vdots4] () {\scriptsize $N\, \text{D3s}$};
\coordinate[above=\H cm of D51mid] (D51top);
\coordinate[below=\H cm of D51mid] (D51bot);
\node[below=0cm of D51bot] () {\scriptsize D5$_1$};
\draw (D51top) -- (D51bot);
\draw (NS5p.\a) -- (NS5p.\a -| D51mid);
\draw (NS5p.-\a) -- (NS5p.-\a -| D51mid);
\node[above right=.19*\H cm and .5*\r cm of D51mid] (vdotsD1) {$\vdots$};
\node[above right=-.3*\r cm and 1.2*\r cm of vdotsD1] (sumD1) {\scriptsize $\sum\limits_{i=2}^n K_i\,$D3s};
\draw[->] (sumD1) to[out=180, in=70] ($(vdotsD1)+(0,.5*\r)$);
\coordinate[right=1.5*\r cm of D51mid] (D5inv1mid);
\node[right=.5*\r cm of D5inv1mid] () {$\cdots$};
\draw ($(D51mid)+(0,.7*\H)$) -- ({$(D51mid)+(0,.7*\H)$} -| D5inv1mid);
\draw ($(D51mid)+(0,.1*\H)$) -- ({$(D51mid)+(0,.1*\H)$} -| D5inv1mid);
\coordinate[right=2*\r cm of D5inv1mid] (D5inv2mid);
\coordinate[right=1.5*\r cm of D5inv2mid] (D5n-1mid);
\draw ($(D5inv2mid)+(0,-.3*\H)$) -- ({$(D5inv2mid)+(0,-.3*\H)$} -| D5n-1mid);
\draw ($(D5inv2mid)+(0,.3*\H)$) -- ({$(D5inv2mid)+(0,.3*\H)$} -| D5n-1mid);
\coordinate[above=\H cm of D5n-1mid] (D5n-1top);
\coordinate[below=\H cm of D5n-1mid] (D5n-1bot);
\node[below=0cm of D5n-1bot] () {\scriptsize D5$_{n-1}$};
\draw (D5n-1top) -- (D5n-1bot);
\node[above left=-.22*\H cm and .5*\r cm of D5n-1mid] (vdotsD2) {$\vdots$};
\node[below left=.9*\r cm and -.5*\r cm of vdotsD2] (sumD2) {\scriptsize $(K_{n-1}+K_n)$D3s};
\draw[->] (sumD2) to[out=60, in=240] ($(vdotsD2)-(0,.9*\r)$);
\coordinate[right=2*\r cm of D5n-1mid] (D5nmid);
\coordinate[above=\H cm of D5nmid] (D5ntop);
\coordinate[below=\H cm of D5nmid] (D5nbot);
\node[below=0cm of D5nbot] () {\scriptsize D5$_n$};
\draw (D5ntop) -- (D5nbot);
\draw ($(D5n-1mid)+(0,-.7*\H)$) -- ({$(D5n-1mid)+(0,-.7*\H)$} -| D5nmid);
\draw ($(D5n-1mid)+(0,-.1*\H)$) -- ({$(D5n-1mid)+(0,-.1*\H)$} -| D5nmid);
\node[below left=.02*\H cm and .7*\r cm of D5nmid] (vdotsD3) {$\vdots$};
\node[above=0*\r cm of vdotsD3] () {\scriptsize $K_n\,$D3s};
\coordinate[above right=1*\r cm and 1.5*\r cm of D5nbot] (origin);
\coordinate[right=2*\r cm of origin] (X);
\coordinate[above=2*\r cm of origin] (Y);
\draw[->, black!50] (origin) -- (X);
\node[below=0cm of X] () {\color{black!50} \scriptsize $x^3$};
\draw[->, black!50] (origin) -- (Y);
\node[right=0cm of Y] () {\color{black!50} \scriptsize $C$};
\end{tikzpicture}
\end{center}
\caption{Brane configuration for the line operator $\bL_{\bm\vr}(\bm K, \bm L)$ in $\GL_n$ 4d CS theory labeled by linking numbers $\bm K=(K_1, \cdots, K_n)$ and $\bm L=(L_1, \cdots, L_p)$ satisfying $\sum_{i=1}^n K_i = \sum_{j=1}^p L_j = N$. $\hbar\vr = \hbar(\vr_1^\bC, \cdots, \vr_{p-1}^\bC)$ are complex FI parameters, they are not visible in the classical brane picture. We denote the 3d $\cN=4$ theory describing the low energy dynamics of the D3 branes by $T^\vee_{\bm\vr}[\U(N)]_{\bm K}^{\bm L}$.}
\label{fig:LOKL}
\end{figure}
We only impose the constraints on the linking numbers that our diagrams do not become disconnected, i.e., there are nonzero number of D3 branes between any two adjacent five-branes and that the brane configurations remain supersymmetric, i.e., the s-rule \cite{Hanany:1996ie} is not violated. These require $\sum_{j=i}^n K_j > 0$ for all $1 \le i \le n$, $\sum_{i=1}^j L_i > 0$ for all $1 \le j \le p$, and $N \le np$ for example, though these conditions are not sufficient in general.

The D3 branes have finite length in the $x^3$ direction and at low energy their world-volume theory is a 3d $\cN=4$ theory on $\bR^3_{012}$. 
We denote the 3d $\cN=4$ theory describing the D3 branes at low energy by $T^\vee[\U(N)]_{\bm K}^{\bm L}$. This theory can be deformed by both FI parameters and twisted masses.  Turning on FI parameters deforms the Higgs branch and turning on generic twisted masses lifts most of the Higgs branch leaving only isolated vacua \cite{deBoer:1996mp, deBoer:1996ck, Porrati:1996xi}. We only turn on FI parameters, keeping the twisted masses zero, so that we have smooth Higgs branches. In terms of the branes, FI parameters correspond to differences in locations of the NS5 branes in the $\bR^3_{789}$ direction \cite{Hanany:1996ie}. We have already mentioned that we take the NS5 branes to be coincident in these directions to have a single line operator. In practice we will assume that the differences in their locations in these direction are of order $\cO(\hbar)$ so that classically they are still coincident but these differences will generate FI deformations of order $\cO(\hbar)$ in the theory. Thus the 3d theory in general can be deformed by turning on the FI parameters $\hbar\vec\vr_j = \hbar (\vr_{1,j}, \vr_{2,j}, \vr_{3,j})$ where $\hbar\vec\vr_j$ can schematically be interpreted as the difference between the locations of NS5$_j$ and NS5$_{j+1}$ in the $\bR^3_{789}$ direction. In the $\Om$-background we get a holomorphic description of the Higgs branch which is deformed by the parameters associated with the locations of the NS5 branes in the $C$ direction. We therefore define the {complex FI parameters}
\beq
	\hbar \vr_j^\bC := \hbar (\vr_{j,2} - \ii \vr_{j,3})\,. \label{cFI}
\eeq
The remaining FI parameter, also called the {real FI parameter}, is traded for stability conditions during the construction of Higgs branches as invariant quotients \cite{Nakajima:1994nid}. We therefore label our theory using only the complex parameters: $T^\vee_{\bm\vr}[\U(N)]_{\bm K}^{\bm L}$ where $\bm\vr := (\vr_1^\bC, \cdots, \vr_{p-1}^\bC)$.  For arbitrary linking numbers $\bm K$ and $\bm L$ a more concrete description of this theory is not immediate. To create a line operator we do not assume any constraints on the linking numbers other than preservation of supersymmetry.

We shall denote the Higgs branch of a theory $T$ by $\cM_H(T)$. A 3d $\cN=4$ theory on $\bR^3_{012}$ with B-type $\Om$-deformation on the $\bR^2_{12}$ plane localizes to a TQM on $\bR_0$ whose target is the Higgs branch of the 3d theory \cite{Yagi:2014toa, Ishtiaque:2020ufn}:\footnote{This quantum mechanics can also be studied as a protected sub-sector of the 3d $\cN=4$ superconformal algebra \cite{Chester:2014mea, Beem:2016cbd, Dedushenko:2016jxl, Dedushenko:2018icp}. The relation between certain protected sub-sectors of a superconformal algebras and $\Om$-deformations persists in 4d $\cN=2$ (and possibly more generally) and in fact the 3d version can be derived as a dimensional reduction thereof \cite{Oh:2019bgz, Jeong:2019pzg}.}
\beq
	T \text{ on $\bR^3_{012}$} \xrightarrow{\text{$\Om$-deformation on $\bR^2_{12}$}} \text{TQM on $\bR_0$ with the target $\cM_H(T)$}\,.
\eeq
Before the localization by $\Om$-background, the 3d theory couples to the 6d $\cN=(1,1)$ $\U(n)$ SYM theory of the D5 branes as a 3d defect. After localization we find the aforementioned TQM coupled to the 4d $\GL_n$ CS theory. The 4d CS connection couples to the flavor current of the TQM. The flavor symmetry of the TQM is the complexification of the flavor symmetry of the Higgs branch of the 3d $\cN=4$ theory, which depends on the exact brane configuration and is difficult to describe concisely in general. Let us schematically denote this flavor symmetry of the 3d theory by $F(\bm K, \bm L)$:
\beq
	F(\bm K, \bm L) := \text{Higgs branch flavor symmetry group of } T^\vee_{\bm\vr}[\U(N)]_{\bm K}^{\bm L}
\eeq
and the flavor symmetry of the TQM by $F_\bC(\bm K, \bm L)$. The flavor symmetry of the 3d theory is a subgroup of the $\U(n)$ that rotates the $n$ D5 branes. After complexification by $\Om$-background we therefore find $F_\bC$ to be a subgroup of $\GL_n$ and the $\gl_n$-valued connection $\cA$ of the 4d CS theory couples to the TQM via $\int_\bR \tr(\cA \mu_{\cM_H})$ where $\mu_{\cM_H}$ is the (dual of the) $\text{Lie}(F_\bC)$ valued moment map on $\cM_H$. This TQM defines a line operator in the 4d CS theory which we denote by $\bL_{\bm\vr}(\bm K, \bm L)$. The phase space of the line operator is the target of the TQM:
\beq
	\text{Phase space of $\bL_{\bm\vr}(\bm K, \bm L)$} = \cM_H(T^\vee_{\bm\vr}[\U(N)]_{\bm K}^{\bm L})\,, 
\eeq
and the TQM quantizes (the holomorphic functions on) this phase space into an operator algebra which we denote by $\cA_{\bm \vr}(\bm K, \bm L)$:
\beq
	\cA_{\bm\vr}(\bm K, \bm L) := \text{Operator algebra that couples to } \bL_{\bm\vr}(\bm K, \bm L)\,. \label{AKL}
\eeq

\subsection{Special Cases (with Quiver Descriptions)} \label{sec:qHiggs}
In the remainder of this section, let us look at the cases where the 3d theory $T^\vee_{\bm\vr}[\U(N)]_{\bm K}^{\bm L}$ admits a quiver description as examples.\footnote{For comparison, the Higgs branches of these theories would be bow varieties given by what \cite{Nakajima:2016guo} calls {cobalanced dimension vectors}.}

In order to have a quiver description, we impose the following constraints on the linking numbers:
\beq\begin{gathered}
	0 < K_i < p\,, \qquad 1 \le i \le n\\
	\wtd v_j := \sum_{i=1}^j L_i - \sum_{i=1}^{j-1} (j-i) \wtd w_i \ge 0 \,, \qquad 0 < j < p \\
	\text{where,} \quad \wtd w_j := \#\{K_i\, |\, K_i = p-j\}\,.
\end{gathered}\label{cobalanced}\eeq
With these constraints we can bring the D5 branes between the NS5 branes using Hanany-Witten transitions \cite{Hanany:1996ie} such that there are equal number of D3 branes on both sides of any D5 brane. The brane configuration after the Hanany-Witten transitions can be depicted as in Fig. \ref{fig:3dHWbrane}.
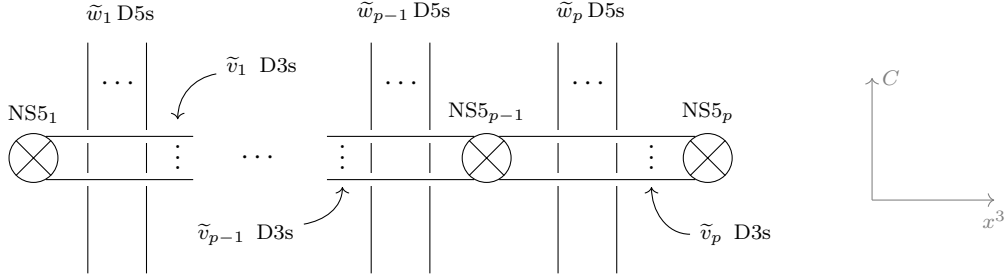
\begin{figure}[h]
\begin{center}
\begin{tikzpicture}
\tikzmath{\r=.65; \a=60; \H=1;}
\node[draw=black, circle, minimum size=\r cm] (NS51) {};
\NScross{NS51};
\node[right=2.5*\r cm of NS51, circle, minimum size=\r cm] (NS5inv1) {};
\node[right=0cm of NS5inv1] (dots) {$\cdots$};
\node[right=0cm of dots, circle, minimum size=\r cm] (NS5inv2) {};
\node[right=2.5*\r cm of NS5inv2, draw=black, circle, minimum size=\r cm] (NS5p-1) {};
\NScross{NS5p-1};
\node[right=3.5*\r cm of NS5p-1, draw=black, circle, minimum size=\r cm] (NS5p) {};
\NScross{NS5p};
\coordinate[above right=2*\r cm and .75*\r cm of NS51] (D5Ltop1);
\coordinate[below right=2*\r cm and .75*\r cm of NS51] (D5Lbot1);
\coordinate[right=1.2*\r cm of D5Ltop1] (D5Rtop1);
\coordinate[right=1.2*\r cm of D5Lbot1] (D5Rbot1);
\draw (D5Ltop1) -- (D5Lbot1);
\draw (D5Rtop1) -- (D5Rbot1);
\coordinate[above left=2*\r cm and .8*\r cm of NS5p-1] (D5Rtopp-1);
\coordinate[below left=2*\r cm and .8*\r cm of NS5p-1] (D5Rbotp-1);
\coordinate[left=1.2*\r cm of D5Rtopp-1] (D5Ltopp-1);
\coordinate[left=1.2*\r cm of D5Rbotp-1] (D5Lbotp-1);
\draw (D5Ltopp-1) -- (D5Lbotp-1);
\draw (D5Rtopp-1) -- (D5Rbotp-1);
\coordinate[above left=2*\r cm and 1.5*\r cm of NS5p] (D5Rtopp);
\coordinate[below left=2*\r cm and 1.5*\r cm of NS5p] (D5Rbotp);
\coordinate[left=1.2*\r cm of D5Rtopp] (D5Ltopp);
\coordinate[left=1.2*\r cm of D5Rbotp] (D5Lbotp);
\draw (D5Ltopp) -- (D5Lbotp);
\draw (D5Rtopp) -- (D5Rbotp);
\foreach \p in {D5Ltop1, D5Rtop1, D5Ltopp-1, D5Rtopp-1, D5Ltopp, D5Rtopp}{
\node[whitebox] () at (NS51.\a -| \p) {};
}
\foreach \p in {D5Lbot1, D5Rbot1, D5Lbotp-1, D5Rbotp-1, D5Lbotp, D5Rbotp}{
\node[whitebox] () at (NS51.-\a -| \p) {};
}
\draw (NS51.\a) -- (NS5inv1.{180-\a});
\draw (NS51.-\a) -- (NS5inv1.{180+\a});
\draw (NS5inv2.\a) -- (NS5p-1.{180-\a});
\draw (NS5inv2.-\a) -- (NS5p-1.{180+\a});
\draw (NS5p-1.\a) -- (NS5p.{180-\a});
\draw (NS5p-1.-\a) -- (NS5p.{180+\a});
\foreach \i in {1, p-1, p}{
\node[below right=.5*\r cm and .06*\r cm of D5Ltop\i] (dots\i) {$\cdots$};
}
\foreach \i in {1, p-1, p}{
\node[above=.7*\r cm of dots\i] () {\scriptsize $\wtd w_{\i}\,$D5s};
}
\node[above right=-.8*\r cm and 2.3*\r cm of NS51] (vdots1) {$\vdots$};
\node[above left=-.8*\r cm and 2.3*\r cm of NS5p-1] (vdotsp-1) {$\vdots$};
\node[above left=-.8*\r cm and .5*\r cm of NS5p] (vdotsp) {$\vdots$};
\foreach \i in {1, p-1, p}{
\node[above=0cm of NS5\i] () {\scriptsize NS5$_{\i}$};
}
\node[above right=.7*\r cm and .5*\r cm of vdots1] (v1) {\scriptsize $\wtd v_1\,$ D3s};
\node[below left=.7*\r cm and .5*\r cm of vdotsp-1] (vp-1) {\scriptsize $\wtd v_{p-1}\,$ D3s};
\node[below right=.7*\r cm and .5*\r cm of vdotsp] (vp) {\scriptsize $\wtd v_p\,$ D3s};
\draw[->] (v1) to[out=190, in=90] (vdots1.north);
\draw[->] (vp-1) to[out=10, in=270] ($(vdotsp-1.south)-(0,.1)$);
\draw[->] (vp) to[out=180, in=270] ($(vdotsp.south)-(0,.1)$);
\coordinate[below right=.5*\r cm and 3*\r cm of NS5p] (origin);
\coordinate[right=2.5*\r cm of origin] (X);
\coordinate[above=2.5*\r cm of origin] (Y);
\draw[->, black!50] (origin) -- (X);
\node[below=0cm of X] () {\color{black!50} \scriptsize $x^3$};
\draw[->, black!50] (origin) -- (Y);
\node[right=0cm of Y] () {\color{black!50} \scriptsize $C$};
\end{tikzpicture}
\end{center}
\caption{A brane configuration found after applying some Hanany-Witten transitions to the configuration in Fig. \ref{fig:LOKL}, assuming the constraints \rf{cobalanced}. The two brane configurations lead to the same IR description of the D3 brane world-volume theory in terms of the same 3d $\cN=4$ gauge theory.}
\label{fig:3dHWbrane}
\end{figure}
By standard arguments \cite{Hanany:1996ie}, the configuration of Fig. \ref{fig:3dHWbrane} leads to the 3d gauge theory defined by the quiver in Fig. \ref{fig:THquiver}.
\begin{figure}[h]
\begin{center}
\begin{tikzpicture}
\newcommand\Square[1]{+(-#1,-#1) rectangle +(#1,#1)}
\tikzmath{\d=1.8; \r=.5; \s=.5;}
\coordinate[draw=black] (n1);
\coordinate[right=\d cm of n1, draw=black] (n2);
\node[right=.75*\d cm of n2] (dot) {$\cdots$};
\coordinate[right=.75*\d cm of dot, draw=black] (nn-1);
\path[draw=black] (n1) -- (n2);
\path[draw=black] (n2) -- (dot);
\path[draw=black] (dot) -- (nn-1);
\node[below=.75*\d cm of dot] () {$\cdots$};
\coordinate[below=\d cm of n1] (f1);
\coordinate[below=\d cm of n2] (f2);
\coordinate[below=\d cm of nn-1] (fn-1);
\path[draw=black] (n1) -- (f1);
\path[draw=black] (n2) -- (f2);
\path[draw=black] (nn-1) -- (fn-1);
\filldraw[color=black, fill=white] (n1) circle (\r);
\filldraw[color=black, fill=white] (n2) circle (\r);
\filldraw[color=black, fill=white] (nn-1) circle (\r);
\filldraw[color=black, fill=white] (f1) \Square{\s cm};
\filldraw[color=black, fill=white] (f2) \Square{\s cm};
\filldraw[color=black, fill=white] (fn-1) \Square{\s cm};
\node at (n1) {$\wtd v_1$};
\node at (n2) {$\wtd v_2$};
\node at (nn-1) {$\wtd v_{p-1}$};
\node at (f1) {$\wtd w_1$};
\node at (f2) {$\wtd w_2$};
\node at (fn-1) {$\wtd w_{p-1}$};
\end{tikzpicture}
\end{center}
\caption{Quiver for the 3d $\cN=4$ theory $T^\vee_{\bm\vr}[\U(N)]_{\bm K}^{\bm L}$ with $N = \sum_{i=1}^n K_i = \sum_{j=1}^p L_j$. The ranks of the gauge and flavor groups are defined form $\bm K$ and $\bm L$ via \rf{cobalanced}. The complex FI parameter associated to the abelian factor of the $j$th gauge node is $\hbar\vr_j^\bC$ and $\bm\vr = (\vr_1^\bC, \cdots, \vr_{p-1}^\bC)$.}
\label{fig:THquiver}
\end{figure}
\noindent 
This 3d theory lives on $\bR^3_{012}$ with B-type $\Om$-deformation turned on with respect to the rotation of the $\bR^2_{12}$ plane. In other words, this theory can be viewed as an $\Om$-deformed 2d B-model on $\bR^2_{12}$ with matter fields valued in the infinite dimensional hyper-K\"ahler target space $\Map\left(\bR_0, X \right)$ where
\beq
	X :=  \bigoplus_{i=1}^{p-1}T^* \Hom(\bC^{\wtd v_i}, \bC^{\wtd w_i}) \oplus \bigoplus_{i=1}^{p-2} T^* \Hom(\bC^{\wtd v_i}, \bC^{\wtd v_{i+1}}) \label{Xdef}
\eeq
and the gauge group $\Map\left(\bR_0, G \right)$ where
\beq
	G :=  \prod_{i=1}^{p-1} \U(\wtd v_i) \,.
\eeq

The $\Om$-deformation localizes the 2d theory to a 1d TQM supported on $\bR_0$ with the target being the Higgs branch of the 3d theory, as a complex symplectic manifold. Note that the Higgs and Coulomb branches of 3d $\cN=4$ theories are naturally hyper-K\"ahler spaces, but the $\Om$-background breaks the $\SU(2)$ symmetry rotating the complex and symplectic structures down to a $\U(1)$ fixing some particular structures. The Higgs branch as a complex manifold is simply the symplectic reduction of $X$ by the complexified gauge group $G_\bC= \prod_{i=1}^{p-1} \GL_{\wtd v_i}$, subject to stability conditions:
\beq
	\cM_H(T^\vee_{\bm\vr}[\U(N)]_{\bm K}^{\bm L}) = X \sslash_{\bm\vr}G_\bC\,. \label{qHiggs}
\eeq
This TQM couples to the 4d CS theory and creates the line operator $\bL_{\bm\vr}(\bm K, \bm L)$. The Higgs branch flavor symmetry in the 3d theory is $F := \prod_{i=1}^{p-1} \SU(\wtd w_i)$ which complexifies after the $\Om$-deformation to the flavor symmetry $F_\bC = \prod_{i=1}^{p-1} \SL_{\wtd w_i}$ of the TQM. This TQM defines the line operator in the 4d CS theory whose phase space is the Higgs branch \rf{qHiggs}.

\section{Boundary Conditions and Topological Twists in 4d $\cN=4$ SYM} \label{sec:bc}

For more general linking numbers we will not have nice quiver descriptions of the 3d theories associated to line operators as we did in the special cases of \S\ref{sec:qHiggs}. Instead, we will analyze the D3 brane world-volume theory as a 4d $\cN=4$ theory on a finite interval with boundaries and domain walls. Moreover, we need to look at the $\Om$-deformation of this theory. In this section we study the supersymmetry preserved by boundaries and domain walls to identify the topological twist of the 4d theory that will be relevant in the context of $\Om$-deformation.

Type IIB string theory in flat $\bR^{1,9}$ background has 32 real supercharges parameterized by two chiral spinors of the same chirality. We can take them to be in the $\bm{16}$ representation of $\SO(1,9)$. A stack of D3 branes breaks half of the supersymmetry by relating the two chiral spinors. Let us denote the remaining independent spinor by $\vep$. The chirality constraint on $\vep$ is:
\beq
	\Ga_{0\cdots 9} \vep = \vep\,, \label{10dchiral}
\eeq
where $\Ga_I$ for $I \in \{0,\cdots, 9\}$ are the $\SO(1,9)$ gamma matrices and $\Ga_{I_1 \cdots I_k}$ refers to the completely anti-symmetrized product  $\Ga_{I_1 \cdots I_k} := \frac{1}{k!} \sum_{\si \in \mfr S_k} (-1)^{\text{sgn}(\si)} \Ga_{I_{\si(1)}} \cdots \Ga_{I_\si(k)}$ where the sum is over the symmetric group of order $k!$ and $\text{sgn}(\si)$ is the signature of the permutation $\si$. The world-volume theory of the D3 branes is the 4d SYM with $\cN=4$ supersymmetry which preserves 16 supercharges. The D3-D5, as well as the D3-NS5, boundary preserves half of these supercharges. In fact, when placed as in Table \ref{braneConfigTable}, both boundaries preserve exactly the same 8 supercharges \cite{Gaiotto:2008sa}. We briefly recap the boundary conditions on the fields of the 4d $\cN=4$ SYM at the two boundaries, following \cite{Gaiotto:2008sa}.

\subsection{$1/2$-BPS Boundary Conditions in 4d $\cN=4$ SYM}

Bosonic fields of the 4d $\cN=4$ SYM theory with a compact gauge group $G$ consist of a connection $A_0 \dd x^0 + \cdots + A_3 \dd x^3$ and six adjoint scalars $\phi_4, \cdots, \phi_9$ which correspond to fluctuations of the D3 branes in the transverse $\bR^6_{4\cdots9}$ space. The bosonic symmetry of the theory is $\SO(1,3) \times \SO(6)$ where $\SO(1,3)$ is the isometry of the space-time and $\SO(6)$ rotates the six scalars. A 1/2-BPS boundary containing the temporal direction breaks this symmetry down to
\beq
	I := \SO(1,2) \times \SO(3) \times \SO(3)
\eeq
where $\SO(1,2)$ is the isometry of the 3d boundary and the two factors of $\SO(3)$ each rotates three of the six scalars.\footnote{The fermions transform under spin representations of $I$.} Looking at the brane configuration of Table \ref{braneConfigTable} it is apparent that the D3-D5-NS5 configuration preserves the isometry of $\bR^{1,2}_{012}$, an $\SO(3)$ R-symmetry rotating $\bR^3_{456}$, and another $\SO(3)$ R-symmetry rotating $\bR^3_{789}$. According to this symmetry breaking, we rename the six scalars as:
\beq
	\vec X = (X_1, X_2, X_3) := (\phi_7, \phi_8, \phi_9)\,, \qquad \vec Y = (Y_1, Y_2, Y_3) := (\phi_4, \phi_5, \phi_6)\,, \label{XY}
\eeq
and label the respective R-symmetries by $\SO(3)_X$ and $\SO(3)_Y$. Under the symmetry $\SO(1,3) \times \SO(6)$ of the 4d theory without boundaries, the spinors parameterizing the supersymmetry transform as:
\beq
	S_\ell \oplus S_r := (\bm 2, \bm 1, \ov{\bm 4}) \oplus (\bm 1, \bm 2, \bm 4)\,. \label{4dN=4SUSY}
\eeq
Here $(\bm 2, \bm 1)$ and $(\bm 1, \bm 2)$ refer to the left and the right handed representations of $\SO(1,3)$ whereas $\bm 4$ and $\ov{\bm 4}$ refer to the two four dimensional spinor representations of $\SO(6)$. Under the $\SO(1,2) \subseteq I$ isometry of the boundary both left and right handed spinors of $\SO(1,3)$ transform as $\bm 2$. Under the remaining R-symmetry $\SO(3) \times \SO(3) \subseteq I$, both $\bm 4$ and $\ov{\bm 4}$ transform as $(\bm 2, \bm 2)$. Therefore, under the symmetry $I$ preserved by the boundaries the two representations $S_\ell$ and $S_r$ become isomorphic:
\beq
	S_\ell = S_r = V_8 := (\bm 2, \bm 2, \bm 2)\,.
\eeq
As a representation of $I$ we thus have $S_\ell \oplus S_r = V_8 \otimes V_2$ where $V_2 = \bR^2$. A key result from \cite{Gaiotto:2008sa} is that for any choice of $\vep_0 \in V_2$, there is a $1/2$-BPS boundary condition of 4d $\cN=4$ SYM preserving all the supercharges parameterized by spinors of the form $v \otimes \vep_0 \in V_8 \otimes V_2$.

There is an $\SL(2,\bR)$ group acting on $V_2$ generated by the even elements of the 10d Clifford algebra that commute with $I$, namely:
\beq
	B_0 = \Ga_{456789}\,, \qquad B_1 = \Ga_{3456}\,, \qquad B_2 = \Ga_{3789}\,. \label{B012}
\eeq
They satisfy $-B_0^2 = B_1^2 = B_2^2 = 1$, $B_0 B_1 = - B_1 B_0 = B_2$, etc. On $V_2$ these can be represented as:
\beq
	B_0 = \begin{pmatrix} 0 & 1 \\ -1 & 0 \end{pmatrix}\,, \qquad B_1 = \begin{pmatrix} 0 & 1 \\ 1 & 0 \end{pmatrix}\,, \qquad B_2 = \begin{pmatrix} 1 & 0 \\ 0 & -1 \end{pmatrix}\,.
\eeq

In order for a symmetry to be preserved by a boundary, the normal component of the associated current must vanish at that boundary. We shall impose this constraint on the supersymmetry generated by a spinor $u \otimes \vep_0$. The 4d $\cN=4$ vector multiplet contains fermions in the same spinor representation as the spinors parameterizing the supersymmetry. Therefore, at the boundary the fermion is valued in $V_8 \otimes V_2$, just like the supersymmetry generators. Assuming that at the boundary the fermion takes the form $v \otimes \vartheta$, \cite{Gaiotto:2008sa} spells out the boundary conditions for the vector multiplet fields needed to preserve the supersymmetry generated by spinors of the form $u \otimes \vep_0$:\footnote{Note that compared to the equations presented in \S2.1 of \cite{Gaiotto:2008sa} our equations have $B_1$ and $B_2$ swapped. The reason is that, we use the same definitions for $B_1$ and $B_2$ as in \cite{Gaiotto:2008sa} and just like \cite{Gaiotto:2008sa} we use $\vec X$ and $\vec Y$ to refer to coordinates parallel and normal to D5 branes, but our D5 branes wrap $\bR^3_{789}$ whereas the D5 branes in \cite{Gaiotto:2008sa} wrap $\bR^3_{456}$.}
\begin{subequations}\begin{align}
	\ov\vep_0 (F_{\mu\nu} - \ep_{\mu\nu\la} F^{3\la} B_0) \vartheta =&\; 0 \,, \label{bcA}\\
	D_\mu X_a (\ov\vep_0 B_1 \vartheta) =&\; 0 \,, \label{bcDX}\\
	D_\mu Y_a (\ov\vep_0 B_2 \vartheta) =&\; 0 \,, \label{bcDY}\\
	[X_a, Y_b] (\ov\vep_0 B_0 \vartheta) =&\; 0 \,, \label{bcXY}\\
	\ov\vep_0 ([X_b, X_c] - \ep_{abc} D_3 X_a B_2) \vartheta =&\; 0 \,, \label{bcXX}\\
	\ov\vep_0 ([Y_a, Y_b] - \ep_{abc} D_3 Y_c B_1) \vartheta =&\; 0 \,. \label{bcYY}
\end{align}\label{bc}\end{subequations}
Here $\mu, \nu, \cdots \in \{0,1,2\}$ and $a,b,\cdots \in \{1,2,3\}$. $\ov\vep_0$ is the row vector $\begin{pmatrix} t & -s \end{pmatrix}$ for any column vector $\vep_0 = \begin{pmatrix} s \\ t \end{pmatrix}$. $F$ is the curvature of $A$. Let us now focus separately on the D5 and the NS5 boundaries.

\subsubsection{D5 Boundary} \label{sec:D5boundary}
At the D3-D5 boundary the $\vec Y$ fields will satisfy Dirichlet boundary conditions since the D5 branes have fixed locations in the $\bR^3_{456}$ space. For now, let us set the constant value of the $\vec Y$ fields to zero:
\beq
	\vec Y|_\text{D5} = 0\,. \label{YD5}
\eeq
The constant value depends on the locations of the D5 branes in the $\bR^3_{456}$ directions. Here we are not being specific about the number of D5 branes and we assume all the D5 branes to be located at the origin. A few comments about these locations are made in Remark \ref{rmk:masses}.

\rf{YD5} satisfies \rf{bcDY} and \rf{bcXY}. The remaining equations from \rf{bc} are satisfied if we impose the following constraints on $A_\mu$ and $X$ at the D5 boundary:
\begin{subequations}\begin{align}
	\ep_{\la\mu\nu} F^{3\la} + \ga F_{\mu\nu} =&\; 0\,, \\
	D_3 X_a + \frac{u}{2} \ep_{abc} [X_b, X_c] =&\; 0\,,
\end{align}\label{bcAXD5}\end{subequations}
for some numbers $\ga$ and $u$ along with the following constraints on the vectors $\vep_0$ and $\vartheta$:
\begin{subequations}\begin{align}
	\ov\vep_0 (1 + \ga B_0) \vartheta =&\; 0 \,, \\
	\ov\vep_0 B_1 \vartheta =&\; 0 \,, \\
	\ov\vep_0 (1 + u B_2) \vartheta =&\; 0\,.
\end{align}\label{bcevD5}\end{subequations}
By an overall scaling, which leaves \rf{bcevD5} invariant, we can choose $\vep_0$ to be of the form:
\beq
	\vep_0 = \begin{pmatrix} a \\ 1 \end{pmatrix}\,, \qquad a \in \bR \cup \{\infty\}\,. \label{vep0a}
\eeq
The equations \rf{bcevD5} now completely fix $\ga$, $u$, and $\vartheta$ (up to scaling) in terms of $a$:
\beq
	\ga = \frac{a^2-1}{2a}\,, \qquad u = \frac{a^2-1}{a^2+1}\,, \qquad \vartheta = \begin{pmatrix} 1 \\ a \end{pmatrix}\,. \label{gauD5}
\eeq

Thus, for any value of $a$ there is a set of D5-type $1/2$-BPS boundary conditions on the bosonic fields given by \rf{YD5} and \rf{bcAXD5} where $\ga$ and $u$ are determined by \rf{gauD5}. The standard Dirichlet boundary condition on the gauge field sets the components of the curvature parallel to the boundary to zero at the boundary. We achieve this by choosing:
\beq
	a = \infty\,. \label{ainf}
\eeq
In this limit $\ga \to \infty$ and $u \to 1$, turning \rf{bcAXD5} into the ordinary Dirichlet condition for the gauge field and Nahm's equation for $\vec X$:
\begin{subequations}\begin{align}
	F_{\mu\nu}|_\text{D5} =&\; 0\,, \\
	\left.\left(D_3 X_a + \frac{1}{2} \ep_{abc} [X_b, X_c]\right)\right|_\text{D5} =&\; 0 \label{Nahm}\,.
\end{align}\label{bcAXD5inf}\end{subequations}

\subsubsection{NS5 Boundary} \label{sec:NS5boundary}
Similar to \rf{YD5}, at the NS5 boundary the $\vec X$ fields must satisfy Dirichlet boundary condition:
\beq
	\vec X|_\text{NS5} = 0\,. \label{XNS5}
\eeq
The constant value $0$ corresponds to locations of NS5 branes in $\bR^3_{789}$ directions. For now we assume the NS5 branes to be at the origin, but later we shall separate them in the $\bR^3_{789}$ direction shifting the above boundary condition. \rf{XNS5} satisfies \rf{bcDX} and \rf{bcXY}. The remaining equations are satisfied if we impose at the NS5 boundary:
\begin{subequations}\begin{align}
	\ep_{\la\mu\nu} F^{3\la} + \ga' F_{\mu\nu} =&\; 0\,, \\
	D_3 Y_a + \frac{u'}{2} \ep_{abc} [Y_b, Y_c] =&\; 0\,,
\end{align}\label{bcAYNS5}\end{subequations}
for some numbers $\ga'$ and $u'$ along with:
\begin{subequations}\begin{align}
	\ov\vep_0 (1 + \ga' B_0) \vartheta' =&\; 0 \,, \\
	\ov\vep_0 B_2 \vartheta' =&\; 0 \,, \\
	\ov\vep_0 (1 + u' B_1) \vartheta' =&\; 0\,.
\end{align}\label{bcevNS5}\end{subequations}
By assuming the same form for $\vep_0$ as in \rf{vep0a} we find the following expressions for $\ga'$, $u'$, and $\vartheta'$ from \rf{bcevNS5}:
\beq
	\ga' = \frac{2a}{1-a^2}\,, \qquad u' = \frac{2a}{1+a^2}\,, \qquad \vartheta' = \begin{pmatrix} -a \\ 1 \end{pmatrix}\,. \label{gauNS5}
\eeq

In the D5 case, the choice $a=\infty$ leads to the standard Dirichlet equations \rf{bcAXD5inf} for the bosonic fields. This special value of $a$ leads to $\ga', u' \to 0$ and thus the following boundary conditions for the gauge field and $\vec Y$ at the NS5 boundary preserve the same supersymmetry as \rf{bcAXD5inf}:
\begin{subequations}\begin{align}
	F_{3\la}|_\text{NS5} =&\; 0\,, \\
	D_3 Y_a|_\text{NS5} =&\; 0\,. \label{bcYNS5infY}
\end{align}\label{bcAYNS5inf}\end{subequations}
We recognize them as the standard Neumann boundary conditions for the gauge field and the scalar fields.

\subsection{Topological Twists Preserved by $1/2$-BPS Boundary Conditions}
In order to get 4d CS theory from the D5 branes of Table \ref{braneConfigTable}, we first need to perform a supersymmetric twist of the type IIB string theory background which induces the holomorphic-topological twist of the 6d $\cN=(1,1)$ SYM describing the D5 brane dynamics \cite[\S6.1]{Costello:2018txb}. What is the effect of this twisted string theory background on the 4d $\cN=4$ theory describing the D3 branes?  In \cite{Kapustin:2006pk}, Kapustin and Witten defined a family of topological twists of 4d $\cN=4$ SYM parameterized by a complex number $t \in \bC\bP^1$. The aforementioned twisted string theory induces precisely the Kapustin-Witten (KW) twist corresponding to $t = \ii$ \cite[\S6.2]{Ishtiaque:2018str}. However, the analysis of \cite{Ishtiaque:2018str} only looked at the bulk 4d theory and not at the boundary conditions. So, for consistency, we must check that this particular twist is preserved by the boundary conditions \rf{YD5}, \rf{bcAXD5inf}, \rf{XNS5}, and \rf{bcAYNS5inf} given by the choice $a = \infty$. The question of KW twists preserved by 1/2-BPS boundary conditions was addressed in \cite{Witten:2011zz}. We recall the setup to establish notation.

The topological twist is most naturally performed in Euclidean signature. We therefore Wick rotate $\bR^{1,9}$ to $\bR^{10}$. Essentially the only important change from our earlier discussion is the chirality constraint on an $\SO(10)$ spinor, which changes from \rf{10dchiral} to:
\beq
	\Ga_{0\cdots9} \vep = -\ii \vep\,, \label{10dchiralE}
\eeq
where $\Ga_I$ now refers to generators of the Clifford algebra associated to $\SO(10)$.

We start by viewing $\bR^8_{0\cdots7}$ as the cotangent bundle of $\bR^4_{0123}$. The choice of the subspace $\bR^4_{4567}$ of $\bR^4_{4\cdots9}$ as the cotangent directions explicitly breaks the $\SO(6)$ R-symmetry of 4d $\cN=4$ supersymmetry down to $\SO(4) \times \SO(2)$ where $\SO(4)$ and $\SO(2)$ acts on $\bR^4_{4567}$ and $\bR^2_{89}$ respectively. Let us denote the rotation groups of $\bR^4_{0123}$ and $\bR^4_{4567}$ by $\SO(4)$ and $\SO(4)_R$ respectively. We then pick an isomorphism $\varkappa:SO(4) \xrightarrow{\sim} \SO(4)_R$ and declare the image of the diagonal embedding $(1 \times \varkappa):SO(4) \to \SO(4) \times \SO(4)_R$ to be the new space-time isometry. We denote this new isometry by $\SO(4)'$. Up to an inconsequential relabeling of coordinates, we can concretely write down the action of $\varkappa$ on the generators of $\SO(4)$ in terms of the following even elements of the 10d Clifford algebra:
\beq
	\varkappa_*: \mfr{so}(4) \xrightarrow{\sim} \mfr{so}(4)_R\,, \qquad \varkappa_* : \Ga_{\mu\nu} \mapsto \Ga_{\mu+4, \nu+4}\,, \qquad \mu, \nu \in \{0,1,2,3\}\,.
\eeq
This redefinition of space-time isometry reduces the bosonic symmetry from $\SO(4) \times \SO(6)$ to
\beq
	J := \SO(4)' \times \SO(2)\,.
\eeq
The spin representation $\bm 4$ of $\SO(6)_R$ transforms as $(\bm 2, \bm 1)^1 \oplus (\bm 1, \bm 2)^{-1}$ under $\SO(4)_R \times \SO(2)$.\footnote{The superscript refers to the $\SO(2)$ charge.} Similarly, $\ov{\bm 4}$ transforms as the complex conjugate $(\bm 2, \bm 1)^{-1} \oplus (\bm 1, \bm 2)^1$. Therefore, taking into account the identification of $\SO(4)$ isometry and $\SO(4)_R$, the spin representations $S_\ell$ and $S_r$ from \rf{4dN=4SUSY} transform under $J$ as:
\beq
	S_\ell = (\bm 1, \bm 1)^{-1} \oplus (\bm 3, \bm 1)^{-1} \oplus (\bm 2, \bm 2)^1\,, \qquad S_r = (\bm 1, \bm 1)^{-1} \oplus (\bm 1, \bm 3)^{-1} \oplus (\bm 2, \bm 2)^1\,. \label{twisted4dSUSY}
\eeq
We see that there are exactly two supercharges that transform as scalars under the twisted isometry $\SO(4)'$. One of these two supercharges was left handed in the untwisted theory and the other right handed. Let us denote the spinors parameterizing these supercharges by $\vep_\ell$ and $\vep_r$ respectively. Define an arbitrary complex linear combination of $\vep_\ell$ and $\vep_r$:
\beq
	\wtd\vep := u \vep_\ell + v \vep_r\,.
\eeq
The scalar supercharge parameterized by $\wtd\vep$ squares to zero modulo gauge transformation. Twisting now simply means passing to the cohomology of this scalar nilpotent supercharge. The choice of $\wtd\vep$ is defined up to an irrelevant scaling. Therefore, the true parameter for the choice of a scalar supercharge is the ratio
\beq
	t := \frac{v}{u} \in \bC\bP^1\,. \label{tdef}
\eeq
Here we allow the value $\infty$ for the possibility that $\wtd\vep = \vep_r$. This $t$ is the complex parameter for the family of KW twists.

The question we want answered is: does the choice $a = \infty$ (from \rf{vep0a}) preserve the KW twist $t = \ii$? Witten showed in \cite{Witten:2011zz} that $t$ and $a$ are nicely related:\footnote{This relation is a consequence of comparing the constraints on the supersymmetries preserved by the boundary conditions and the twisting procedure. The constraints that $\vep_\ell$ and $\vep_r$ are scalars under $\SO(4)'$ can be written as $(1+\varkappa_*)(\Ga_{\mu\nu})(\vep_\bullet) = (\Ga_{\mu\nu} + \Ga_{\mu+4, \nu+4}) \vep_\bullet = 0$ for $\bullet \in \{\ell, r\}$. Combining these with the 10d chirality constraint \rf{10dchiralE} and the 4d chirality constraints $\Ga_{0123} \vep_\ell = -\vep_\ell$, $\Ga_{0123} \vep_r = \vep_r$ we can get:
\beq
	\left(1 + \ii \frac{1-t^2}{1+t^2} B_0 + \frac{2t}{1+t^2} B_1 \right) (\vep_\ell + t \vep_r) = 0\,.
\eeq
It can be shown that $\vep_0$, as defined in \rf{vep0a}, satisfies the same equation iff $a$ and $t$ are related by \rf{at}. The overall sign of $a$ is different in \rf{at} relative to \cite{Witten:2011zz} because our definition of $a$ in \rf{gauD5} differs by a sign.} 
\beq
	a = \ii \frac{t + \ii}{t - \ii}\,. \label{at}
\eeq
In particular, we find that the choice $a=\infty$, which gives us the standard Dirichlet and Neuman boundary conditions, preserves precisely the topological twist labeled by $t=\ii$:
\beq
	a = \infty \quad \Leftrightarrow \quad t = \ii\,.
\eeq
This twist is of the B-type \cite{Kapustin:2006pk}, i.e., we can view the twisted 4d theory as a 2d B-model on $\bR^2_{12}$ with fields valued in some appropriate infinite dimensional space. This of course is the twist which, once $\Om$-deformation is turned on, will reduce to 2d BF theory on $\bR^2_{03}$ -- as we show below.


\section{$\Om$-deformed Kapustin-Witten (KW) Theory and BF Phase Spaces} \label{sec:phase}
Now that we know precisely which twist to use, we can look at the specific twisted theory and perform $\Om$-deformation \cite{Nekrasov:2002qd, Nekrasov:2003rj, Nekrasov:2010ka}. We shall find that the KW theory in a B-type $\Om$-background localizes to 2d BF theory. While this is certainly known in literature \cite{Ishtiaque:2018str, Dedushenko:2020vgd, Ishtiaque:2021jan}, some details have yet to appear, so for completion we go through the intermediate steps in this section. First we look at the  bulk theory, disregarding the boundaries, and then we look at the boundary conditions in terms of the fields of the twisted theory.

\subsection{KW Theory as a 2d B-model} \label{sec:KWas2d}
In the untwisted 4d $\cN=4$ SYM, the six adjoint scalars $\phi_4, \cdots, \phi_9$ transform as the vector of $\SO(6)$. Under the twisted space-time symmetry $\SO(4)'$ the four scalars $\phi_4, \cdots, \phi_7$ transform as a 1-form:
\beq
	\phi := \phi_4 \dd x^0 + \phi_5 \dd x^1 + \phi_6 \dd x^2 + \phi_7 \dd x^3\,.
\eeq
This 1-form can be combined with the connection to form a complexified connection:
\beq
	\cA_\mu := A_\mu + \ii \phi_{\mu+4}\,, \qquad \ov\cA_\mu := A_\mu - \ii \phi_{\mu+4} \qquad \mu \in \{0, 1,2,3\}\,. \label{complexA}
\eeq
The remaining two scalars $\phi_8$ and $\phi_9$ transform under the vector representation of $\SO(2)$, we repackage them into a complex scalar and its conjugate with definite $\SO(2)$ charges:
\beq
	\si := \phi_8 - \ii \phi_9\,, \qquad \ov\si := \phi_8 + \ii \phi_9\,. \label{B}
\eeq

Let us denote by $Q_\ell$ and $Q_r$ the supercharges parameterized by $\vep_\ell$ and $\vep_r$. The supercharge parameterized by $u \vep_\ell + v \vep_r$ is then:
\beq
	Q(t) = u Q_\ell + v Q_r\,.
\eeq
As mentioned earlier, the supersymmetry generated by $Q(t)$ depends only on the ratio $t=v/u$. We denote the supersymmetry variation of a field $\Phi$ generated by $Q(\ii)$ as:
\beq
	\de_T \Phi := [Q(\ii), \Phi\}\,.
\eeq
The bracket is anti-symmetric/symmetric if $\Phi$ is bosonic/fermionic.

The vector multiplet in the KW theory contains the following fields:
\beq
\begin{array}{r|ccc}
& \Om^0(\bR^4, \mfr g_\bC) & \Om^1(\bR^4, \mfr g_\bC) & \Om^2(\bR^4, \mfr g_\bC) \\
\hline
\text{Bosonic} & P, \si, \ov\si & \cA, \ov\cA & \\
\text{Fermionic} & \eta, \ov\eta & \psi, \ov\psi & \chi
\end{array}
\eeq
where $\bR^4 = \bR^4_{0123}$ is the space-time of the 4d theory. With the addition of the auxiliary field $P$ and for $t=\ii$ the supersymmetry transformation $\de_T$ is nilpotent off-shell:
\beq
	\de_T^2 = 0\,. \label{nilpotentQ}
\eeq
This auxiliary field can be integrated out by adding an irrelevant $\de_T$-exact term to the action (see \S3.4 of \cite{Kapustin:2006pk}). The fermionic fields all come from the fermions of the 4d $\cN=4$ vector multiplet. The untwisted vector multiplet contains fermions in the representation $S_\ell \oplus S_r$ which is the same as in \rf{4dN=4SUSY}. In the twisted theory this representation breaks down as representation of the twisted space-time symmetry $\SO(4)'$ as in \rf{twisted4dSUSY}. We see  that there are two scalars, a self-dual 2-form, an anti self-dual 2-form, and two 1-forms. The self-dual and anti self-dual forms combine to make up a complete 2-form. We see these fields in the above field content. Variations of the fields under the twisting supercharge are:
\beq\begin{alignedat}{2}
	\de_T \cA =&\; 0\,,      \qquad&      \de_T \ov\psi =&\; \cD \si\,, \\
	\de_T \ov\cA =&\; \psi \,,  \qquad&   \de_T \psi =&\; 0\,, \\
	\de_T \si =&\; 0\,, \qquad&     \de_T \eta =&\; P\,, \\
	\de_T \ov\si =&\; \ov\eta\,, \qquad&       \de_T \ov\eta =&\; 0 \\
	\de_T P =&\; 0 \,, \qquad&      \de_T \chi =&\; \cF\,.
\end{alignedat}\label{QKW}\eeq
Here the covariant derivative is defined with respect to the complex connection:
\beq
	\cD = \dd + \cA \wedge\,,
\eeq
and $\cF = \dd \cA + \frac{1}{2} \cA \wedge \cA$ is the curvature of $\cA$.\footnote{We need to define the wedge product involving Lie algebra valued forms as our convention differs from many standard literature. Let $\mfr g$ be any Lie algebra and $M$ a $\mfr g$-module with the Lie algebra homomorphism $\rho:\mfr g \to \End(M)$. Let $a = a_{I_1 \cdots I_p} \dd x^{I_1} \wedge \cdots \wedge \dd x^{I_p}$ be a $\mfr g$-valued $p$-form and $v = v_{J_1 \cdots J_q} \dd x^{J_1} \wedge \cdots \wedge \dd x^{J_q}$ an $M$-valued $q$-form. Then we define the wedge product between them as:
\beq
	a \wedge v := \rho(a_{I_1 \cdots I_p}) (v_{J_1 \cdots J_q}) \dd x^{I_1} \wedge \cdots \wedge \dd x^{I_p} \wedge \dd x^{J_1} \wedge \cdots \wedge \dd x^{J_q}\,.
\eeq
In particular, for a Lie algebra valued connection $\cA_\mu \dd x^\mu$ -- thinking of $\mfr g$ as the adjoint module of $\mfr g$ -- we get from the above definition $\cA \wedge \cA = [\cA_\mu, \cA_\nu] \dd x^\mu \wedge \dd x^\nu$. This differs from many literature where $\cA \wedge \cA$ only involves the associative product in the universal enveloping algebra of the Lie algebra which then needs to be anti-symmetrized to get the Lie bracket. This is the reason for the factor of half in our expression for the curvature which may seem unfamiliar to some.} $\de_T$ clearly satisfies \rf{nilpotentQ}.

We shall write down the bosonic part of the action of the KW theory (in \rf{SKW}) after we introduce a 2d formulation of the theory.

\subsubsection{2d Multiplets and Supersymmetry} 
The 4d KW theory on $\bR^4_{0123}$ can be viewed as a 2d B-model on $\bR^2_{12}$ with an infinite dimensional gauge group and fields valued in certain infinite dimensional target spaces. More specifically, the 2d vector multiplet contains the connections
\beq
	\cA^{(2)} := \cA_i \dd x^i\,, \qquad \ov\cA^{(2)} := \ov\cA_i \dd x^i.
\eeq
In this section the indices $i,j,\cdots$ run over $1$ and $2$. Viewed as 1-forms on $\bR^2_{12}$, they are valued in the infinite dimensional Lie algebra:
\beq
	\mfr G_\bC := \Om^0(\bR^2_{03}, \mfr g_\bC)
\eeq
The Lie bracket of $\mfr G_\bC$ consists of point-wise multiplication on $\bR^2_{03}$ and the ordinary Lie bracket on $\mfr g_\bC$. The full 2d B-model vector multiplet contains the following fields:
\beq
\begin{array}{r|ccc}
& \Om^0(\bR^2_{12}, \mfr G_\bC) & \Om^1(\bR^2_{12}, \mfr G_\bC) & \Om^2(\bR^2_{12}, \mfr G_\bC) \\
\hline
\text{Bosonic} & P & \cA^{(2)}, \ov\cA^{(2)} & \\
\text{Fermionic} & \eta & \psi^{(2)} := \psi_i \dd x^i & \chi^{(2)} := \frac{1}{2} \chi_{ij} \dd x^i \wedge \dd x^j
\end{array}
\eeq
B-model supersymmetry transformations of these fields \cite{Witten:1991zz} coincide with the KW supersymmetry \rf{QKW} restricted to these fields:
\beq\begin{aligned}
	\de_T \cA^{(2)} =&\; 0\,, \qquad& \de_T \ov\cA^{(2)} =&\; \psi^{(2)}\,, \\
	\de_T \psi^{(2)} =&\; 0\,, \qquad& \de_T \eta = P\,, \\
	\de_T P =&\; 0\,, \qquad& \de_T \chi^{(2)} =&\; \cF^{(2)}\,.
\end{aligned}\label{QBvect}\eeq
Here $\cF^{(2)} = \dd^{(2)} \cA^{(2)} + \frac{1}{2} \cA^{(2)} \wedge \cA^{(2)}$ is the curvature of $\cA^{(2)}$.

The remaining two components of the 4d connections, namely:
\beq
	\cA^{(c)} := \cA^{(c)}_m \dd x^m\,, \qquad \ov\cA^{(c)} := \ov\cA^{(c)}_m \dd x^m\,, \label{Achiral}
\eeq
belong to 2d chiral multiplets. Here $m,n,\cdots$ run over $0$ and $3$. The fields $\cA^{(c)}$ and $\ov\cA^{(c)}$ are scalars on $\bR^2_{12}$ -- the world-volume of the B-model -- and they are $\mfr g_\bC$ valued 1-forms on $\bR^2_{03}$. So the chiral multiplet is valued in the infinite dimensional space:
\beq
	\mfr X := \Om^1(\bR^2_{03}, \mfr g_\bC)\,. \label{chiralX}
\eeq
The full content of this chiral multiplet is as follows:
\beq
\begin{array}{r|ccc}
& \Om^0(\bR^2_{12}, \mfr X) & \Om^1(\bR^2_{12}, \mfr X) & \Om^2(\bR^2_{12}, \mfr X) \\
\hline
\text{Bosonic} & \cA^{(c)}, \ov \cA^{(c)} &  & \sf F, \ov \sfF \\
\text{Fermionic} & \psi^{(c)} := \psi_m \dd x^m & \chi^{(c)} := \chi_{im} \dd x^i \wedge \dd x^m & \ov\sfM
\end{array}\label{chiralfields1}
\eeq
We have introduced the auxiliary fields $\sfF, \ov\sfF, \ov\sfM \in \Om^2(\bR^2_{12}, \mfr X) \subset \Om^3(\bR^4_{0123}, \mfr g_\bC)$. To compare the B-model with the KW theory we will need to integrate out the bosonic auxiliary fields $\sfF$ and $\ov\sfF$. After integrating out these two fields we will find that 
\beq
	\ov\sfM_{12} = -\psi_m \dd x^m\,. \label{Mpsi}
\eeq
B-model supersymmetry transformations of these chiral multiplet fields are:
\beq\begin{aligned}
	\de_T \cA^{(c)} =&\; 0\,, \qquad& \de_T \ov\cA^{(c)} =&\; \psi^{(c)}\,, \\
	\de_T \chi^{(c)} =&\; \cF_{im} \dd x^i \wedge \dd x^m\,, \qquad & \de_T \psi^{(c)} =&\; 0\,, \\
	\de_T \sfF =&\; \cD^{(2)} \chi^{(c)} + \cD^{(c)} \chi^{(2)}  \qquad& \de_T \ov\sfM =&\; \ov\sfF\,, \\
	& \qquad& \de_T \ov\sfF =&\; 0\,.
\end{aligned}\label{QBch1}\eeq
The differentials are defined as:
\beq
	\cD^{(2)} := \dd x^i \pa_{x^i} + \cA^{(2)} \wedge\,, \qquad \cD^{(c)} := \dd x^m \pa_{x^m} + \cA^{(c)} \wedge\,. \label{D2Dc}
\eeq
These transformations coincide with the restriction of the KW supersymmetry to the above fields, excluding the newly introduced auxiliary fields. 

The remaining bosonic fields of the KW theory, namely $\si$ and $\ov\si$ fit into another $\mfr G_\bC$ valued chiral multiplet with the following field content:
\beq
\begin{array}{r|ccc}
& \Om^0(\bR^2_{12}, \mfr G_\bC) & \Om^1(\bR^2_{12}, \mfr G_\bC) & \Om^2(\bR^2_{12}, \mfr G_\bC) \\
\hline
\text{Bosonic} & \si, \ov\si &  & \sfG, \ov{\sfG} \\
\text{Fermionic} & \ov\eta & \ov\psi^{(c)} := \ov\psi_m \dd x^m & \ov\sfN
\end{array}\label{chiralfields2}
\eeq
Similar to the previous chiral multiplet, we have introduced auxiliary fields $\sfG, \ov{\sfG}$ and $\ov\sfN$. We integrate out $\sfG$ and $\ov\sfG$, after which we are lead to the identification
\beq
	\ov\sfN_{12} = \chi_{03} \dd x^0 \wedge \dd x^3\,. \label{Nchi}
\eeq
The supersymmetry transformations of these fields are:
\beq\begin{aligned}
	\de_T \si =&\; 0\,, \qquad& \de_T \ov\si =&\; \ov\eta\,, \\
	\de_T \ov\psi^{(c)} =&\; \cD^{(2)} \si\,, \qquad& \de_T \ov\eta =&\; 0\,, \\
	\de_T \sfG =&\; \cD^{(2)} \ov\psi^{(c)} + \si \wedge \chi^{(2)} \,, \qquad & \de_T \ov\sfN =&\; \ov{\sfG}\,, \\
	& \qquad&\; \de_T \ov{\sfG} =&\; 0\,.
\end{aligned}
\label{QBch0}\eeq
The transformations \rf{QBvect}, \rf{QBch1}, and \rf{QBch0} satisfy the topological supersymmetry algebra:
\beq
	\de_T^2 = 0\,, \label{QB2=0}
\eeq
the same as \rf{nilpotentQ}.

Taking into account the on-shell identifications \rf{Mpsi} and \rf{Nchi}, which we will justify in \rf{onshelldeM} and \rf{onshelldeN} respectively, all the fields of the 4d KW theory have been encoded into various 2d B-model multiplets. Let us now look at the actions for these multiplets.

\subsubsection{Actions}
We define the following inner product on the Lie algebra $\mfr G_\bC$: for $a,b \in \mfr G_\bC = \Om^0(\bR^2_{03}, \mfr g_\bC)$
\beq
	\mfr{tr}(ab) = \int_{\bR^2_{03}} \star_{03} \tr(a b) = \int_{\bR^2_{03}} \dd x^0 \wedge \dd x^3\, \tr(ab)\,, \label{gothtr}
\eeq
here $\star_{03}$ is the Hodge star operator on $\bR^2_{03}$. The 2d vector multiplet action can be written as:
\begin{align}
	S_V =&\; \int_{\bR^2_{12}} \de_T \mfr{tr} \left( (-\star_{12} P + 2 \ii D^{(2)} \star_{12} \phi^{(2)}) \eta - \chi^{(2)} \star_{12} \ov\cF^{(2)} \right) \nn\\
	=&\; \int_{\bR^4_{0123}} \dd^4x \, \tr \left( (-P + 2 \ii D^i \phi_i) P - \frac{1}{2} \cF_{ij} \ov\cF^{ij} \right) + \cdots \label{SV}
\end{align}
The ellipses hide terms involving fermions. We shall often ignore such terms for the sake of simplicity. Note that the above action, when viewed as a 4d action, contains kinetic terms for only $\cA_1$ and $\cA_2$. It is missing the kinetic terms for $\cA_0$ and $\cA_3$. These components are part of a 2d chiral multiplet \rf{chiralfields1} and therefore their kinetic term will come from 2d chiral multiplet actions, as we shall now see.

In addition to the inner product \rf{gothtr} on $\mfr G_\bC$, we also need an inner product on $\mfr X$ \rf{chiralX} to write down an action for the chiral multiplets. For $a, b \in \Om^p(\bR^2_{03}, \mfr g_\bC)$ define the inner product:
\beq
	\lan a, b \ran := \int_{\bR^2_{03}} \dd^2x\, \tr(a \wedge \star_{03} b)\,.
\eeq
We can then write down the action for the first chiral multiplet \rf{chiralfields1} as follows:
\begin{align}
	S_{C_1} =&\; \int_{\bR^2_{12}} \de_T \left( \lan -\ov\cF_{im} \dd x^i \wedge \dd x^m, \star_{12} \chi_{im} \dd x^i \wedge \dd x^m \ran + \lan 2\ii D^{(c)} \star_{03} \phi^{(c)}, \star_{12} \eta \ran - \lan \ov \sfM, \star_{12} \sfF \ran \right) \nn\\
	=&\; \int_{\bR^4_{0123}} \dd^4x\, \tr\left( -\ov\cF^{im} \cF_{im} + 2\ii D_m \phi^m P - \tensor{\ov \sfF}{_{m12}} \tensor{\sfF}{^{m12}} \right) + \cdots \label{SC1}
\end{align}
and that of the second chiral multiplet \rf{chiralfields2} as:
\begin{align}
	S_{C_2} =&\; \int_{\bR_{12}^2} \de_T \left(- \lan \ov\cD^{(2)} \ov \si , \star_{12} \ov\psi^{(c)} \ran + \lan [\si, \ov\si], \star_{12} \eta \ran - \lan \ov\sfN, \star_{12} \sfG \ran \right) \nn\\
	=&\; \int_{\bR_{0123}^4} \dd^4x\, \tr \left( - \ov\cD^i \ov \si \cD_i \si + [\si, \ov\si] P - \ov\sfG_{12} \sfG^{12} \right) + \cdots \label{SC2}
\end{align}
As before, we have omitted terms containing fermions in the final expressions.

To complete the 2d theory we also need a superpotential. We shall find that we recover the KW theory precisely with the following superpotential:
\beq
	W(\si, \cA^{(c)}) := \lan \si, \star_{03} \cF^{(c)} \ran = \int_{\bR^2_{03}} \tr \left(\si \left(\dd^{(c)} \cA^{(c)} + \frac{1}{2} \cA^{(c)} \wedge \cA^{(c)} \right)\right) \,. \label{WBF}
\eeq
This leads to the superpotential action: 
\begin{align}
	S_W =&\; \int_{\bR^2_{12}}  \lan \sfF, \frac{\de W}{\de \cA^{(c)}} \ran - \lan \ov \sfF, \frac{\de \ov W}{\de \ov\cA^{(c)}} \ran + \lan \sfG, \frac{\de W}{\de \si} \ran - \lan \ov\sfG , \frac{\de \ov W}{\de \ov \si} \ran + \cdots \nn\\
	=&\; \int_{\bR^4_{0123}} \dd^4x \, \tr(- \tensor{\sfF}{^m_{12}} \cD_m \si + \tensor{\ov\sfF}{^m_{12}} \ov\cD_m \ov\si  + \sfG_{12} \cF^{03} - \ov\sfG_{12} \ov\cF^{03}) + \cdots \label{SW}
\end{align}
The total B-model action is the sum of \rf{SV}, \rf{SC1}, \rf{SC2}, and \rf{SW}:
\beq
	S^\text{2d}_\text{B} = S_V + S_{C_1} + S_{C_2} + S_W\,. \label{S2dB}
\eeq

The matching of the 2d B-model action and the 4d KW action will occur on shell. We need to integrate out the bosonic auxiliary fields. Terms in \rf{S2dB} containing $P$ are:
\begin{align}
	\int_{\bR^4_{0123}} \dd^4x\, \tr \left( -P^2 + 2 \left(\ii D^i \phi_i  + \ii D_m \phi^m + \frac{1}{2} [\si, \ov\si] \right) P \right)\,. \label{Pterms}
\end{align}
Therefore, $P$ can be integrated out by the following equation of motion:
\beq
	P = \ii D^\mu \phi_\mu + \frac{1}{2} [\si, \ov\si]\,.
\eeq
The effect of integrating $P$ out is to replace the terms \rf{Pterms} in the action involving $P$ with the following term:
\beq
	\int_{\bR^4_{0123}} \dd^4x\, \tr\left(\ii D^\mu \phi_\mu + \frac{1}{2} [\si, \ov\si] \right)^2 \,.
\eeq

Terms in the action involving $\sfF, \ov\sfF$ are:
\begin{align}
	\int_{\bR^4_{0123}} \dd^4x\, \tr\left(-\ov\sfF_{m12} \sfF^{m12} - \tensor{\sfF}{^m_{12}} \cD_m \si + \tensor{\ov\sfF}{^m_{12}} \ov\cD_m \ov\si \right)\,. \label{Fterms}
\end{align}
Their equations of motion are:
\beq
	\ov\sfF_{m12} = -\cD_m \si\,, \qquad \sfF_{m12} = \ov\cD_m \ov\si\,.
\eeq
By integrating out $\sfF, \ov\sfF$ we replace the action \rf{Fterms} by:
\beq
	- \int_{\bR^4_{0123}} \dd^4x\, \tr\left(\ov\cD^m \ov\si \cD_m \si\right)\,.
\eeq
Also note that, after integrating out these fields the supersymmetry variation of $\ov\sfM$ \rf{QBch1} becomes:
\beq
	\de_T \ov\sfM_{12} = \ov\sfF_{12} = -\cD_m \si\,. \label{onshelldeM}
\eeq
Comparing this with the supersymmetry transformation of $\ov\psi$ in the KW theory \rf{QKW} leads to the identification \rf{Mpsi}.

Terms in the action involving $\sfG, \ov\sfG$ are:
\beq
	\int_{\bR^4_{0123}} \dd^4x\, \tr\left(-\ov\sfG_{12} \sfG^{12} + \sfG_{12} \cF^{03} - \ov\sfG_{12} \ov\cF^{03}\right)\,. \label{Gterms}
\eeq
Their equations of motion:
\beq
	\ov\sfG_{12} = \cF_{03}\,, \qquad \sfG^{12} = -\ov\cF^{03}\,.
\eeq
By integrating out $\sfG, \ov\sfG$ we replace the action \rf{Gterms} by:
\beq
	-  \int_{\bR^4_{0123}} \dd^4x\, \cF_{03} \ov\cF^{03}
\eeq
Integrating out $\ov\sfG$ makes the on shell supersymmetry of $\ov\sfN$ \rf{QBch0}:
\beq
	\de_T \ov\sfN_{12} = \ov\sfG_{12} = \cF_{03}\,. \label{onshelldeN}
\eeq
Comparing this with the supersymmetry of the KW theory \rf{QKW} we find the identification \rf{Nchi}.

Thus, after integrating out all the auxiliary fields, the B-model action becomes:
\begin{align}
	S^\text{2d}_\text{B} \xrightarrow{\text{on shell}}&\; \int_{\bR^4_{0123}} \dd^4x\,\tr \left(- \frac{1}{2} \cF_{\mu\nu} \ov\cF^{\mu\nu}   - \ov\cD^\mu \ov \si \cD_\mu \si + \left(\ii D^\mu \phi_\mu + \frac{1}{2} [\si, \ov\si] \right)^2 + \cdots \right) \,. \label{SKW}
\end{align}
This is the fermion free part of the 4d KW action \cite{Kapustin:2006pk}, or equivalently, the dimensional reduction of the holomorphic-topological twist of 6d $\cN=(1,1)$ SYM -- reduced along the holomorphic directions \cite{Costello:2018txb}.

\subsubsection{$\Om$-deforming the B-model}


A B-model on $\bR^2$ with a $\U(1)$ symmetry generated by rotation of the space-time admits an $\Om$-deformation \cite{Luo:2014sva}. Let $V = \hbar \pa_\tht$ be the generator of rotation where $\tht$ is the angular coordinate on $\bR^2$ and $\hbar$ is a deformation parameter. The B-model supersymmetry variation $\de_T$ can be deformed in a way such that instead of squaring to zero, it squares to the action of $V$:
\beq
	\de_T \rightsquigarrow \de_V \quad \text{such that on fields} \quad \de_V^2 = \cL_V \text{ (modulo gauge transformations)}\,,
\eeq
where $\cL_V$ is the Lie derivative with respect to the vector field $V$. The precise deformations of $\de_T$ acting on the vector multiplet \rf{QBvect} and the chiral multiplets, \rf{QBch1} and \rf{QBch0}, are given in \rf{QBvectV}, \rf{QBch1V}, and \rf{QBch0V} respectively. Suppose that the B-model has chiral multiplets valued in some complex symplectic target $\Up$, a compact gauge group $G$, and a $G_\bC$-invariant holomorphic superpotential $W:\Up \to \bC$. Then the $\Om$-deformed theory localizes to a matrix model with the action $\frac{1}{\hbar}W$ and the complexified gauge group $G_\bC$. Path integrals in this matrix model are performed over a Lagrangian subspace of $\Up$ -- the choice of the Lagrangian depends on the choice of boundary conditions for the B-model fields at infinity of the space-time $\bR^2$. Applying this process to the B-model description of the KW  theory with compact gauge group $G$ and superpotential $W$ \rf{WBF} from the last section we find that once $\Om$-deformation is turned on with respect to the rotation of the $\bR^2_{12}$ plane, the theory localizes to a 2d BF theory on $\bR^2_{03}$ with gauge group $G_\bC$ and the following action \cite{Yagi:2014toa, Ishtiaque:2020ufn}:
\beq
	S_\BF = \frac{1}{\hbar} W = \frac{1}{\hbar} \int_{\bR^2_{03}} \dd x^0 \wedge \dd x^3\, \tr\, \si \cF_{03}\,. \label{SBFR2}
\eeq
Varying this action we get the following equations of motion:
\beq
	\cF_{03} = 0 \quad \text{and} \quad \cD^{(c)} \si = 0\,, \label{BFeom}
\eeq
where the covariant derivative $\cD^{(c)}$ only has differentials and connections in the directions of $\bR^2_{03}$, as defined in \rf{D2Dc}.

\subsection{Boundary Conditions in BF Theory}
Instead of considering the BF theory on the infinite plane as in \rf{SBFR2} we can put it on a strip $\bR \times \bI$ where we consider the $x^3$ direction to be a finite interval $\bI$. We can choose a gauge where $\cA_0=0$. In this gauge the second equation of motion from \rf{BFeom} says that $\si$ and $\cA_3$ are independent of $x^0$ and they further satisfy:
\beq
	\pa_3 \si + [\cA_3,\si] = 0\,, \label{Dsi0}
\eeq
which is $G_\bC$-invariant. Appropriate boundary conditions must be imposed on $\si$ and gauge transformations at the boundary which we shall discuss in this section. For now we note that the moduli space of solutions to \rf{Dsi0} modulo complex gauge transformations is the same, as a complex symplectic manifold, as the moduli space of solutions to Nahm's equation for the triple $(\Im\,\cA_3, \Re\, \si, -\Im\, \si) = (\phi_7, \phi_8, \phi_9)$ modulo real gauge transformations \cite{Hurtubise:1989wh}.

Fields of the BF theory are the complex connection $\cA_0 \dd x^0 + \cA_3 \dd x^3$ and the complex adjoint scalar $\si$. These fields are written in terms of the fields of 4d $\cN=4$ SYM in \rf{complexA} and \rf{B}. In the presence of a boundary, it is better to use the notation for the scalar fields suited to the symmetry of the boundary \rf{XY}. We have:
\beq
	\cA_0 = A_0 + \ii Y_1\,, \qquad \cA_3 = A_3 + \ii X_1\,, \qquad \si = X_2 - \ii X_3\,.
\eeq
We can now translate the $1/2$-BPS boundary conditions of the 4d theory into boundary conditions of the BF theory. We can readily use the analysis of the moduli space of solutions to Nahm's equation as a complex space from \cite{Gaiotto:2008sa}. In this reference the choice of complex structure was arbitrary and breaks the hyper-K\"ahler symmetry of the moduli space. This extended symmetry is broken in our case in the process of twisting when we broke the $\SO(6)$ R-symmetry of 4d $\cN=4$ SYM down to $\SO(4) \times \SO(2)$. This $\SO(2)$ acts on the moduli space of solutions to Nahm's equation and making the action holomorphic is equivalent to choosing a complex structure. The choice is made by declaring $\si$ to be a holomorphic coordinate.

We create boundaries for the BF theory by starting with 4d $\cN=4$ SYM on an interval and then turning on $\Om$-deformation. In terms of branes, We start with a configuration where D3 branes are suspended between five-branes. We consider two types of boundaries, one created by D5 branes, corresponding to the boundary conditions of \S\ref{sec:D5boundary} and the other created by NS5 branes corresponding to the boundary conditions of \S\ref{sec:NS5boundary}. Similar setups also appeared in \cite{2020JHEP...08..021W} where the author shows using supersymmetric localization that interfaces in 4d $\cN=4$ SYM created by five-branes can be reduced to a 2d/1d setup involving interfaces in 2d Yang-Mills (YM) theory. This setup was further used in \cite{2020NuPhB.95815120K} to compute 1-point correlation functions in 2d YM in the presence of a D5-type interface. 2d BF theory is a zero coupling limit of 2d YM and these results should be relevant for defining traces of algebras coupled to the line operators of 4d CS that these defects create in our setup.

\subsubsection{D5 Boundary} \label{sec:D5bc}
The most general D5 type boundary we consider comes from the  brane configuration in Fig. \ref{fig:D5boundary}.
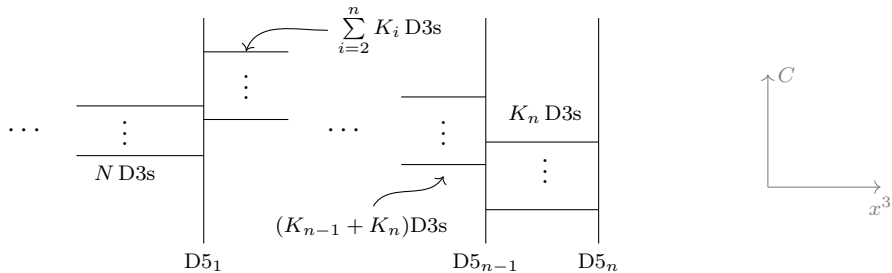
\begin{figure}[h]
\begin{center}
\begin{tikzpicture}
\tikzmath{\r=.75; \a=60; \H=1.5;}
\node[circle, minimum size=\r cm] (NS5p) {};
\node[left=-.3cm of NS5p] () {$\cdots$};
\coordinate[right=2*\r cm of NS5p] (D51mid);
\node[above right=-.8*\r cm and .5*\r cm of NS5p] (vdots4) {$\vdots$};
\node[below=0cm of vdots4] () {\scriptsize $N\, \text{D3s}$};
\coordinate[above=\H cm of D51mid] (D51top);
\coordinate[below=\H cm of D51mid] (D51bot);
\node[below=0cm of D51bot] () {\scriptsize D5$_1$};
\draw (D51top) -- (D51bot);
\draw (NS5p.\a) -- (NS5p.\a -| D51mid);
\draw (NS5p.-\a) -- (NS5p.-\a -| D51mid);
\node[above right=.19*\H cm and .5*\r cm of D51mid] (vdotsD1) {$\vdots$};
\node[above right=-.3*\r cm and 1.2*\r cm of vdotsD1] (sumD1) {\scriptsize $\sum\limits_{i=2}^n K_i\,$D3s};
\draw[->] (sumD1) to[out=180, in=70] ($(vdotsD1)+(0,.5*\r)$);
\coordinate[right=1.5*\r cm of D51mid] (D5inv1mid);
\node[right=.5*\r cm of D5inv1mid] () {$\cdots$};
\draw ($(D51mid)+(0,.7*\H)$) -- ({$(D51mid)+(0,.7*\H)$} -| D5inv1mid);
\draw ($(D51mid)+(0,.1*\H)$) -- ({$(D51mid)+(0,.1*\H)$} -| D5inv1mid);
\coordinate[right=2*\r cm of D5inv1mid] (D5inv2mid);
\coordinate[right=1.5*\r cm of D5inv2mid] (D5n-1mid);
\draw ($(D5inv2mid)+(0,-.3*\H)$) -- ({$(D5inv2mid)+(0,-.3*\H)$} -| D5n-1mid);
\draw ($(D5inv2mid)+(0,.3*\H)$) -- ({$(D5inv2mid)+(0,.3*\H)$} -| D5n-1mid);
\coordinate[above=\H cm of D5n-1mid] (D5n-1top);
\coordinate[below=\H cm of D5n-1mid] (D5n-1bot);
\node[below=0cm of D5n-1bot] () {\scriptsize D5$_{n-1}$};
\draw (D5n-1top) -- (D5n-1bot);
\node[above left=-.22*\H cm and .5*\r cm of D5n-1mid] (vdotsD2) {$\vdots$};
\node[below left=.9*\r cm and -.5*\r cm of vdotsD2] (sumD2) {\scriptsize $(K_{n-1}+K_n)$D3s};
\draw[->] (sumD2) to[out=60, in=240] ($(vdotsD2)-(0,.9*\r)$);
\coordinate[right=2*\r cm of D5n-1mid] (D5nmid);
\coordinate[above=\H cm of D5nmid] (D5ntop);
\coordinate[below=\H cm of D5nmid] (D5nbot);
\node[below=0cm of D5nbot] () {\scriptsize D5$_n$};
\draw (D5ntop) -- (D5nbot);
\draw ($(D5n-1mid)+(0,-.7*\H)$) -- ({$(D5n-1mid)+(0,-.7*\H)$} -| D5nmid);
\draw ($(D5n-1mid)+(0,-.1*\H)$) -- ({$(D5n-1mid)+(0,-.1*\H)$} -| D5nmid);
\node[below left=.02*\H cm and .7*\r cm of D5nmid] (vdotsD3) {$\vdots$};
\node[above=0*\r cm of vdotsD3] () {\scriptsize $K_n\,$D3s};
\coordinate[above right=1*\r cm and 3*\r cm of D5nbot] (origin);
\coordinate[right=2*\r cm of origin] (X);
\coordinate[above=2*\r cm of origin] (Y);
\draw[->, black!50] (origin) -- (X);
\node[below=0cm of X] () {\color{black!50} \scriptsize $x^3$};
\draw[->, black!50] (origin) -- (Y);
\node[right=0cm of Y] () {\color{black!50} \scriptsize $C$};
\end{tikzpicture}
\end{center}
\caption{A generic D5 type boundary determined by $\bm K = (K_1, \cdots, K_n)$ and $K_i$ is the linking number of the $i$th D5 brane. We denote the total number of D3 branes by $N = \sum_{i=1}^n K_i$.}
\label{fig:D5boundary}
\end{figure}
There are $n$ ordered D5 branes such that the $i$th D5 brane has linking number $K_i$. Since there are no NS5 branes to the right of any D5 brane, $K_i$ coincides with the number of D3 branes {ending} on the $i$th D5 brane from the left according to the definition of linking numbers \rf{KLdef}. In some other brane configuration related by Hanany-Witten transitions the number of D3 branes ending on the five-branes will generally be different but linking numbers are invariant under such transformations. The brane configuration of Fig. \ref{fig:D5boundary} can be tabulated as in Table \ref{D5boundaryTable}.
\begin{table}
\beq
	\begin{array}{c|c>{\columncolor[gray]{0.9}}c>{\columncolor[gray]{0.9}}ccc>{\columncolor[gray]{0.9}}c>{\columncolor[gray]{0.9}}cc>{\columncolor[gray]{0.9}}c>{\columncolor[gray]{0.9}}c}
		& & \multicolumn{2}{>{\columncolor[gray]{0.9}}c}{ \bR^2_{+\hbar}} & & & \multicolumn{2}{>{\columncolor[gray]{0.9}}c}{ \bR^2_{-\hbar}} & & \multicolumn{2}{>{\columncolor[gray]{0.9}}c}{ C} \\
		& 0 &  1 &  2 & 3 & 4 &  5 &  6 & 7 &  8 &  9 \\
	\hline
	\text{Scalar Fields} & &  &  & & Y_1 &  Y_2 &  Y_3 & X_1 &  X_2 &  X_3 \\
	\hline
	\text{D5}_i & \times & \times & \times & t_i & m_i & 0 & 0 & \times & \times & \times \\
	\text{D3}_i^\al & \times & \times & \times & \times & m_i & 0 & 0 &  &  & 
	\end{array}
\nn
\eeq
\caption{Directions wrapped by the D3 and D5 branes from Fig. \ref{fig:D5boundary}. We denote the D3 branes attached to the $i$th D5 brane by D3$_i^\al$ for different values of $\al$. We have also noted that fluctuations of the D3 branes in $\bR^3_{456}$ and $\bR^3_{789}$ direction corresponds to the triples of adjoint scalars $\vec Y$ and $\vec X$ respectively in the 4d $\cN=4$ $\U(N)$ SYM theory on the D3 branes. Here $N$ is the total number of D3 branes.}
\label{D5boundaryTable}
\end{table}
We have specified the positions of the branes in some of the directions. The $i$th D5 brane is located at $x^3=t_i$, $x^4=m_i$, $x^5=x^6=0$. We need these branes to be located at the center of the $\bR^2_{56}$ plane in order to preserve the $\U(1)$ symmetry that acts by rotating the $\bR^2_{12}$ and the $\bR^2_{56}$ planes and which we use to turn on the $\Om$-deformation. Our brane configuration must be invariant under this rotation. All D3 branes attached to the $i$th D5 brane must be located at $(m_i,0,0)$ in $\bR^3_{456}$. In order to have a gauge theory on the D5 branes with the gauge group $\U(n)$ we must ultimately take the limit where for all $i$ we have $t_i \to t$ and $m_i \to m$ for some real numbers $t$ and $m$.  Having a nonzero $m$ would modify the D5 boundary condition on $\vec Y$ \rf{YD5} by putting the triple $(m,0,0) \otimes \id_N \in \bR^3 \otimes Z(\gl_N)$ of central elements on the right hand side instead of zero. We can set $m=t=0$ by choosing the origin of our coordinate system appropriately. We have left the locations of the D3 branes in the $\bR^3_{789}$ directions unspecified because they are not fixed by this brane configuration and are not part of the D5 type boundary conditions. 

\begin{rmk}[Further Comments on Twisted Masses]\label{rmk:masses}
We will ultimately let the D3 branes in Fig. \ref{fig:D5boundary} end on NS5 branes located to the left of all the D5 branes. Then, with finite extension in the $x^3$ direction, the low energy dynamics of the D3 branes will be captured by a 3d $\cN=4$ theory, the one we denoted as $T^\vee[\U(N)]_{\bm K}^{\bm L}$ in \S\ref{sec:branes}. In terms of this 3d theory, the $m_i$'s are real twisted masses. As we discussed in \S\ref{sec:branes}, we expect to find the Higgs branch of this 3d theory as the phase space for the line operators under consideration. Nonzero twisted masses would lift parts of the Higgs branch. Note furthermore that in 3d $\cN=4$ mass parameters usually come in real triplets, in terms of our brane construction such a triplet of twisted masses would correspond to the locations of the D3 branes in the $\bR^3_{456}$ directions. Since we want to preserve the rotational symmetry of the $\Om$-background, we could only turn on one component of these masses.
\end{rmk}

The world-volume theory of the D3 branes is 4d $\cN=4$ SYM with $\U(N)$ gauge symmetry. After twisting and turning on $\Om$-deformation this becomes 2d BF theory with gauge group $\GL_N$ where
\beq
	N = \sum_{i=1}^n K_i\,. \label{N1K}
\eeq
The $1/2$-BPS boundary conditions for the 4d SYM fields are the D5 type conditions mentioned in \rf{YD5}, \rf{bcAXD5inf}. The real scalars $\vec X$ satisfy Nahm's equation \rf{Nahm}. We can combine two of the three equations to write down the following equation for the complexified BF fields $\si$ and $\cA_3$:
\beq
	\cD_3 \si := \pa_3 \si + [\cA_3, \si] = 0 \,. \label{NahmC}
\eeq
And instead of imposing the remaining real equation and then considering solutions to the equations of motion with boundary conditions modulo real gauge transformations, in the BF theory we only impose the complex equation and quotient by complex gauge transformations.

In the special cases when the integers $K_i$ satisfy the following inequality:
\beq
	K_i \ge K_{i+1} \ge 0 \quad \text{for all} \quad 1 \le i < n\,, \label{goodineq}
\eeq
the boundary of Fig. \ref{fig:D5boundary} corresponds to a simple polar boundary condition on the real adjoint scalars $X_i$ of the 4d theory. Assuming that the boundary is located at $x^3=0$, the fields $X_i$ behave near the boundary as (see \S3.5 of \cite{Gaiotto:2008sa}):
\beq
	\vec X(x^3) = \frac{\vec t}{x^3} + O(1)\,. \label{Xpole}
\eeq
where $\vec t = (t_1, t_2, t_3)$ are images of a standard $\su(2)$ basis under a homomorphism $\rho_{\bm K}: \su(2) \to \u(N)$ satisfying $[t_i, t_j] = \ii\ep_{ijk} t_k$. As the notation suggests, the homomorphism depends on the linking numbers $\bm K$. The form \rf{Xpole} of the 4d fields becomes the following boundary behavior of the BF field $\si$:
\beq
	\si(x^3) = \frac{t_-}{x^3} + O(1)\,,
\eeq
where $t_- := t_2 - \ii t_3$ is a nilpotent element of $\gl_N$. The homomorphism $\rho_{\bm K}$ is such that $t_-$ has the following {Jordan normal form}:
\beq
	\begin{pmatrix} \nu_{K_1} && \\ & \ddots & \\ && \nu_{K_n} \end{pmatrix}\,,
\eeq
where $\nu_K$ is a {regular nilpotent} matrix of size $K \times K$ in the normal form.
\beq
	\nu_K = \begin{pmatrix} 0 & 1 &&& \\  & 0 & 1 && \\ && \ddots & \ddots & \\ &&& 0 & 1 \\ &&&& 0 \end{pmatrix} \,.
\eeq

For arbitrary linking numbers that do not satisfy the inequality \rf{goodineq}, such a simple boundary condition does not exist. Instead, we keep the D5 branes separated along the $x^3$-direction and treat them as interfaces between theories with possibly different gauge groups. The boundary conditions on the fields of the 4d theory at these interfaces were described in \cite{Gaiotto:2008sa} and we can easily translate them into boundary conditions for the BF fields. 

From Fig. \ref{fig:D5boundary}, between the D5$_{i-1}$ and the D5$_i$ branes there are
\beq
	N_i := \sum_{j=i}^n K_j \label{NK}
\eeq
D3 branes which lead to a 4d theory with $\U(N_i)$ gauge group and consequently to a 2d BF theory with $\GL_{N_i}$ gauge group on an interval. To the left of the D5$_1$ brane there are $N_1 = N$ D3 branes leading to the $\GL_N$ BF theory. Thus the D5$_i$ brane creates an interface between two theories such that the rank of the gauge group jumps from $N_i$ to $N_{i+1}$ from left to right. The last, i.e. the $n$th D5 brane is just a special case of this such that it interfaces between a $\GL_{N_n}$ BF theory, and a trivial/empty theory.

To make contact with the diagrammatic notations for bow varieties from \cite{Cherkis:2010bn} and \cite{Nakajima:2016guo} we draw the stack of D3 brane segments between two adjacent five-branes (of any kind) as a wavy line and label the line by the number of D3 branes in the stack, such as \begin{tikzpicture} \tikzmath{\d=.25;} \coordinate (l); \node[above right=0cm and \d cm of l] (N) {$N$}; \coordinate[below right=0cm and \d cm of N] (r); \path[draw=black, snake it] (l) -- (r); \end{tikzpicture}. We mark each D5 brane by a $\Cross$.  So we represent the boundary from Fig. \ref{fig:D5boundary} diagrammatically as in Fig. \ref{fig:D5bow}.
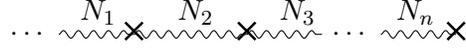
\begin{figure}[h]
\beq
\begin{tikzpicture}[baseline=-.6ex]
\tikzmath{\d=1.5; \s=1;}
\coordinate (l);
\coordinate[right=\s cm of l] (D51);
\coordinate[right=\d of D51] (D52);
\coordinate[right=\s cm of D52] (m1);
\node[right=0cm of m1] (m2) {$\cdots$};
\coordinate[right=0cm of m2] (m3);
\coordinate[right=\s cm of m3] (D5n);
\path[draw=black, snake it] (l) -- (D51);
\path[draw=black, snake it] (D51) -- (D52);
\path[draw=black, snake it] (D52) -- (m1);
\path[draw=black, snake it] (m3) -- (D5n);
\node[left=0cm of l] () {$\cdots$};
\node at (D51) {$\Cross$};
\node at (D52) {$\Cross$};
\node at (D5n) {$\Cross$};
\node[above right=0cm and .1*\d cm of l] () {$N_1$};
\node[above right=0cm and .3*\d cm of D51] () {$N_2$};
\node[above right=0cm and .2*\d cm of D52] () {$N_3$};
\node[above left=0cm and .1*\d cm of D5n] () {$N_n$};
\end{tikzpicture}
\nn\eeq
\caption{The bow diagram for the D5 type boundary whose brane diagram is given in Fig. \ref{fig:D5boundary}. Here $\bm K=(K_1, \cdots, K_n)$ is an $n$-tuple of integers satisfying \rf{NK}.}
\label{fig:D5bow}\end{figure}

To discuss the boundary conditions at an interface, let us focus on the $i$th D5 brane between two BF theories with gauge groups $\GL_{N_i}$ and $\GL_{N_{i+1}}$, i.e., we are focusing on the portion \begin{tikzpicture}[baseline=-.5ex]\coordinate (l); \node[above right=0cm of l] (Ni) {$N_i$}; \coordinate[below right=0cm of Ni] (x); \node[above right=0cm of x] (Ni+1) {$N_{i+1}$}; \coordinate[below right=0cm of Ni+1] (r); \node at (x) {$\Cross$}; \path[draw=black, snake it] (l) -- (x); \path[draw=black, snake it] (x) -- (r); \node[left=0cm of l] {$\cdots$}; \node[right=0cm of r] {$\cdots$};\end{tikzpicture} of a bow diagram. We always choose a gauge with $\cA_0=0$. The BF equations of motion \rf{BFeom} then imply, in particular, that $\cA_3$ and $\si$ are independent of the $x^0$ coordinate and therefore solving the equations of motion becomes a problem in one dimension. Thus, for a wavy line with label $N_i$ we have a $\gl_{N_i}$ connection $\cD_3 := \pa_3 + \cA_3 \wedge$ and a covariantly constant (according to \rf{Dsi0}) adjoint scalar $\si$. We assume that the D5 brane interface is located at $x^3=0$. We denote the fields to the left and to the right of the interface by $-$ and $+$ respectively, we thus denote the BF fields by $\cA_3^\pm, \si^\pm$ and the 4d SYM fields by $A_3^\pm, X_1^\pm, X_2^\pm, X_3^\pm$. We consider two separate cases, $N_i \ne N_{i+1}$ and $N_i = N_{i+1}$.
\begin{itemize}
\item \uline{$N_i \ne N_{i+1}$.} For concreteness let us assume $N_i > N_{i+1}$. The inequality $N_i > N_{i+1}$ means $N_{i+1}$ D3 branes pass through the D5 brane and the remaining $(N_i-N_{i+1})$ D3 branes end on the D5 brane from the left. We learn from \cite{Gaiotto:2008sa} (see \S3.4.4) that whenever D3 branes end on D5 branes, the adjoint scalars $\vec X$ from the 4d theory picks up poles compatible with the real Nahm's equations, and the scalars continue through the interface smoothly if the D3 branes extend past the D5 branes. More specifically, the real connection $A_3^+$ and the adjoint scalars $\vec X^+$ cross over smoothly from $x^3>0$ to $x^3<0$. But $\vec X^-$ picks up some poles at $x^3=0$ -- near the interface it takes the form:\footnote{D3 branes ending on D5s appear as monopoles in the three dimensional space transverse to the D3s and parallel to the D5s (parameterized by $\vec X$). It was shown in \cite{Hurtubise:1989qy, Hurtubise:1989wh} that solutions to Nahm's equations with the following boundary conditions coincide with such monopoles.}
\beq
	\vec X^-(x^3) = \begin{pmatrix}  \vec X^+(0) + O(x^3) & \vec B \\ \vec C & \frac{\vec t}{x^3} + O(1) \end{pmatrix}\,.
\eeq
Here $\vec B$ and $\vec C$ are matrices of size $N_{i+1} \times (N_i - N_{i+1})$ and $(N_i - N_{i+1}) \times N_{i+1}$ respectively that are regular at $x^3=0$. 
The upper left corner of size $N_{i+1} \times N_{i+1}$ is regular and matches the value of $\vec X^+$ at the interface. In the lower right corner $\vec t = (t_1, t_2, t_3)$ are the images of a standard $\su(2)$ basis under an {irreducible} homomorphism $\su(2) \to \u(N_i-N_{i+1})$. As usual, in the BF theory we only keep the constraints on $\si = X_2 - \ii X_3$. The aforementioned boundary condition then says that $\si^+$ is regular at the interface and $\si^-$ behaves near the interface as:
\beq
	\si^-(x^3) = \begin{pmatrix}  \si^+(0) + O(x^3) & B_2 - \ii B_3 \\ C_2 - \ii C_3 & \frac{t_2-\ii t_3}{x^3} + O(1) \end{pmatrix}\,,
\eeq
where $t_2 - \ii t_3 \in \gl_{N_i-N_{i+1}}$ is a {regular nilpotent} element. 

Boundary conditions for $N_i < N_{i+1}$ follows easily by swapping the superscript $+$ and $-$ on fields and changing $N_i-N_{i+1}$ to $N_{i+1}-N_i$.

\item \uline{$N_i = N_{i+1}$.} In this case we have the same gauge group on both sides of the D5 brane. Thus the D5 brane can be interpreted as a domain wall in a single gauge theory. In the 4d picture, we have $\cN=4$ $\U(N_i)$ SYM and the open strings stretched between the stack of D3 branes and the D5 brane introduce a hypermultiplet, localized at the domain wall, transforming under the quaternionic representation $Z := T^* \bC^{N_i}$ of the gauge group. The full 4d theory on $\bR^4_{0123}$ can be viewed as a 3d theory with $\cN=4$ supersymmetry supported on $\bR^3_{012}$ with fields valued in the infinite dimensional space of functions of $x^3$. The 4d fields $\vec X$ and $A_3$, as functions of $x^3$, become coordinates on an infinite dimensional hyper-K\"ahler target space. The scalars of the domain wall hypermultiplet parameterize the finite dimensional hyper-K\"ahler target $Z$. From this point of view the gauge group is also infinite dimensional, consisting of $\U(N_i)$ valued functions of $x^3$. This gauge group acts on the 4d fields, as well as on the fields localized at the domain wall. Only the restrictions of the group valued functions at the domain wall -- which we take to be at $x^3=0$ -- act on the localized fields. So we get a hyper-K\"ahler action of $\U(N_i)$ on $Z$. The hyper-K\"ahler moment map of the infinite dimensional gauge group on the full target space is the following:\footnote{If there are boundaries of the 4d theory away from the domain wall in either direction, there can be contribution to the moment map localized at those boundaries. We ignore this possibility here since we are now only interested in the boundary conditions coming from the domain wall.}
\beq
	\vec \mu(x^3) = D_3 \vec X(x^3) + \vec X(x^3) \times \vec X(x^3) + \de(x^3) (\vec X^+(0) - \vec X^-(0) + \vec \mu^Z) \,, \label{momentD}
\eeq
where $\vec\mu^Z$ is the moment map of the $\U(N_i)$ action on $Z$. Supersymmetric field configurations in the 4d theory are those on which the moment map vanishes. Away from the domain wall this simply means satisfying the standard Nahm's equation. The delta function supported at the domain wall implies that $\vec X$ is discontinuous at $x^3=0$ and jumps according to:
\beq
	\vec X^+(0) - \vec X^-(0) + \vec \mu^Z = 0\,. \label{bcD5wall}
\eeq
Let us choose holomorphic coordinates on $Z$:
\beq
	(I,J) \in \Hom(\bC, \bC^{N_i}) \oplus \Hom(\bC^{N_i}, \bC) = T^*\bC^{N_i}\,, \label{IJ}
\eeq
such that:
\beq
	\vec \mu^Z = \left(\frac{1}{2}(II^\dagger - J^\dagger J), \Re\, IJ, -\Im\, IJ\right)\,.
\eeq

After twisting and turning on omega deformation, the domain wall hypermultiplets localize to analytically continued topological quantum mechanics with target $Z$ coupled to the restriction of the gauge field of the BF theory at the domain wall \cite{Yagi:2014toa, Ishtiaque:2020ufn}. The kinetic term of the quantum mechanics simply takes the form $J \pa_0 I$. The coupling of the BF theory to this quantum mechanics via the term $J \cA_0 I$ causes the BF field $\si$ to be discontinuous at the location of the defect and from \rf{bcD5wall} we get the following form of the discontinuity:
\beq
	\si^+(0) - \si^-(0) + IJ = 0\,.
\eeq
This can also be seen as a Gauss law constraint in the coupled theory after gauging away $\cA_0$.
\end{itemize}

Gauge symmetry is broken at a D5 interface for the gauge fields that satisfy Dirichlet boundary condition. For example, in case of \begin{tikzpicture}[baseline=-.5ex]\coordinate (l); \node[above right=0cm of l] (Ni) {$N_i$}; \coordinate[below right=0cm of Ni] (x); \node[above right=0cm of x] (Ni+1) {$N_{i+1}$}; \coordinate[below right=0cm of Ni+1] (r); \node at (x) {$\Cross$}; \path[draw=black, snake it] (l) -- (x); \path[draw=black, snake it] (x) -- (r); \node[left=0cm of l] {$\cdots$}; \node[right=0cm of r] {$\cdots$};\end{tikzpicture} with $N_i \ge N_{i+1}$ the $\u(N_{i+1})$ part of the connection crosses over smoothly from one side to the other, but the components of the $\u(N_i)$ connection orthogonal to the $\u(N_{i+1})$ directions are set to zero at the domain wall by Dirichlet boundary condition. At the location of the D5 brane, gauge symmetry is reduced from $\U(N_i)$ to $\U(N_{i+1})$ going from left to right. Suppose $g^-(x_3)$ and $g^+(x_3)$ are gauge transformation matrices to the left and to the right of the D5 brane, then at $x^3=0$ they are required to satisfy:
\beq
	g^-(0) = \begin{pmatrix} g^+(0) & 0 \\ 0 & \id_{N_i-N_{i+1}} \end{pmatrix}\,.
\eeq
Away from the interface the $\cA_3$-covariant derivative $\cD_3$ and the field $\si$ transform by the adjoint action as usual:
\beq
	(\cD_3, \si) \mapsto (g \cD_3 g^{-1}, g \si g^{-1})\,. \label{adjgauge}
\eeq
This makes sense at the interface as well. In case $N_i = N_{i+1}$ and we have the additional fields $I$ and $J$ localized at the interface, they transform under the gauge transformation as:
\beq
	(I,J) \mapsto (g(0) I, J g(0)^{-1})\,.
\eeq

\subsubsection{NS5 Boundary} \label{sec:NS5bc}
The most general NS5 type boundaries result from brane configurations such as Fig. \ref{fig:NS5boundary}.
\begin{figure}[h]
\begin{center}
\begin{tikzpicture}
\tikzmath{\r=.65; \a=60; \H=1.3;}
\node[draw=black, circle, minimum size=\r cm] (NS51) {};
\NScross{NS51};
\node[above=0cm of NS51] () {\scriptsize  NS5$_1$};
\node[right=.75*\r cm of NS51, circle, minimum size=\r cm] (NS5inv1) {};
\node[right=1.5*\r cm of NS51] (dots) {$\cdots$};
\draw (NS51.\a) -- (NS5inv1.{180-\a});
\draw (NS51.-\a) -- (NS5inv1.{180+\a});
\node[above right=-.8*\r cm and .5*\r cm of NS51] (vdots1) {$\vdots$};
\node[below=0cm of vdots1] () {\scriptsize $L_1\,$D3s};
\node[right=1.5*\r cm of dots, draw=black, circle, minimum size=\r cm] (NS5p-1) {};
\NScross{NS5p-1};
\node[above=0cm of NS5p-1] () {\scriptsize  NS5$_{p-1}$};
\node[left=.75*\r cm of NS5p-1, circle, minimum size=\r cm] (NS5inv2) {};
\draw (NS5inv2.\a) -- (NS5p-1.{180-\a});
\draw (NS5inv2.-\a) -- (NS5p-1.{180+\a});
\node[above left=-.8*\r cm and .5*\r cm of NS5p-1] (vdots2) {$\vdots$};
\node[below left=.8*\r cm and -.7*\r cm of vdots2] (sum1) {\scriptsize $\sum\limits_{j=1}^{p-2} L_j\,\text{D3s}$};
\draw[->] (sum1) to[out=60, in=270] ($(vdots2)-(0,.5)$);
\node[left=.75*\r cm of NS5p-1, circle, minimum size=\r cm] (NS5inv2) {};
\node[right=1.5*\r cm of NS5p-1, draw=black, circle, minimum size=\r cm] (NS5p) {};
\NScross{NS5p};
\node[above=0cm of NS5p] () {\scriptsize  NS5$_{p}$};
\draw (NS5p-1.\a) -- (NS5p.{180-\a});
\draw (NS5p-1.-\a) -- (NS5p.{180+\a});
\node[above left=-.8*\r cm and .7*\r cm of NS5p] (vdots3) {$\vdots$};
\node[below right=.8*\r cm and -.7*\r cm of vdots3] (sum2) {\scriptsize $\sum\limits_{j=1}^{p-1} L_j\, \text{D3s}$};
\draw[->] (sum2) to[out=120, in=280] ($(vdots3)-(0,.5)$);
\coordinate[right=2*\r cm of NS5p] (D51mid);
\node[above right=-.8*\r cm and 1*\r cm of NS5p] (vdots4) {$\vdots$};
\node[right=0cm of D51mid] (dots2) {$\cdots$};
\node[below=0cm of vdots4] () {\scriptsize $N\, \text{D3s}$};
\draw (NS5p.\a) -- (NS5p.\a -| D51mid);
\draw (NS5p.-\a) -- (NS5p.-\a -| D51mid);
\coordinate[below right=1.5*\r cm and 3*\r cm of dots2] (origin);
\coordinate[right=2.5*\r cm of origin] (X);
\coordinate[above=2.5*\r cm of origin] (Y);
\draw[->, black!50] (origin) -- (X);
\node[below=0cm of X] () {\color{black!50} \scriptsize $x^3$};
\draw[->, black!50] (origin) -- (Y);
\node[right=0cm of Y] () {\color{black!50} \scriptsize $C$};
\end{tikzpicture}
\end{center}
\caption{A generic NS5 type boundary determined by $\bm L = (L_1, \cdots, L_p)$ where $L_i$ is the linking number of the $i$th NS5 brane. The total number of D3 branes is $M := \sum_{j=1}^p L_j$.}
\label{fig:NS5boundary}\end{figure}
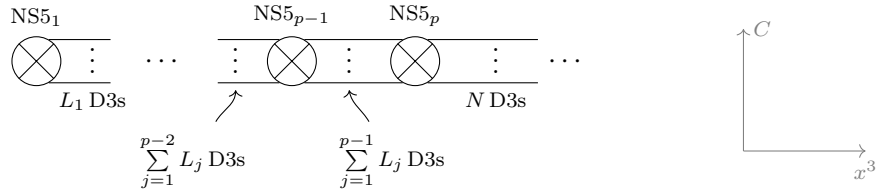
There are $p$ NS5 branes labeled NS5$_1$, ..., NS5$_p$ such that the linking number of NS5$_i$ is $L_i$. When there are no D5 branes to the left of any NS5 brane, as in Fig. \ref{fig:NS5boundary}, $L_i$ coincides with the number of D3 branes ending on NS5$_i$ from the right according to \rf{KLdef}. The NS5 and the D3 brane world-volumes are as in Table \ref{NS5boundaryTable}.
\begin{table}
\beq
	\begin{array}{c|c>{\columncolor[gray]{0.9}}c>{\columncolor[gray]{0.9}}ccc>{\columncolor[gray]{0.9}}c>{\columncolor[gray]{0.9}}cc>{\columncolor[gray]{0.9}}c>{\columncolor[gray]{0.9}}c}
		& & \multicolumn{2}{>{\columncolor[gray]{0.9}}c}{ \bR^2_{+\hbar}} & & & \multicolumn{2}{>{\columncolor[gray]{0.9}}c}{ \bR^2_{-\hbar}} & & \multicolumn{2}{>{\columncolor[gray]{0.9}}c}{ C} \\
		& 0 &  1 &  2 & 3 & 4 &  5 &  6 & 7 &  8 &  9 \\
	\hline
	\text{Scalar Fields} & &  &  & & Y_1 &  Y_2 &  Y_3 & X_1 &  X_2 &  X_3 \\
	\hline
	\text{D3}_j^\al & \times & \times & \times & \times &  & 0 & 0 & r_{1,j} & r_{2,j} & r_{3,j}\\
	\text{NS5}_j & \times & \times & \times & \tau_j & \times & \times & \times & r_{1,j} & r_{2,j} & r_{3,j}
	\end{array}
\nn\eeq
\caption{Directions wrapped by the D3 and NS5 branes from Fig. \ref{fig:NS5boundary}. We denote the D3 branes attached to the $j$th NS5 brane by D3$_j^\al$ for different values of $\al$. The fields $\vec X$ and $\vec Y$ of the 4d $\cN=4$ theory living on the D3 branes consist of fluctuations of the D3 branes in $\bR^3_{456}$ and $\bR^3_{789}$ respectively, as in Table \ref{D5boundaryTable}.}
\label{NS5boundaryTable}
\end{table}
The $j$th NS5 brane is located at $x^3=\tau_j$, $x^7=r_{1,j}$, $x^8=r_{2,j}$, $x^9=r_{3,j}$. All the D3 branes are located at the center of the $\bR^2_{-\hbar}$ plane to preserve the $\U(1)_\hbar$ symmetry. We get a single line operator in 4d CS if all the D3 branes are coincident, in other words by taking the limit $\vec r_j := (r_{1,j}, r_{2,j}, r_{3,j}) \to \vec r := (r_1, r_2, r_3)$ for some $j$-independent point $\vec r \in \bR^3_{789}$. In practice, we shall impose this limit only classically, adding $\cO(\hbar)$ modifications as follows:
\beq
	\vec r := \vec r_p\,, \qquad \vec r_j \to \vec r_{j+1} + \hbar \vec \vr_j\,, \qquad j \in \{1,\cdots, p-1\}\,. \label{rFI}
\eeq
Here $\vec\vr_i = (\vr_{i,1}, \vr_{i,2}, \vr_{i,3}) \in \bR^3_{789}$ are some arbitrary points. The terms proportional to $\hbar$ have no effect as far as classical brane geometry is concerned. These terms will affect our computations by deforming the moment map constraints coming from boundary conditions at NS5 domain walls. When we can describe the low energy theory of the D3 branes in terms of a 3d $\cN=4$ quiver as in Fig. \rf{fig:THquiver}, the parameters $\vec\vr_j$ correspond to the triplets of real FI parameter associated to the $j$th gauge node. In such cases deformations of the Higgs branch by FI parameters is a standard phenomenon. The locations of the D3 branes in the $x^4$ direction, i.e. the field $Y_1$, is not fixed at this boundary, rather it satisfies the Neumann boundary condition \rf{bcYNS5infY}.

As in the previous section, we treat this boundary by keeping the NS5 branes separated along the $x^3$-direction, assuming $\tau_1 < \tau_2 < \cdots < \tau_p$. Then each NS5 brane can be seen as a domain wall between two theories with possibly different gauge groups. In a bow diagram, we represent an NS5 brane by a hollow circle $\overset{\vec r}{\bigcirc}$ with its location in the $\bR^3_{789}$ direction on top. As before, we draw the stack of D3 branes between two consecutive five-branes as a wavy line with a label showing the number of D3 branes. From Fig. \ref{fig:NS5boundary} we see that for $i=1, \cdots, p-1$ the number of D3 branes between NS5$_j$ and NS5$_{j+1}$ is:
\beq
	M_j = \sum_{i=1}^j L_i\,. \label{ML}
\eeq
To the right of of the last NS5 brane there are $M_p := M = \sum_{i=1}^p L_i$ D3 branes. The bow diagram corresponding to the brane configuration of Fig. \ref{fig:NS5boundary}  is in Fig. \ref{fig:NS5bow}.
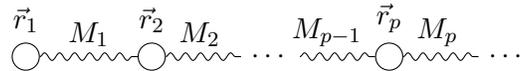
\begin{figure}[h]
\beq
\begin{tikzpicture}[baseline=-.7ex]
\tikzmath{\d=.3; \e=.1; \a=.5;}
\node[circle, draw=black] (NS51) {};
\node[above=0cm of NS51] () {$\vec r_1$};
\node[above right=-.13cm and \d cm of NS51] (M1) {$M_1$};
\node[below right=-.13cm and \d cm of M1, circle, draw=black] (NS52) {};
\node[above=0cm of NS52] () {$\vec r_2$};
\path[draw=black, snake it] (NS51) -- (NS52);
\node[above right=-.13cm and \e cm of NS52] (M2) {$M_2$};
\node[right=1cm of NS52] (dot1) {$\cdots$};
\path[draw=black, snake it] (NS52) -- (dot1);
\node[right=1cm of dot1, circle, draw=black] (NS5p) {};
\node[above=0cm of NS5p] () {$\vec r_p$};
\path [draw=black, snake it] (dot1) -- (NS5p);
\node [above left=-.13cm and \e cm of NS5p] (Mp-1) {$M_{p-1}$};
\node [above right=-.13cm and \e cm of NS5p] (Mp) {$M_p$};
\node[right=1cm of NS5p] (dot2) {$\cdots$};
\path [draw=black, snake it] (NS5p) -- (dot2);
\end{tikzpicture}
\nn\eeq
\caption{The bow diagram for the NS5 type boundary whose brane construction appears in Fig. \ref{fig:NS5boundary}. Here $\bm L = (L_1, \cdots, L_p)$ is a $p$-tuple of integers such that the $L_j$s are related to the $M_j$'s by \rf{ML}. $\vec r_j$ is the location of NS5$_j$ in $\bR^3_{789}$ with the constraint \rf{rFI} that $\vec r_{j+1} - \vec r_j = \hbar \vec \vr_j$ for $j \in \{1,\cdots,p-1\}$.}
\label{fig:NS5bow}\end{figure}

Let us focus on the $j$th NS5 brane, i.e., the portion \begin{tikzpicture}[baseline=-.5ex] \node[circle, draw=black] (NS5) {}; \node[above=0cm of NS5] () {$\vec r_j$}; \node[left=1cm of NS5] (dotl) {$\cdots$}; \path[draw=black, snake it] (NS5) -- (dotl); \node[right=1cm of NS5] (dotr) {$\cdots$}; \path[draw=black, snake it] (NS5) -- (dotr); \node[above left=-.13cm and .1 cm of NS5] () {$M_{j-1}$}; \node[above right=-.13cm and .1 cm of NS5] () {$M_{j}$};\end{tikzpicture} of a bow diagram. Before twisting and turning on $\Om$-deformation the $j$th NS5 brane creates a domain wall between two 4d $\cN=4$ SYM theories with gauge groups $\U(M_{j-1})$ and $\U(M_j)$.  Let us assume that the NS5 brane is located at $x^3=0$.  We can write the half of the 4d space $\bR^4_{0123}$ with $x^3 < 0$ as $\bR^3_{012} \times L_-$ where $L_-$ is the negative half line. Similarly denote the other half of $\bR^4_{0123}$ by $\bR^3_{012} \times L_+$ where $L_+$ is the positive half line. We can think of the 4d $\cN=4$ SYM on $\bR^4_{0123}$ as a 3d $\cN=4$ theory supported at $x^3=0$. The 3d theory has the infinite dimensional gauge group $\cG^- \times \cG^+ := \Map(L_-, \U(M_{j-1})) \times \Map(L_+, \U(M_j))$. The fields $\vec X$ and $A_3$, as functions of $L_\pm$, form two adjoint hypermultiplets of the 3d theory. The two hypermultiplets are acted on by the two factors of the gauge group respectively. Unlike a D5 brane, open strings can stretch across an NS5 brane, the open strings connecting the stacks of D3 branes to the left and to the right of NS5$_j$ provide a bifundamental hypermultiplet to the 3d theory. This hypermultiplet transforms by the action of the restriction of the infinite dimensional gauge group to $x^3=0$. The restriction is simply $\U(M_{j-1}) \times \U(M_j)$ and the hypermultiplet transforms as $\wtd Z := T^* \Hom(\bC^{M_{j-1}}, \bC^{M_j})$. The hyper-K\"ahler moment maps of the $\cG_\pm$  actions on the hypermultiplets are:
\beq
	\vec {\wtd\mu}_\pm(x^3) = D_3 \vec X^\pm(x^3) + \vec X^\pm(x^3) \times \vec X^\pm(x^3) + \de(x^3) \left(\pm \vec X^\pm(0) + \vec \mu^{\wtd Z}_\pm \right)\,.
\eeq
Here $\vec X^\pm$ are functions on $L_\pm$ respectively, and $\mu_-^{\wtd Z}$, $\mu_+^{\wtd Z}$ are the moment maps for the actions of $\U(M_{j-1})$ and $\U(M_j)$ on $\wtd Z$. By introducing holomorphic coordinates
\beq
	(I, J) \in \Hom(\bC^{M_{j-1}}, \bC^{M_j}) \oplus \Hom(\bC^{M_j}, \bC^{M_{j-1}}) = \wtd Z
\eeq
we can write these moment maps as:
\beq\begin{aligned}
	\vec \mu_+^{\wtd Z} =&\; \left(\frac{1}{2}(J^\dagger J - II^\dagger)\,, -\Re\,IJ\,, \Im\, IJ\right)\,, \\
	\vec \mu_-^{\wtd Z} =&\; \left(\frac{1}{2}(JJ^\dagger - I^\dagger I)\,, \Re\, JI\,, -\Im\, JI\right)\,.
\end{aligned}\eeq
The boundary condition coming from the NS5 domain wall is simply that the terms in the moment maps localized at the domain wall vanish, with shifts by central elements depending on the location of NS5$_j$ in $\bR^3_{789}$:
\beq
	\vec X^+(0) + \vec \mu_+^{\wtd Z} = \vec r_j\, \id_{M_j}\,, \qquad \vec X^-(0) - \vec \mu_-^{\wtd Z} = \vec r_j\, \id_{M_{j-1}}\,.
\eeq
This becomes the following constraints on the BF fields:
\beq
	\si^+(0) + IJ = r_j^\bC\, \id_{M_j}\,, \qquad \si^-(0) + JI = r_j^\bC\, \id_{M_{j-1}}\,, \label{moments}
\eeq
where we have defined:
\beq
	r_j^\bC := r_{2,j} - \ii r_{3,j}\,. \label{cr}
\eeq
$r_j^\bC$ is the location of NS5$_j$ in the holomorphic $C$ direction. The real variable $r_{1,j}$ labeling the location of NS5$_j$ in the $x^7$ direction does not appear in the BF theory.

There is some ambiguity in the exact value of the central element in the equations \rf{moments}.  The field $\si^+$ appears in two equations, coming from the two NS5 interfaces on two sides of the wavy segment: 
\beq
	\begin{tikzpicture}[baseline=-.5ex] \node[circle, draw=black] (NS5j) {}; \node[above=0cm of NS5j] () {$r_j^\bC$}; \node[above right=-.13cm and .25cm of NS5j] (rank) {$M_j$}; \node[below right=-.13cm and .25cm of rank, circle, draw=black] (NS5j+1) {}; \path[draw=black, snake it] (NS5j) -- (NS5j+1); \node[above=0cm of NS5j+1] () {$r_{j+1}^\bC$}; \node[left=0cm of NS5j] () {$\cdots$}; \node[right=0cm of NS5j+1] () {$\cdots$};\end{tikzpicture}\,.
\eeq
At each interface we have some localized bi-fundamental hypermultiplets. To distinguish the hypermultiplets from the left and the right interfaces, let us put a label on the fields as follows:
\beq
	(I_k, J_k) \in \Hom(\bC^{M_{k-1}}, \bC^{M_k}) \oplus \Hom(\bC^{M_k}, \bC^{M_{k-1}})\,.
\eeq
Then $(I_j, J_j)$ and $(I_{j+1}, J_{j+1})$ are the hypermultiplet fields at the left and the right interfaces respectively. If NS5$_j$ and NS5$_{j+1}$ are located at $x^3=x_L$ and $x^3=x_R$ respectively, then we get the following two constraints on $\si^+$:
\beq
	\si^+(x_L) + I_j J_j = r_j^\bC\, \id_{M_j}\,, \qquad \si^+(x_R) + J_{j+1} I_{j+1} = r_{j+1}^\bC\, \id_{M_j}\,.
\eeq
We can shift $\si^+$ by a constant amount: $\si^+ \to \si^+ + r_{j+1}^\bC\, \id_{M_j}$, which changes these two equations to:
\beq
	\si^+(x_L) + I_j J_j = \hbar \vr_j^\bC \, \id_{M_j}\,, \qquad \si^+(x_R) + J_{j+1} I_{j+1} = 0\,. \label{momentsFI}
\eeq
Here $\hbar \vr_j^\bC = r_j^\bC - r_{j+1}^\bC$ is the complex FI parameter for the abelian factor of the $\U(M_j)$ gauge group associated with the $M_j$ D3 branes between the $j$th and the $(j+1)$st NS5 branes. We fix this ambiguity by adopting the convention that as in the second equation in \rf{momentsFI}, at a right NS5 interface there is no deformation of the moment map constraint, whereas as in the first equation in \rf{momentsFI}, at a left NS5 interface the moment map constraint is deformed by a central element given by the complex FI parameter.

Gauge symmetry is not broken at an NS5 interface. If $g^-$ and $g^+$ are the gauge transformation matrices on the left and the right of an NS5 interface located at $x^3=0$, then the bifundamental fields $I$ and $J$ at the interface transform under the gauge transformation as:
\beq
	(I, J) \mapsto (g^+(0) I g^-(0)^{-1},\, g^-(0) J g^+(0)^{-1}) \,.
\eeq

\subsection{Phase Spaces of BF Theories with Boundaries}
\subsubsection{BF Theories and Cherkis Bows}
Putting these all together, we create a 2d BF theory with $\GL_N$ gauge group as follows. We suspend $N$ D3 branes between $n$ D5 and $p$ NS5 branes. The boundary configurations are described by the linking numbers $\bm K = (K_1, \cdots, K_n)$ and $\bm L = (L_1, \cdots, L_p)$ satisfying $\sum_{i=1}^n K_i = \sum_{i=j}^p L_j = N$. When all the NS5s are positioned to the left of all the D5s, $K_i$ and $L_j$ are the net numbers of D3 branes ending on D5$_i$ and NS5$_j$ from the left and from the right respectively (cf. \rf{KLdef}). Additionally, we need to keep track of the locations $r_j^\bC$ of the NS5 branes in the $C$ direction. The theories will depend only on the differences between positions, namely (cf. \rf{rFI}, \rf{cFI}, and \rf{cr}):
\beq
	\hbar\vr_j^\bC = r_j^\bC - r_{j+1}^\bC\,, \qquad j \in \{1,\cdots,p-1\}\,.
\eeq
Also define:
\beq
	z := r_p^\bC\,.
\eeq
This is the spectral parameter associated to the line operator. Since this is coordinate dependent, the field theories, or the phase spaces associated to the line operators are independent of it. The BF theory constructed from such brane configurations is denoted by $T^\text{BF}_{\bm \vr}(\bm K, \bm L)$ where $\bm\vr := (\vr_1^\bC,\cdots,\vr_{p-1}^\bC)$.
\begin{figure}[H]
\begin{center}
\begin{tikzpicture}
\tikzmath{\r=.65; \a=60; \H=1.3;}
\node[draw=black, circle, minimum size=\r cm] (NS51) {};
\NScross{NS51};
\node[above=0cm of NS51] () {\scriptsize  NS5$_1$};
\node[right=.75*\r cm of NS51, circle, minimum size=\r cm] (NS5inv1) {};
\node[right=1.5*\r cm of NS51] (dots) {$\cdots$};
\draw (NS51.\a) -- (NS5inv1.{180-\a});
\draw (NS51.-\a) -- (NS5inv1.{180+\a});
\node[above right=-.8*\r cm and .5*\r cm of NS51] (vdots1) {$\vdots$};
\node[below=0cm of vdots1] () {\scriptsize $L_1\,$D3s};
\node[right=1.5*\r cm of dots, draw=black, circle, minimum size=\r cm] (NS5p-1) {};
\NScross{NS5p-1};
\node[above=0cm of NS5p-1] () {\scriptsize  NS5$_{p-1}$};
\node[left=.75*\r cm of NS5p-1, circle, minimum size=\r cm] (NS5inv2) {};
\draw (NS5inv2.\a) -- (NS5p-1.{180-\a});
\draw (NS5inv2.-\a) -- (NS5p-1.{180+\a});
\node[above left=-.8*\r cm and .5*\r cm of NS5p-1] (vdots2) {$\vdots$};
\node[below left=.8*\r cm and -.7*\r cm of vdots2] (sum1) {\scriptsize $\sum\limits_{j=1}^{p-2} L_j\,\text{D3s}$};
\draw[->] (sum1) to[out=60, in=270] ($(vdots2)-(0,.5)$);
\node[left=.75*\r cm of NS5p-1, circle, minimum size=\r cm] (NS5inv2) {};
\node[right=1.5*\r cm of NS5p-1, draw=black, circle, minimum size=\r cm] (NS5p) {};
\NScross{NS5p};
\node[above=0cm of NS5p] () {\scriptsize  NS5$_{p}$};
\draw (NS5p-1.\a) -- (NS5p.{180-\a});
\draw (NS5p-1.-\a) -- (NS5p.{180+\a});
\node[above left=-.8*\r cm and .7*\r cm of NS5p] (vdots3) {$\vdots$};
\node[below right=.8*\r cm and -.7*\r cm of vdots3] (sum2) {\scriptsize $\sum\limits_{j=1}^{p-1} L_j\, \text{D3s}$};
\draw[->] (sum2) to[out=120, in=280] ($(vdots3)-(0,.5)$);
\coordinate[right=3*\r cm of NS5p] (D51mid);
\node[above right=-.8*\r cm and 1.5*\r cm of NS5p] (vdots4) {$\vdots$};
\node[below=0cm of vdots4] () {\scriptsize $N\, \text{D3s}$};
\coordinate[above=\H cm of D51mid] (D51top);
\coordinate[below=\H cm of D51mid] (D51bot);
\node[below=0cm of D51bot] () {\scriptsize D5$_1$};
\draw (D51top) -- (D51bot);
\draw (NS5p.\a) -- (NS5p.\a -| D51mid);
\draw (NS5p.-\a) -- (NS5p.-\a -| D51mid);
\node[above right=.19*\H cm and .5*\r cm of D51mid] (vdotsD1) {$\vdots$};
\node[above right=-.3*\r cm and 1.2*\r cm of vdotsD1] (sumD1) {\scriptsize $\sum\limits_{i=2}^n K_i\,$D3s};
\draw[->] (sumD1) to[out=180, in=70] ($(vdotsD1)+(0,.5*\r)$);
\coordinate[right=1.5*\r cm of D51mid] (D5inv1mid);
\node[right=.5*\r cm of D5inv1mid] () {$\cdots$};
\draw ($(D51mid)+(0,.7*\H)$) -- ({$(D51mid)+(0,.7*\H)$} -| D5inv1mid);
\draw ($(D51mid)+(0,.1*\H)$) -- ({$(D51mid)+(0,.1*\H)$} -| D5inv1mid);
\coordinate[right=2*\r cm of D5inv1mid] (D5inv2mid);
\coordinate[right=1.5*\r cm of D5inv2mid] (D5n-1mid);
\draw ($(D5inv2mid)+(0,-.3*\H)$) -- ({$(D5inv2mid)+(0,-.3*\H)$} -| D5n-1mid);
\draw ($(D5inv2mid)+(0,.3*\H)$) -- ({$(D5inv2mid)+(0,.3*\H)$} -| D5n-1mid);
\coordinate[above=\H cm of D5n-1mid] (D5n-1top);
\coordinate[below=\H cm of D5n-1mid] (D5n-1bot);
\node[below=0cm of D5n-1bot] () {\scriptsize D5$_{n-1}$};
\draw (D5n-1top) -- (D5n-1bot);
\node[above left=-.22*\H cm and .5*\r cm of D5n-1mid] (vdotsD2) {$\vdots$};
\node[below left=.9*\r cm and -.5*\r cm of vdotsD2] (sumD2) {\scriptsize $(K_{n-1}+K_n)$D3s};
\draw[->] (sumD2) to[out=60, in=240] ($(vdotsD2)-(0,.9*\r)$);
\coordinate[right=2*\r cm of D5n-1mid] (D5nmid);
\coordinate[above=\H cm of D5nmid] (D5ntop);
\coordinate[below=\H cm of D5nmid] (D5nbot);
\node[below=0cm of D5nbot] () {\scriptsize D5$_n$};
\draw (D5ntop) -- (D5nbot);
\draw ($(D5n-1mid)+(0,-.7*\H)$) -- ({$(D5n-1mid)+(0,-.7*\H)$} -| D5nmid);
\draw ($(D5n-1mid)+(0,-.1*\H)$) -- ({$(D5n-1mid)+(0,-.1*\H)$} -| D5nmid);
\node[below left=.02*\H cm and .7*\r cm of D5nmid] (vdotsD3) {$\vdots$};
\node[above=0*\r cm of vdotsD3] () {\scriptsize $K_n\,$D3s};
\coordinate[above right=1*\r cm and 1.5*\r cm of D5nbot] (origin);
\coordinate[right=2*\r cm of origin] (X);
\coordinate[above=2*\r cm of origin] (Y);
\draw[->, black!50] (origin) -- (X);
\node[below=0cm of X] () {\color{black!50} \scriptsize $x^3$};
\draw[->, black!50] (origin) -- (Y);
\node[right=0cm of Y] () {\color{black!50} \scriptsize $C$};
\end{tikzpicture}
\end{center}
\caption{Brane configuration for the $\GL_N$ BF theory $T^\text{BF}_{ \bm\vr}(\bm K, \bm L)$ on an interval labeled by the linking numbers $\bm K$ and $\bm L$. This figure is a duplicate of Fig. \ref{fig:LOKL}.}
\label{fig:D3D5NS5}\end{figure}
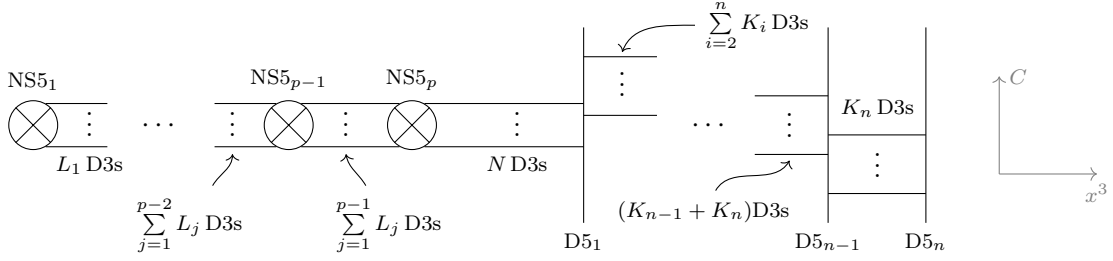
\noindent

The worldvolume theory of the D3 branes is the 4d $\cN=4$ $\U(N)$ SYM theory with two 1/2-BPS boundaries. Turning on $\Om$-deformation localizes the 4d theory to the 2d BF theory with an adjoint scalar $\si$ and a gauge field $\cA$. We associate the following bow diagram -- denoted by $\text{Bow}_{ \bm\vr}(\bm K, \bm L)$ -- to this BF theory:
\begin{figure}[H]
\beq
\begin{tikzpicture}[baseline=-.7ex]
\tikzmath{\d=.2; \e=.07; \a=.5; \s=1.3;}
\node[circle, draw=black] (NS51) {};
\node[above=0cm of NS51] () {$r_1^\bC$};
\node[above right=-.13cm and \d cm of NS51] (M1) {$M_1$};
\node[below right=-.13cm and \d cm of M1, circle, draw=black] (NS52) {};
\node[above=0cm of NS52] () {$r_2^\bC$};
\path[draw=black, snake it] (NS51) -- (NS52);
\node[above right=-.13cm and \e cm of NS52] (M2) {$M_2$};
\node[right=1cm of NS52] (dot1) {$\cdots$};
\path[draw=black, snake it] (NS52) -- (dot1);
\node[right=1cm of dot1, circle, draw=black] (NS5p) {};
\node[above=0cm of NS5p] () {$r_p^\bC$};
\path [draw=black, snake it] (dot1) -- (NS5p);
\node [above left=-.13cm and \e cm of NS5p] (Mp-1) {$M_{p-1}$};
\node [above right=-.13cm and .5 cm of NS5p] (Mp) {$N$};
\coordinate[right=1.5cm of NS5p] (D51);
\node[right=\s cm of D51] (dot2) {$\cdots$};
\coordinate[right=\s cm of dot2] (D5n-1);
\coordinate[right=\s cm of D5n-1] (D5n);
\path[draw=black, snake it] (NS5p) -- (D51);
\path[draw=black, snake it] (D51) -- (dot2);
\path[draw=black, snake it] (dot2) -- (D5n-1);
\path[draw=black, snake it] (D5n-1) -- (D5n);
\node at (D51) {$\Cross$};
\node at (D5n-1) {$\Cross$};
\node at (D5n) {$\Cross$};
\node[above right=0cm and .3cm of D51] () {$N_2$};
\node[above left=0cm and .1cm of D5n-1] () {$N_{n-1}$};
\node[above right=0cm and .3cm of D5n-1] () {$N_n$};
\end{tikzpicture}
\nn
\eeq
\caption{The bow diagram $\text{Bow}_{ \bm\vr}(\bm K, \bm L)$ defined by $\bm K=(K_1, \cdots K_n)$ and $\bm L=(L_1, \cdots L_p)$  satisfying $N_i = \sum_{j=i}^n K_j$ and $M_i = \sum_{j=1}^i L_j$ with $M_p = N_1 = N$. $\bm \vr = (\vr_1^\bC, \cdots, \vr_{p-1}^\bC)$ contains deformation parameters where $\hbar\vr_j^\bC = r_j^\bC - r_{j+1}^\bC$.}
\label{fig:bow}
\end{figure}
\noindent A wavy line segment with the label $m$ corresponds to a $\GL_m$ BF theory between two interfaces. We gauge away the component of the gauge field parallel to the interfaces. Then each line segment represents the equation of motion $\cD_3 \si=0$ where $\cD_3$ is the gauge covariant derivative in the direction normal to the interfaces. At the D5 and NS5 interfaces boundary conditions are imposed as described in the previous sections. The space of solutions to the equation of motion subjected to the boundary conditions modulo gauge transformations is called the {Cherkis Bow Variety} as defined in \cite{Cherkis:2010bn, Nakajima:2016guo}, which we denote by $\cM^\text{bow}_{ \bm\vr}(\bm K, \bm L)$.
\beq
	\cP^\text{BF}_{ \bm\vr}(\bm K, \bm L) := \text{Phase space of } T^\text{BF}_{ \bm\vr}(\bm K, \bm L) = \cM^\text{bow}_{ \bm\vr}(\bm K, \bm L)\,. \label{PBF=Mbow}
\eeq

\subsubsection{BF Theories and Branches of Vacua of 3d $\cN=4$ Theories} \label{sec:BF3dvacua}
We have constructed the BF theory on an interval by applying $\Om$-deformation to a 4d theory on an interval. By the topological symmetry of the 4d theory, we can in principle consider the limit where the interval shrinks to zero and we have an $\Om$-deformed 3d theory which localizes to a TQM. Of course, the phase space remains invariant under this scaling and the phase space of the TQM is the same as the phase space of the BF theory. The problem in implementing this in general is that we do not have a concrete description of the effective 3d theory for arbitrary linking numbers $\bm K$ and $\bm L$ of the five-branes in the brane construction such as in Fig. \ref{fig:D3D5NS5}. Regardless, in \S\ref{sec:branes} we gave the name $T^\vee[\U(N)]_{\bm K}^{\bm L}$ to this 3d theory which has $\cN=4$ supersymmetry. We know on general grounds \cite{Yagi:2014toa, Ishtiaque:2020ufn} that a 3d $\cN=4$ theory reduces, upon the B-type $\Om$-deformation, to a TQM whose phase space is the Higgs branch of the 3d theory. In other words,
\beq
	\cP^\text{BF}_{ \bm\vr}(\bm K, \bm L) = \cM_H(T^\vee_{ \bm\vr}[\U(N)]_{\bm K}^{\bm L})\,. \label{PBF=MH}
\eeq
Using mirror symmetry, and denoting the mirror of $T^\vee_{ \bm\vr}[\U(N)]_{\bm K}^{\bm L}$ by $T_{ \bm\vr}[\U(N)]_{\bm K}^{\bm L}$ we get the same BF phase space as a Coulomb branch:
\beq
	\cP^\text{BF}_{ \bm\vr}(\bm K, \bm L) = \cM_C(T_{ \bm\vr}[\U(N)]_{\bm K}^{\bm L}) := \text{Coulomb branch of } T_{ \bm\vr}[\U(N)]_{\bm K}^{\bm L}\,. \label{PBF=MC}
\eeq
Note that mirror symmetry exchanges FI parameters with twisted masses \cite{Intriligator:1996ex} and so in the mirror theory $T_{ \bm\vr}[\U(N)]_{\bm K}^{\bm L}$, the parameters $\hbar\bm \vr$ are twisted masses. Mirror symmetry is realized in type IIB string theory as S-duality \cite{Hanany:1996ie} which changes D5 branes to NS5s, NS5s to D5s, and leaves the D3 branes as D3s. Once again, for generic linking numbers of the five-branes, neither $T^\vee_{ \bm\vr}[\U(N)]_{\bm K}^{\bm L}$ nor $T_{ \bm\vr}[\U(N)]_{\bm K}^{\bm L}$ has a quiver description and we can not give a more concrete description of their branches of vacua than saying that they are the Cherkis bow varieties we have computed in the last subsection.

To make contact with existing results about vacuum branches, we now specialize to brane configurations where $T_{ \bm\vr}[\U(N)]_{\bm K}^{\bm L}$ has a quiver description. The requirements are similar to \rf{cobalanced} of \S\ref{sec:qHiggs}. In that section we looked at example where $T^\vee_{ \bm\vr}[\U(N)]_{\bm K}^{\bm L}$ has a quiver description, now we need the ``mirror'' constraints on the linking numbers, namely:\footnote{The bow varieties constructed with these constraints are given by what are called {balanced dimension vectors} in \cite{Nakajima:2016guo}.}
\beq\begin{gathered}
	0 < L_j < n\,, \qquad 1 \le j \le p\\
	v_i := -\sum_{j=1}^i K_j + \sum_{j=1}^{n-1} \min(i,j) w_j \ge 0 \,, \qquad 1 \le j < n \\
	\text{where,} \quad w_i := \#\{L_j\, |\, L_j=i\}\,.
\end{gathered}\label{balanced}\eeq
With these constraints, we can bring the NS5 branes between D5 branes such that there are equal numbers of D3 branes on both sides of every NS5 brane. Then by applying S-duality we find the 3d $\cN=4$ theory $ T_{ \bm\vr}[\U(N)]_{\bm K}^{\bm L}$ defined by the quiver in Fig. \ref{fig:TCquiver}.
\begin{figure}[h]
\begin{center}
\begin{tikzpicture}
\newcommand\Square[1]{+(-#1,-#1) rectangle +(#1,#1)}
\tikzmath{\d=1.8; \r=.5; \s=.5;}
\coordinate[draw=black] (n1);
\coordinate[right=\d cm of n1, draw=black] (n2);
\node[right=.75*\d cm of n2] (dot) {$\cdots$};
\coordinate[right=.75*\d cm of dot, draw=black] (nn-1);
\path[draw=black] (n1) -- (n2);
\path[draw=black] (n2) -- (dot);
\path[draw=black] (dot) -- (nn-1);
\node[below=.75*\d cm of dot] () {$\cdots$};
\coordinate[below=\d cm of n1] (f1);
\coordinate[below=\d cm of n2] (f2);
\coordinate[below=\d cm of nn-1] (fn-1);
\path[draw=black] (n1) -- (f1);
\path[draw=black] (n2) -- (f2);
\path[draw=black] (nn-1) -- (fn-1);
\filldraw[color=black, fill=white] (n1) circle (\r);
\filldraw[color=black, fill=white] (n2) circle (\r);
\filldraw[color=black, fill=white] (nn-1) circle (\r);
\filldraw[color=black, fill=white] (f1) \Square{\s cm};
\filldraw[color=black, fill=white] (f2) \Square{\s cm};
\filldraw[color=black, fill=white] (fn-1) \Square{\s cm};
\node at (n1) {$v_1$};
\node at (n2) {$v_2$};
\node at (nn-1) {$v_{n-1}$};
\node at (f1) {$w_1$};
\node at (f2) {$w_2$};
\node at (fn-1) {$w_{n-1}$};
\end{tikzpicture}
\end{center}
\caption{Quiver for the 3d $\cN=4$ theory $T_{ \bm\vr}[\U(N)]_{\bm K}^{\bm L}$  with $N = \sum_{i=1}^n K_i = \sum_{j=1}^p L_j$. The ranks of the gauge and flavor groups are defined form  $\bm K$ and $\bm L$  via \rf{balanced}. $\bm\vr=(\vr_1^\bC, \cdots, \vr_{p-1}^\bC)$ where $\hbar\vr_j^\bC$s are complex twisted masses for the flavor symmetry. The mirror of this theory is denoted by $ T^\vee_{ \bm\vr}[\U(N)]_{\bm K}^{\bm L}$, if this also admits a quiver description then the quiver is given by Fig. \ref{fig:THquiver}.}
\label{fig:TCquiver}
\end{figure}
\noindent This is a gauge theory with the gauge group $\prod_{i=1}^{n-1} \U(v_i)$ and flavor group $\prod_{i=1}^{n-1} \SU(w_i)$. An edge between two nodes corresponding to two groups $\U(v)$ and $\U(w)$ represents a bifundamental hypermultiplet transforming under the representation $T^*\Hom(\bC^v, \bC^w)$ of $\U(v) \times \U(w)$. It was proved in \cite{Nakajima:2016guo} that the Coulomb branch of this theory is a bow variety:
\beq
	\cM_C(T_{ \bm\vr}[\U(N)]_{\bm K}^{\bm L}) = \cM^\text{bow}_{ \bm\vr}(\bm K, \bm L)\,.
\eeq
Here $\cM^\text{bow}_{ \bm\vr}(\bm K, \bm L)$ is precisely the bow variety associated to the bow diagram of Fig. \ref{fig:bow}. The relation \rf{PBF=MC} between BF phase spaces and Coulomb branches now implies our earlier claim \rf{PBF=Mbow}. Of course, our claim is that \rf{PBF=Mbow} holds true for more general linking numbers, even when the 3d theories involved have no quiver descriptions.

\section{Line Operators in 4d Chern-Simons Theory} \label{sec:lineop}

The line operator $\bL_{\vr}(\bm K, \bm L)$, created by the brane configuration of Fig. \ref{fig:D3D5NS5}, appears in 4d CS theory as a coupled TQM whose target space is the phase space of the BF theory $T^\text{BF}_{ \bm\vr}(\bm K, \bm L)$. Thus,
\beq
	\text{Phase space of } \bL_{\vr}(\bm K, \bm L) = \cP^\text{BF}_{ \bm\vr}(\bm K, \bm L) = \cM^\text{bow}_{ \bm\vr}(\bm K, \bm L)\,.
\eeq
The equality between BF phase spaces and bow varieties is from \rf{PBF=Mbow}. The TQM quantizes the algebra of functions on this phase space. So this algebra $\cA_{\vr}(\bm K, \bm L)$, which couples to the line operator, can be characterized as the deformation quantization of functions on Bow varieties:
\beq
	\cA_{\vr}(\bm K, \bm L) = \text{Deformation quantization of } \bC[\cM^\text{bow}_{ \bm\vr}(\bm K, \bm L)]\,.
\eeq

This algebra will show up in the study of integrable spin chains as follows. Consider a $\gl_n$ spin chain of length $L$ consisting of spins $\cR_1(z_1)\,, \cdots\,, \cR_L(z_L)$. Here $\cR_i(z_i)$ is the evaluation module of the Yangian of $\gl_n$ associated with the $\gl_n$ module $\cR_i$ and spectral parameter $z_i$. The Hilbert space of the spin chain is the tensor product $\cH := \bigotimes_{i=1}^L \cR_i(z_i)$. Following \cite{Costello:2017dso}, this spin chain can be created in 4d $\GL_n$ CS theory by putting $L$ parallel Wilson lines carrying the representations $\cR_i$. The Wilson lines are along the topological plane of 4d CS and have fixed locations $z_i$ in the holomorphic $\bC$ direction. Viewing these Wilson lines as vertical we can introduce  $\bL_{\vr}^{z_0}(\bm K, \bm L)$ as a horizontal line operator, as in Fig. \ref{fig:LKLmono}. Here $z_0$ is simply the location of the operator in the $\bC$ direction.
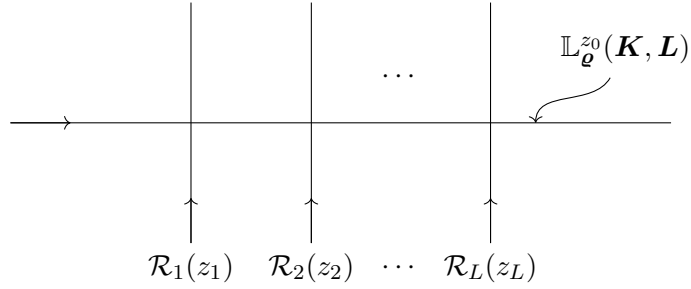
\begin{figure}[h]
\begin{center}
\begin{tikzpicture}
\tikzmath{\l=.8;}
\coordinate (LA);
\coordinate[right=3*\l cm of LA] (vcent1);
\coordinate[right=2*\l cm of vcent1] (vcent2);
\node[above right=.5*\l cm and 1*\l cm of vcent2] (dots) {$\cdots$};
\coordinate[below right=.5*\l cm and 1*\l cm of dots] (vcentL);
\coordinate[below=2*\l cm of vcent1] (vbot1);
\coordinate[above=2*\l cm of vcent1] (vtop1);
\coordinate[below=2*\l cm of vcent2] (vbot2);
\coordinate[above=2*\l cm of vcent2] (vtop2);
\coordinate[below=2*\l cm of vcentL] (vbotL);
\coordinate[above=2*\l cm of vcentL] (vtopL);
\draw (vbot1) -- (vtop1);
\coordinate[above=.75*\l cm of vbot1] (varrowtop1);
\draw[->] (vbot1) -- (varrowtop1);
\draw (vbot2) -- (vtop2);
\coordinate[above=.75*\l cm of vbot2] (varrowtop2);
\draw[->] (vbot2) -- (varrowtop2);
\draw (vbotL) -- (vtopL);
\coordinate[above=.75*\l cm of vbotL] (varrowtopL);
\draw[->] (vbotL) -- (varrowtopL);
\coordinate[right=3*\l of vcentL] (RA);
\draw (LA) -- (RA);
\coordinate[right=1*\l cm of LA] (harrowtop);
\draw[->] (LA) -- (harrowtop);
\node[below=0cm of vbot1] () {$\cR_1(z_1)$};
\node[below=0cm of vbot2] (R2) {$\cR_2(z_2)$};
\node[below=0cm of vbotL] (RL) {$\cR_L(z_L)$};
\node () at ($(R2)!0.5!(RL)$) {$\cdots$};
\node[above right=.75*\l cm and 1*\l cm of vcentL] (Llabel) {$\bL_{\bm\vr}^{z_0}(\bm K, \bm L)$};
\coordinate[right = .75*\l cm of vcentL] (arrowtip);
\draw[->] (Llabel) to[out=-120, in=90] (arrowtip);
\end{tikzpicture}
\end{center}
\caption{Schematic diagram of a $\gl_n$ spin chain with Hilbert space $\cH = \bigotimes_{i=1}^L \cR_i(z_i)$, realized in 4d $\GL_n$ CS theory as an arrangement of vertical Wilson lines. $\bL_{\vr}^{z_0}(\bm K, \bm L)$ has been introduced as a horizontal line operator to create the monodromy matrix $\cL_{\vr}^{z_0}(\bm K, \bm L|\cH)$.}
\label{fig:LKLmono}
\end{figure}
This horizontal line operator creates a {monodromy matrix} which we denote by $\cL_{\vr}^{z_0}(\bm K, \bm L|\cH)$. This monodromy matrix acts on an extended space $V \otimes \cH$ where the auxiliary space $V$ is some module for $\cA_{\vr}(\bm K, \bm L)$. Without studying {geometric quantization} of $\cM^\text{bow}_{ \bm\vr}(\bm K, \bm L)$ we can not say exactly which module to assign to this line operator. What we can say at this stage is which operator algebra the monodromy matrix belongs to:
\beq
	\cL_{\vr}^{z_0}(\bm K, \bm L|\cH) \in \cA_{\vr}(\bm K, \bm L) \otimes \End(\cH)\,. \label{AEndH}
\eeq
In other words, the monodromy matrix can be seen as a matrix that acts on $\cH$ with entries in the algebra $\cA_{\vr}(\bm K, \bm L)$. Taking a partial trace of the monodromy matrix over the $\cA_{\vr}(\bm K, \bm L)$-module produces a {transfer matrix}. Some closely related traces were studied recently in \cite{Dedushenko:2020vgd, Dedushenko:2020yzd}. Note that while the monodromy matrix itself will generally depend on the spectral parameter $z_0$ of the horizontal line, the algebra $\cA_{\vr}(\bm K, \bm L)$ is independent of it.\footnote{If we treat the spectral parameter $z$ as a formal variable, as opposed to a complex number, then there may be a factor of $\bC[z]$ in the algebra, for example see \S\ref{sec:T}.}

Integrability of the spin chain means that the monodromy matrices satisfy the RTT relations given the R-matrix. In \cite{Bazhanov:2010jq} the authors explicitly constructed a class of monodromy matrices that are linear in $z_0$ for the Hilbert space $\cH = \cR_1(z_1) = \bC^n$ being the fundamental representation of $\gl_n$. Examples of these monodromy matrices include the T, Q, and the L-operators corresponding to elements of $U(\gl_n) \otimes \End(\bC^n)$, $\text{Weyl}^{\otimes k(n-k)} \otimes \End(\bC^n)$, and $U(\gl_k) \otimes \text{Weyl}^{\otimes k(n-k)} \otimes \End(\bC^n)$ respectively. Below we describe the bow varieties and the quivers defining the line operators that create monodromy matrices associated with the algebras $U(\gl_n)$, $\text{Weyl}^{\otimes k(n-k)}$, and $U(\gl_k) \otimes \text{Weyl}^{\otimes k(n-k)}$. We conjecture that they are in fact the T, Q, and L-operators from the literature.

By some abuse of notation, in the following we use the terms T, Q, and L-operators to refer to the horizontal line operator from Fig. \ref{fig:LKLmono} for certain $\bm K$ and $\bm L$ and not the corresponding element of $\cA_{\vr}(\bm K, \bm L) \otimes \End(\cH)$.


\subsection{Example: The T-Operators (Wilson Lines)}\label{sec:T}
A basic example of line operators in 4d $\GL_n$ CS theory is the T-operator. In our notation \rf{LOData} a T-operator $\bL_{\bm\vr}^z(\bm K, \bm L)$ is characterized by having
\beq
	\bm K = \bm L = (\overbrace{1, \cdots, 1}^n)\,. \label{TKL}
\eeq
Given the mass parameters\footnote{Mass parameters can be either twisted masses or FI parameters depending on the duality frame.} $\bm\vr = (\vr_1^\bC, \cdots, \vr_{n-1}^\bC)$, the phase space of this operator is the Coulomb branch of the 3d $\mathcal N=4$ theory $T_{\bm\vr}[\U(N)]_{(1,\cdots, 1)}^{(1,\cdots, 1)}$. This is a much studied theory in the 3d $\cN=4$ literature and it is often denoted simply as $T_{\bm\vr}[\U(n)]$, which we adopt in this section.  We shall prove at the end of this section that
\begin{prop}\label{prop:Tweight}
The T-operator, i.e., $\bL_{\bm\vr}^z(\bm K, \bm L)$ for $\bm K$ and $\bm L$ given by \rf{TKL} carries a $\gl_n$ Verma module. Up to the action of the Weyl group, the Verma module has the highest weight $\la - \rho$. The weight $\la$ is determined by its Dynkin labels $(\vr_1^\bC, \cdots, \vr_{n-1}^\bC)$, and the \emph{Weyl vector} $\rho = \frac{1}{2} \sum_{\al \in \De^+} \al = (\frac{n-1}{2}, \frac{n-3}{2}, \cdots, -\frac{n-1}{2})$ is the half-sum of the positive roots.
\end{prop}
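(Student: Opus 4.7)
The plan is to reduce this case to the well-studied $T[\SU(n)]$ theory. First I would verify that $\bm K = \bm L = (1,\ldots,1)$ satisfies the balanced constraints \rf{balanced}, giving $w_1 = n$, $w_i = 0$ for $i \geq 2$, and $v_i = n-i$. The quiver of $T_{\bm\vr}[\U(N)]_{\bm K}^{\bm L}$ (with $N=n$) in Fig. \ref{fig:TCquiver} then becomes the standard triangular quiver $\U(1) - \U(2) - \cdots - \U(n-1) - [\SU(n)]$ with complex twisted masses $\vr_j^\bC$ for the $\SU(n)$ flavor symmetry. Combining our identification \rf{PBF=MC} with the classical result that $\cM_C(T[\SU(n)])$ is the nilpotent cone of $\gl_n$ (or, with $\bm\vr$ generic, the universal family of regular semisimple coadjoint orbits), it follows that $\cM^\text{bow}_{\bm\vr}(\bm K, \bm L)$ is this deformed nilpotent cone and hence its deformation quantization is $\cA_{\bm\vr}(\bm K, \bm L) \cong U(\gl_n)$ specialized at a central character determined by $\bm\vr$ (see e.g. \cite{Bullimore:2015lsa, Braverman:2016pwk}).

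Next, I would establish that the module carried by the line operator is a Verma module. The TQM coupled to the 4d CS theory has its Hilbert space given by geometric quantization of the bow variety in the $\Om$-background; by standard fixed-point localization (as in \cite{Nekrasov:2010ka}) this Hilbert space is identified with the equivariant cohomology of the Springer resolution $T^*\text{Fl}_n$ of the nilpotent cone, with a natural basis indexed by torus fixed points. For this quiver the fixed points form a single Weyl orbit, and the induced $U(\gl_n)$-action exhibits the freeness characteristic of a Verma module: the lowering operators act without relations on a cyclic highest-weight vector corresponding to the fixed point compatible with our chosen Borel. This identification is essentially the geometric realization of principal-series / Verma modules used in the Bethe/gauge dictionary.

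The main obstacle, and the step requiring most care, is the identification of the highest weight and in particular the origin of the $-\rho$ shift. The parameters $\vr_j^\bC$ enter as shifts of the quantized Coulomb-branch moment map, and under the Harish-Chandra isomorphism they translate into Dynkin labels of a weight $\la$. The shift by $\rho$ arises from the standard normal-ordering discrepancy between the classical Coulomb-branch parameters and the Harish-Chandra-normalized highest weights; geometrically it can be traced to the equivariant contribution of the tangent bundle at the fixed point under the fixed-point formula in equivariant geometric quantization (equivalently, to the half-density/canonical-bundle correction). Since the Weyl group acts by permutations of the fixed points, corresponding to different choices of Borel, the weight $\la-\rho$ is only determined up to the Weyl action, matching the statement of the proposition.
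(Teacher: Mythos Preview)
Your overall plan is reasonable in outline but takes a different route from the paper and leaves the crucial quantitative step unsubstantiated.

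The paper does not work on the Coulomb side at all. It passes to the mirror $T^\vee_{\bm\vr}[\U(n)]$, writes the quantized Higgs branch algebra $\cA_n$ as a quantum Hamiltonian reduction of a tensor product of Weyl algebras with explicit generators $I_k,J_k$ and Weyl-ordered moment maps, and then proves by induction (Lemma \ref{lemma:Talgebra}) that $\cA_n \cong U_\hbar(\gl_n)\otimes_{U_\hbar(\gl_n)^{\GL_n}} \bC[r_1,\dots,r_n,\hbar]$ with the central character fixed by the Capelli determinant identity $C(X,u)=\prod_k(u-r_k-\tfrac{n-1}{2}\hbar)$. The highest weight $\la-\rho$ then drops out by comparing this formula with the action of $C(X,u)$ on a highest-weight vector; the $\rho$ shift is not invoked from general principles but is read off directly from the Capelli computation (and traces back to the specific choice of Weyl ordering in the quantum moment map, as the paper notes in a footnote).

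By contrast, your plan stays on the Coulomb side, appeals to geometric quantization of $T^*\mathrm{Fl}_n$ via equivariant cohomology, and attributes the $\rho$ shift to a generic ``normal-ordering / half-density'' correction. This is the step where your argument has a genuine gap: you have not exhibited any computation that pins down the central character, nor shown that the $\bm\vr$ parameters land precisely as Dynkin labels of $\la$ (rather than of $\la-\rho$ or of some other affine shift). The size and sign of the shift depend on ordering conventions that you have not fixed, and in the literature on quantized Coulomb branches the dictionary between mass parameters and Harish-Chandra parameters is exactly where conventions diverge. Without an explicit calculation analogous to the paper's Capelli determinant identity, your plan identifies the algebra and the module type but does not prove the stated formula for the highest weight.

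Separately, your argument that the module is Verma (via torus fixed points on the Springer resolution and freeness of lowering operators) is actually more ambitious than what the paper proves: the paper explicitly \emph{assumes} the representation is of highest-weight type and only computes the central character. If you can make the equivariant-cohomology argument precise it would strengthen the result, but as written it is a sketch rather than a proof.
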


To prove this proposition we shall compute the values of the Casimirs of $\gl_n$ take in these modules. This gives us a Verma module assuming the representation is of the highest weight type. More general representations are possible but not considered in this paper.

\smallskip

The theory $T_{\bm\vr}[\U(n)]$ can be described by the following quiver. 
\begin{figure}[H]
\begin{center}
\begin{tikzpicture}
\newcommand\Square[1]{+(-#1,-#1) rectangle +(#1,#1)}
\tikzmath{\d=1.8; \r=.55; \s=.5;}
\coordinate[draw=black] (n1);
\coordinate[right=\d cm of n1, draw=black] (n2);
\node[right=.75*\d cm of n2] (dot) {$\cdots$};
\coordinate[right=.75*\d cm of dot, draw=black] (nn-1);
\path[draw=black] (n1) -- (n2);
\path[draw=black] (n2) -- (dot);
\path[draw=black] (dot) -- (nn-1);
\coordinate[below=\d cm of n1] (f1);
\coordinate[below=\d cm of n2] (f2);
\coordinate[below=\d cm of nn-1] (fn-1);
\path[draw=black] (n1) -- (f1);
\filldraw[color=black, fill=white] (n1) circle (\r);
\filldraw[color=black, fill=white] (n2) circle (\r);
\filldraw[color=black, fill=white] (nn-1) circle (\r);
\filldraw[color=black, fill=white] (f1) \Square{\s cm};
\node at (n1) {$n-1$};
\node at (n2) {$n-2$};
\node at (nn-1) {$1$};
\node at (f1) {$n$};
\end{tikzpicture}
\end{center}
\caption{The quiver for the 3d $\cN=4$ theory $T_{\vr}[\U(n)]$, whose Coulomb branch is the phase space for the T-operator.}
\label{fig:TUnQ}
\end{figure}
It is well-known that the Coulomb branch of $T_{\bm\vr}[\U(n)]$ is a deformation of the nilpotent cone $\mathcal N_{\gl_n}$ \cite{Braverman:2016pwk}. If we turn on all complex twisted masses and treat them as undetermined parameters, then the Coulomb branch is $\gl_n^*$,\footnote{Using \cite[Theorem 7.6.1]{2003math......4173A} and \cite[Theorem 2.11]{Braverman:2017ofm}.} and it quantizes to the universal enveloping algebra $U_{\hbar}(\gl_n)$ \cite[Example 6.2]{Moosavian:2021ibw}. Incorporate the mass parameters $\bm\vr$ and the spectral parameter $z$ into the following complex numbers:
\beq
	r_i := z + \hbar \sum_{j=i}^{n-1} \vr_j^\bC\,, \qquad 1 \le i \le n\,. \label{rrho}
\eeq
Then the Coulomb branch can be equivalently described by the following bow diagram.
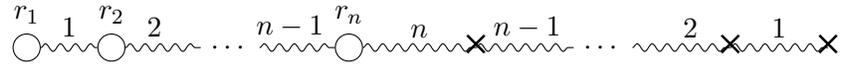
\begin{figure}[H]
\begin{center}
    \begin{tikzpicture}[baseline=-.7ex]
\tikzmath{\d=.2; \e=.07; \a=.5; \s=1.3;}
\node[circle, draw=black] (NS51) {};
\node[above=0cm of NS51] () {$r_1$};
\node[above right=-.6*\d cm and \d cm of NS51] (M1) {$1$};
\node[below right=-.6*\d cm and \d cm of M1, circle, draw=black] (NS52) {};
\node[above=0cm of NS52] () {$r_2$};
\path[draw=black, snake it] (NS51) -- (NS52);
\node[above right=-.6*\d cm and 3*\e cm of NS52] (M2) {$2$};
\node[right=1cm of NS52] (dot1) {$\cdots$};
\path[draw=black, snake it] (NS52) -- (dot1);

\node[right=1cm of dot1, circle, draw=black] (NS5p) {};
\node[above=0cm of NS5p] () {$r_n$};
\path [draw=black, snake it] (dot1) -- (NS5p);
\node [above left=-.6*\d cm and \e cm of NS5p] (Mp-1) {$n-1$};

\coordinate[right=1.5cm of NS5p] (D51);
\node[right=\s cm of D51] (dot2) {$\cdots$};
\coordinate[right=\s cm of dot2] (D5N-1);
\coordinate[right=\s cm of D5N-1] (D5N);
\path[draw=black, snake it] (NS5p) -- (D51);
\path[draw=black, snake it] (D51) -- (dot2);
\path[draw=black, snake it] (dot2) -- (D5N-1);
\path[draw=black, snake it] (D5N-1) -- (D5N);
\node () at ($(D5N-1)!0.5!(D5N) + (0, 3.75*\e)$) {$1$};
\node at (D51) {$\Cross$};
\node at (D5N-1) {$\Cross$};
\node at (D5N) {$\Cross$};
\node[above left=0cm and .5cm of D51] () {$n$};
\node[above right=0cm and .1cm of D51] () {$n-1$};
\node[above left=0cm and .3cm of D5N-1] () {$2$};
\end{tikzpicture}
\end{center}
\caption{Bow diagram whose associated bow variety is the Coulomb branch of $T_{\bm\vr}[\U(n)]$ from Fig. \ref{fig:TUnQ}.}
\end{figure}
The theory $T_{\bm\vr}[\U(n)]$ is self mirror and so the dual theory $T^{\vee}_{\bm\vr}[\U(n)]$ is also described by the same quiver as in Fig. \ref{fig:TUnQ}. Let $\mathcal A_n$ be the quantized Higgs branch algebra of $T^{\vee}_{\bm\vr}[\U(n)]$, then $\mathcal A_n$ is the quantum Hamiltonian reduction of $\bC[r_n] \otimes \bigotimes_{k=1}^{n-1} \Weyl_\hbar^{\otimes k(k+1)}$ with respect to the action of $\bigoplus_{k=1}^{n-1} \gl_k$. Let us introduce the variables:
\beq
	I_k\in \mathrm{Hom}(\mathbb C^{k-1},\mathbb C^{k})\,, \quad J_k\in \mathrm{Hom}(\mathbb C^{k},\mathbb C^{k-1})\,, \qquad 1\le k\le n-1
\eeq
satisfying the Weyl algebra commutation relations:
\begin{align}
    [J^{\alpha}_{k,j},I^i_{k,\beta}]=\hbar \delta^{\alpha}_{\beta}\delta^i_j\,, \qquad 1\le i,j\le k, \quad 1\le \alpha, \be\le k-1\,.
\end{align}
We shall use {Weyl ordering} to promote functions of classical variables into functions of operators:
\beq
	(PQ)_W := \frac{1}{2}(PQ + QP)\,. \label{Weyl}
\eeq
Now we can write down the quantum moment map equations for the Hamiltonian reduction:
\begin{align}
    (I^i_{k,\alpha}J^{\alpha}_{k,j})_W - (I^a_{k+1,j}J^i_{k+1,a})_W + \hbar \vr_k^\bC\delta^i_j = I^i_{k,\alpha}J^{\alpha}_{k,j} - I^a_{k+1,j}J^i_{k+1,a} - \hbar \de^i_j + \hbar \vr_k^\bC\delta^i_j = 0\,. \label{qmoment}
\end{align}

\begin{lemma}\label{lemma:Talgebra}
    Treat the $r_i$'s from \rf{rrho} as formal variables, then $\mathcal A_n$ is isomorphic to the centrally extended universal enveloping algebra
\begin{align}
    U_{\hbar}(\mathfrak{gl}_n)\otimes_{U_{\hbar}(\mathfrak{gl}_n)^{\mathrm{GL}_n}}\mathbb C[r_1,\cdots,r_n,\hbar],
\end{align}
where the map $U_{\hbar}(\mathfrak{gl}_n)^{\mathrm{GL}_n}\to \mathbb C[r_1,\cdots,r_n,\hbar]$ is given by 
\begin{align}
    C(X,u)=\prod_{k=1}^n \left(u- r_k - \frac{n-1}{2}\hbar \right)\,.
\end{align}
Here $C(X,u)$ is the {Capelli determinant}\footnote{The Capelli determinant $C(B,u)$ of operators $B^i_j$ satisfying $\mathfrak{gl}_n$ commutation relations $[B^i_j,B^l_m]=\hbar\delta^l_jB^i_m-\hbar\delta^i_mB^l_j$ is defined as $C(B,u)=\sum_{\sigma\in \mathfrak S_n}\mathrm{sgn}(\sigma)(u-(n-1)\hbar-B)^{\sigma(1)}_{1}\cdots (u-B)^{\sigma(n)}_{n}$.} of the generators $X^i_j$ of $U_{\hbar}(\mathfrak{gl}_n)$ with standard commutation relations $[X^i_j,X^l_m]=\hbar \delta^l_jX^i_m-\hbar\delta^i_mX^l_j$. Moreover, the explicit isomorphism is given by
\begin{align}
    X^i_j\mapsto (I^i_{n,\al} J^\al_{n,j})_W + r_n \de^i_j = I^i_{n,\al} J^\al_{n,j} + \frac{\hbar}{2} (n-1) + r_n \de^i_j\,. \label{Xiso}
\end{align}
\end{lemma}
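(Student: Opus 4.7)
The plan is to prove the isomorphism in three steps. First, I would verify that \rf{Xiso} defines an algebra homomorphism $\phi : U_\hbar(\gl_n) \to \mathcal A_n$. Because the additive constants $r_n \de^i_j$ and $\frac{\hbar(n-1)}{2}\de^i_j$ are central, verifying the $\gl_n$ relations reduces to showing
\begin{align}
    [I^i_{n,\al}J^\al_{n,j},\, I^l_{n,\be}J^\be_{n,m}] = \hbar\de^l_j\,I^i_{n,\al}J^\al_{n,m} - \hbar\de^i_m\,I^l_{n,\al}J^\al_{n,j},
\end{align}
which follows directly from $[J^\al_{n,j}, I^i_{n,\be}] = \hbar\de^\al_\be\de^i_j$. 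Moreover $(I^i_{n,\al}J^\al_{n,j})_W$ is invariant under $\bigoplus_{k=1}^{n-1}\gl_k$: the only factor acting nontrivially on $I_n, J_n$ is $\gl_{n-1}$, whose index $\al$ is contracted. Hence $\phi$ descends from the Weyl algebra to $\mathcal A_n$.

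The main technical step is to compute $\phi(C(X,u))$. My approach is an iterative telescoping. Introduce the mesonic matrices $M_k := I_k J_k \in \End(\bC^k)$ and $\wtd M_k := J_k I_k \in \End(\bC^{k-1})$. The quantum moment map \rf{qmoment}, rewritten in matrix form, reads $(M_k)_W - (\wtd M_{k+1})_W = -\hbar\vr_k^\bC\,\id_k$. The key input is the quantum analogue of the classical rectangular identity $u\,\det(u\,\id_{k-1} - JI) = \det(u\,\id_k - IJ)$, lifted to a Capelli swap identity of the form
\begin{align}
    C(M_k, u + s_k\hbar) = (u - c_k\hbar)\, C(\wtd M_k, u + s'_k\hbar)
\end{align}
with universal shift constants $s_k, s'_k, c_k$ determined by Weyl-ordering conventions and the ranks of the matrices involved. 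Starting from $C(X,u) = C(M_n, u - r_n)$ (with an overall Weyl-vector shift), I would apply this swap to replace $M_n$ by $\wtd M_n$, then use the moment map to rewrite $\wtd M_n$ as $M_{n-1}$ shifted by $\hbar\vr_{n-1}^\bC\,\id$, and repeat. Each pair of moves strips off one linear factor $(u - r_k - \frac{n-1}{2}\hbar)$, with the $r_k$ arising from the telescoping sum of FI parameters \rf{rrho}. After $n$ iterations the remaining Capelli determinant is trivial and the accumulated factors assemble into exactly the claimed $\prod_{k=1}^n (u - r_k - \frac{n-1}{2}\hbar)$; the uniform Weyl-vector shift emerges from summing the Weyl-ordering corrections at each rank.

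Since the coefficients of $C(X,u)$ generate the center $U_\hbar(\gl_n)^{\GL_n}$, the computation above shows that $\phi$ factors through an algebra map $\bar\phi : U_\hbar(\gl_n)\otimes_{U_\hbar(\gl_n)^{\GL_n}} \bC[r_1,\ldots,r_n,\hbar] \to \mathcal A_n$ sending the Capelli generators to the prescribed polynomial. To promote $\bar\phi$ to an isomorphism I would pass to associated gradeds: classically, the Higgs branch of $T[\U(n)]$ with generic complex FI parameters is $\gl_n^*$ (cf.\ \cite{Braverman:2016pwk}), whose coordinate ring is precisely the associated graded of the source of $\bar\phi$, and the map at the classical level is the obvious mesonic identification. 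Flatness of the Hamiltonian reduction in $\hbar$ for generic parameters then promotes this to an isomorphism of filtered algebras by a PBW argument. The main obstacle lies in Step 2: the telescoping hinges on the exact form of the Capelli swap identity and on careful bookkeeping of Weyl-ordering constants, where an off-by-one mistake would spoil the Weyl-vector shift. An independent check, which I would perform in parallel, is to evaluate $\phi(C(X,u))$ on a highest-weight vector in a convenient Fock-type representation of $\mathcal A_n$ and read off the scalar eigenvalue; by centrality this determines $\phi(C(X,u))$ uniquely and fixes any ambiguity in the constants.
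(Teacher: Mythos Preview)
Your approach is essentially the same as the paper's. The paper organizes your telescoping as an induction on $n$: assuming the Capelli formula for $\mathcal A_{n-1}$, it realizes $\mathcal A_n$ as the quantum Hamiltonian reduction of $\mathbb C[r_n]\otimes\mathrm{Weyl}_\hbar^{\otimes n(n-1)}\otimes\mathcal A_{n-1}$ by $\gl_{n-1}$ and then extracts the single recursion $C(X,u)=(u-r_n-\tfrac{n-1}{2}\hbar)\,C(Y,u-\tfrac{\hbar}{2})$. For the ``Capelli swap identity'' you leave implicit, the paper cites a precise source: it rewrites both sides as quantum determinants $\mathrm{qdet}(\id_n - I(u-B_\pm)^{-1}J)$ and invokes \cite[Proposition 6.2]{Moosavian:2021ibw} together with the identity $\mathrm{qdet}(\id-I(u-B_-)^{-1}J)\,\mathrm{qdet}(\id+I(u-B_+)^{-1}J)=1$ from \cite{2002math.....11288M}. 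This pins down all of your unspecified shift constants $s_k,s'_k,c_k$ without the auxiliary Fock-module check.

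One point to tighten: in your final step you say the classical limit of the target is $\gl_n^*$, but since the $r_i$ are formal variables the correct classical limit is the fiber product $\gl_n^*\times_{\gl_n^*/\GL_n}\mathfrak a_n^*$. The paper handles injectivity exactly as you propose (surjectivity plus comparison modulo $\hbar$ and flatness), but it needs and supplies an extra ingredient you omit: normality of $\gl_n^*\times_{\gl_n^*/\GL_n}\mathfrak a_n^*$, so that the surjection between rings of functions on irreducible varieties of the same dimension is forced to be an isomorphism.
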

\begin{proof}
We prove the lemma by induction on $n$. The case $n=1$ is obvious, and we prove the induction step as follows. By induction, $\mathcal A_{n-1}$ is isomorphic to $U_{\hbar}(\mathfrak{gl}_{n-1})\otimes_{U_{\hbar}(\mathfrak{gl}_{n-1})^{\mathrm{GL}_{n-1}}}\mathbb C[r_1,\cdots,r_{n-1},\hbar]$, and $U_{\hbar}(\mathfrak{gl}_{n-1})^{\mathrm{GL}_{n-1}}\to\mathbb C[r_1,\cdots,r_{n-1},\hbar]$ is given by $C(Y,u)=\prod_{k=1}^{n-1} \left(u-r_k - \frac{n-2}{2}\hbar\right)$, where $C(Y,u)$ is the Capelli determinant of generators $Y^a_b$ of $U_{\hbar}(\mathfrak{gl}_{n-1})$, and the explicit isomorphism is given by $Y^a_b\mapsto I^a_{n-1,\alpha}J^{\alpha}_{n-1,b} + \frac{\hbar}{2}(n-2) + r_{n-1}\delta^a_b$. Then $\mathcal A_n$ is the quantum Hamiltonian reduction of $\mathbb C[r_n]\otimes\mathrm{Weyl}^{\otimes n(n-1)}_{\hbar}\otimes \mathcal A_{n-1}$ with respect to the $\gl_{n-1}$ action and moment map equation
\begin{align}
    Y^a_b -I^i_{n,b}J^a_{n,i} - \left(r_n + \frac{n}{2} \hbar \right) \de^a_b = 0\,.
\end{align}
Now recall the notation of section 6.1 of \cite{Moosavian:2021ibw}, we can set $B_+ := Y - (r_n + \frac{n-2}{2} \hbar) \id_{n-1}\,,\, \psi=J\,,\, \overline{\psi}=I$ and then quotient by the right ideal generated by $B_-$ (which turns out to be a two-sided ideal), and then we obtain an algebra homomorphism $\mathbb C_{\hbar}[\mathcal M(n-1,n)]\to \mathcal A_n$. 
Under this map, we have
\begin{align}
	\text{qdet}\left(\id_n - I \frac{1}{u - B_-} J\right) \mapsto&\; \text{qdet}\left( \id_n - \frac{\wtd X}{u} \right) \qquad [\text{we defined, } \wtd X := X - \left(r_n + \frac{\hbar}{2}(n-1)\right) \id_n] \nn\\
	=&\; \sum_{\sigma\in \mathfrak{S}_n}\mathrm{sgn}(\sigma) \left(\delta^{\sigma(1)}_1-\frac{ \wtd X^{\sigma(1)}_1}{u-\frac{n-1}{2}\hbar}\right) \cdots \left(\delta^{\sigma(n)}_n-\frac{ \wtd X^{\sigma(n)}_n}{u+\frac{n-1}{2}\hbar}\right) \nn\\
	=&\;\frac{C(X,u+r_n+(n-1)\hbar)}{(u+\frac{n-1}{2}\hbar)\cdots (u-\frac{n-1}{2}\hbar)}.
\end{align}
On the other hand, \cite[Proposition 6.2]{Moosavian:2021ibw} implies that\footnote{We use the identity $\text{qdet}\left(\id_n - I \frac{1}{u - B_-} J\right) \text{qdet}\left(\id_n + I \frac{1}{u - B_+} J\right) = 1$ \cite{2002math.....11288M} to derive the first equality from \cite[Proposition 6.2]{Moosavian:2021ibw}.}
\begin{align}
    \text{qdet}\left(\id_n - I \frac{1}{u - B_-} J\right) =&\; \frac{C(B_+, u+\frac{n-1}{2}\hbar)}{(u+\frac{n-3}{2}\hbar) (u+\frac{n-5}{2}\hbar) \cdots(u-\frac{n-1}{2}\hbar)} \nn\\
    =&\; \frac{C(Y,u + r_n  + (n - \frac{3}{2})\hbar)}{(u+\frac{n-3}{2}\hbar)  \cdots(u-\frac{n-1}{2}\hbar)}. 
\end{align}
Combining the above two equations we see that
\begin{align*}
    C(X,u)=&\; \left(u-r_n - \frac{n-1}{2} \hbar \right) C\left(Y,u - \frac{\hbar}{2}\right) 
\end{align*}
and by the induction hypothesis $C\left(Y,u-\frac{\hbar}{2}\right)=\prod_{k=1}^{n-1}\left(u-r_k - \frac{n-1}{2}\hbar\right)$, thus 
\begin{align}
    C(X,u)=\prod_{k=1}^n\left(u- r_k - \frac{n-1}{2}\hbar\right).
\end{align}
Since $\mathcal A_n$ is generated by $X^i_j$ and $r_1,\cdots,r_n$ as a $\mathbb C[\hbar]$ algebra, we obtain a surjective map $$U_{\hbar}(\mathfrak{gl}_n)\otimes_{U_{\hbar}(\mathfrak{gl}_n)^{\mathrm{GL}_n}}\mathbb C[r_1,\cdots,r_n,\hbar]\twoheadrightarrow \mathcal A_n.$$ Since modulo $\hbar$, the former algebra becomes the ring of functions on the variety $\mathfrak{gl}_n^*\times_{\mathfrak{gl}_n^*/\mathrm{GL}_n}\mathfrak{a}_n^*$ where $\mathfrak{a}_n$ is the Cartan subalgebra, and the map $\mathfrak{a}_n^*\to \mathfrak{gl}_n^*/\mathrm{GL}_n$ is identified with the quotient by the Weyl group $\mathfrak{a}_n^*\to \mathfrak{a}_n^*/\mathfrak{S}_n\cong \mathfrak{gl}_n^*/\mathrm{GL}_n$. It is known that $\mathfrak{gl}_n^*\times_{\mathfrak{gl}_n^*/\mathrm{GL}_n}\mathfrak{a}_n^*$ is a normal variety of dimension $n^2$, and $\mathcal A_n/(\hbar)$ is the ring of function on a (deformation of) Nakajima quiver variety of dimension $n^2$,\footnote{Let $\mathfrak{h}$ be the Cartan subalgebra of the Lie algebra $\mathfrak{g}$ of a reductive Lie group $G$, then the quotient $\mathfrak{g}^*/G$ is isomorphic to $\mathfrak{h}^*/W$ where $W$ is the Weyl group of $G$, and moreover there exists an open subset $\mathfrak{g}^*_{\mathrm{reg}}\subset \mathfrak{g}^*$ such that its complement in $ \mathfrak{g}^*$ has codimension $2$ and that the natural map $\mathfrak{g}^*_{\mathrm{reg}}\to \mathfrak{h}^*/W$ is smooth, see \cite[Section 3.1]{chriss1997representation}. Then it follows from the aforementioned facts that $\mathfrak{g}^*\times _{\mathfrak{h}^*/W} \mathfrak{h}^*$ is Cohen-Macaulay, and it contains an smooth open subset $\mathfrak{g}^*_{\mathrm{reg}}\times _{\mathfrak{h}^*/W} \mathfrak{h}^*$ whose complement in $\mathfrak{g}^*\times _{\mathfrak{h}^*/W} \mathfrak{h}^*$ has codimension $2$, thus $\mathfrak{g}^*\times _{\mathfrak{h}^*/W} \mathfrak{h}^*$ is normal \cite[Theorem 39]{matsumura1970commutative}. Since the projection $\mathfrak{g}^*\times _{\mathfrak{h}^*/W} \mathfrak{h}^*\to \mathfrak{g}^*$ is finite and surjective, we see that $\dim\mathfrak{g}^*\times _{\mathfrak{h}^*/W} \mathfrak{h}^*=\dim \mathfrak{g}^*$.} thus the surjective map $U_{\hbar}(\mathfrak{gl}_n)\otimes_{U_{\hbar}(\mathfrak{gl}_n)^{\mathrm{GL}_n}}\mathbb C[r_1,\cdots,r_n,\hbar]\twoheadrightarrow \mathcal A_n$ is an isomorphism modulo $\hbar$. Since both sides of the map are flat over $\mathbb C[\hbar]$, the map must be injective as well, thus $\mathcal A_n$ is isomorphic to $U_{\hbar}(\mathfrak{gl}_n)\otimes_{U_{\hbar}(\mathfrak{gl}_n)^{\mathrm{GL}_n}}\mathbb C[r_1,\cdots,r_n,\hbar]$.
\end{proof}

\begin{rmk}[Comparing the Higgs and Coulomb Branches]
The reader might notice the difference between the Coulomb branch algebra of $T_{\bm\vr}[\U(n)]$, which is $U_{\hbar}(\mathfrak{gl}_n)$, and the Higgs branch algebra of $T^{\vee}_{\bm\vr}[\U(n)]$ which is $U_{\hbar}(\mathfrak{gl}_n)\otimes_{U_{\hbar}(\mathfrak{gl}_n)^{\mathrm{GL}_n}}\mathbb C[r_1,\cdots,r_n,\hbar]$. This difference is actually superfluous, since mass parameters in our definition of Coulomb branch are symmetrized, in the sense that we only take symmetric polynomials in the mass parameters, and they form the $\mathrm{GL}_n$-invariant part $U_{\hbar}(\mathfrak{gl}_n)^{\mathrm{GL}_n}\cong \mathbb C[r_1,\cdots,r_n,\hbar]^{S_n}$ of the Coulomb branch algebra, where $S_n$ permutes $r_1,\cdots, r_n$. On the other hand, FI parameters in our definition of Higgs branch are not symmetrized. If we extend the the Coulomb side by adding non-symmetric polynomials of mass parameters, we get $U_{\hbar}(\mathfrak{gl}_n)\otimes_{U_{\hbar}(\mathfrak{gl}_n)^{\mathrm{GL}_n}}\mathbb C[r_1,\cdots,r_n,\hbar]$, which is isomorphic to the Higgs branch algebra of $T^{\vee}_{\bm\vr}[\U(n)]$.
\end{rmk}

\begin{proof}[Proof of Proposition \ref{prop:Tweight}]
In practice, the $r_i$'s from \rf{rrho} are locations of the NS5 branes in the holomorphic $\bC$ direction and so we should treat them as complex numbers instead of formal variables. If we evaluate the algebra $\cA_n$ from Lemma \ref{lemma:Talgebra} at $r_k=\lambda_k\hbar,\lambda_k\in \mathbb C$, then we get the central quotient algebra $U_{\hbar}(\mathfrak{gl}_n)/\left(C(X,u)-\prod_{k=1}^n(u-(\lambda_k+ \frac{n-1}{2})\hbar)\right)$, which acts on the Verma module with the highest weight (with some choice of ordering for the $\gl_n$ fundamental weights)
\beq
	(\lambda_1,\cdots,\lambda_n) - \rho\,. \label{hwVerma}
\eeq
This follows from the fact that the Capelli determinant acts on the ground state $|\lambda\rangle$ as
\begin{align}
    C(X,u)|\lambda\rangle =&\; \prod_{k=1}^n \left(u-(n-k)\hbar-X^k_k\right) |\lambda\rangle\,,
\end{align}
whereas, according to Lemma \ref{lemma:Talgebra}:
\begin{align}
    C(X,u)|\lambda\rangle =&\; \prod_{k=1}^n\left(u - (n-k) \hbar -  \left(\lambda_k - \frac{n - 2k + 1}{2}\right)\hbar\right)|\lambda\rangle.
\end{align}
We recognize $\frac{n-2k+1}{2}$ as the $k$-th component of the Weyl vector $\rho$. Comparing the above two equations give us the highest weight \rf{hwVerma}. The Dynkin labels of the weight $\la = (\la_1, \cdots, \la_n)$ are $\la_i - \la_{i+1} = \hbar^{-1}(r_n - r_{n+1}) = \vr_i^\bC$ for $1 \le i \le n-1$.
\end{proof}
The brane construction of the T-operator involves $n$ D5, $n$ NS5, and $n$ D3 branes -- one D3 brane suspended between each pair of five-branes. This setup was also used in \cite{Ishtiaque:2021jan} to create Wilson lines valued in Verma modules for 4d $\GL_{m|n}$ CS theory with highest weights given by locations of NS5 branes in the holomorphic direction shifted by the Weyl vector.\footnote{The choice of Weyl ordering \rf{Weyl} in the quantum moment map \rf{qmoment} and the definition of the generators of $\gl_n$ \rf{Xiso} is important to get the shift by the Weyl vector. It is not clear to us if there is some canonical/physical reason that singles out the Weyl ordering over other ordering schemes.} It was shown in \cite{Bullimore:2016hdc} that the action of quantized Coulomb branch algebras on their Verma modules can be interpreted in terms of monopole operators acting on vortex configurations.

\subsection{Example: The Q-Operators (Minuscule 't Hooft Lines)} \label{sec:Q}
One class of important examples of line operators in 4d $\GL_n$ Chern-Simons theory are the Q-operators, also known as the minuscule 't Hooft operators \cite{Costello:2021zcl}. They are labeled by $Q_0,Q_1,\cdots, Q_{n-1}$ where $Q_k$ in our notation is $\bT_{\bm\vr}(\bm K, \bm L)$ for the tuples:
\beq
	\bm K = (\overbrace{0, \cdots, 0}^{n-k}, \overbrace{1, \cdots, 1}^{k})\,, \qquad \bm L = (k)\,. \label{KLQk}
\eeq
$Q_0$ is the trivial line operator whose phase space is just a point, and for $0<k\le n/2$, the phase space of $Q_k$ is the Coulomb branch of the 3d $\mathcal{N}=4$ theory of the quiver Fig. \ref{fig:QQuiver}.
\begin{figure}[h]
\centering
\begin{subfigure}[b]{0.9\textwidth}
\centering
\begin{tikzpicture}
\newcommand\Square[1]{+(-#1,-#1) rectangle +(#1,#1)}
\tikzmath{\d=1.8; \r=.55; \s=.5;}
\coordinate[draw=black] (n1);
\node[right=.75*\d cm of n1] (dot1) {$\cdots$};
\coordinate[right=.75*\d cm of dot1, draw=black] (nk-1);
\coordinate[right=\d cm of nk-1, draw=black] (nk);
\node[right=.75*\d cm of nk] (dot2) {$\cdots$};
\coordinate[right=.75*\d cm of dot2, draw=black] (nk');
\coordinate[right=\d cm of nk', draw=black] (nk'+1);
\node[right=.75*\d cm of nk'+1] (dot3) {$\cdots$};
\coordinate[right=.75*\d cm of dot3, draw=black] (nn-1);

\path[draw=black] (n1) -- (dot1);
\path[draw=black] (dot1) -- (nk-1);
\path[draw=black] (nk-1) -- (nk);
\path[draw=black] (nk) -- (dot2);
\path[draw=black] (dot2) -- (nk');
\path[draw=black] (nk') -- (nk'+1);
\path[draw=black] (nk'+1) -- (dot3);
\path[draw=black] (dot3) -- (nn-1);
\coordinate[below=\d cm of nk] (f1);
\path[draw=black] (nk) -- (f1);
\filldraw[color=black, fill=white] (n1) circle (\r);
\filldraw[color=black, fill=white] (nk-1) circle (\r);
\filldraw[color=black, fill=white] (nk) circle (\r);
\filldraw[color=black, fill=white] (nk') circle (\r);
\filldraw[color=black, fill=white] (nk'+1) circle (\r);
\filldraw[color=black, fill=white] (nn-1) circle (\r);
\filldraw[color=black, fill=white] (f1) \Square{\s cm};
\node at (n1) {$1$};
\node at (nk-1) {$k-1$};
\node at (nk) {$k$};
\node at (nk') {$k$};
\node at (nk'+1) {$k-1$};
\node at (nn-1) {$1$};
\node at (f1) {$1$};
\end{tikzpicture}
\caption{Quiver for the theory $T_{\bm\vr}[\U(n)]_{\bm K}^{\bm L}$ where $\bm K$, $\bm L$ are given by \rf{KLQk}. There are $n-1$ circles in the quiver. The Coulomb branch of this theory is the phase space of the Q-operator $Q_k$.} \label{fig:QQuiver}
\end{subfigure}

\vspace{.5cm}

\begin{subfigure}[b]{0.9\textwidth}
\centering
\begin{tikzpicture}[baseline=-.7ex]
\tikzmath{\d=.2; \e=.07; \a=.5; \s=1.3;}
\node[circle, draw=black] (NS5p) {};
\node[above=0cm of NS5p] () {$z$};

\coordinate[right=1.5cm of NS5p] (D51);
\node[right=\s cm of D51] (dot2) {$\cdots$};
\coordinate[right=\s cm of dot2] (D5n-1);
\node[right=\s cm of D5n-1] (dot3) {$\cdots$};
\coordinate[right=\s cm of dot3] (D5N-1);
\coordinate[right=\s cm of D5N-1] (D5N);
\path[draw=black, snake it] (NS5p) -- (D51);
\path[draw=black, snake it] (D51) -- (dot2);
\path[draw=black, snake it] (dot2) -- (D5n-1);
\path[draw=black, snake it] (D5n-1) -- (dot3);
\path[draw=black, snake it] (dot3) -- (D5N-1);
\path[draw=black, snake it] (D5N-1) -- (D5N);
\node at (D51) {$\Cross$};
\node at (D5n-1) {$\Cross$};
\node at (D5N-1) {$\Cross$};
\node at (D5N) {$\Cross$};
\node[above left=0cm and .5cm of D51] () {$k$};
\node[above right=0cm and .4cm of D51] () {$k$};
\node[above left=0cm and .3cm of D5n-1] () {$k$};
\node[above right=0cm and .2cm of D5n-1] () {$k-1$};
\node[above left=0cm and .3cm of D5N-1] () {$2$};
\node () at ($(D5N-1)!0.5!(D5N) + (0, 3.5*\e)$) {$1$};
\end{tikzpicture}
\caption{The Coulomb branch can also be described as the bow variety for this bow diagram with $n$ crosses.} \label{fig:QBow}
\end{subfigure}
\caption{3d $\cN=4$ quiver and bow diagram associated with the Q-operator $Q_k$.}
\end{figure}
It is known that the Coulomb branch of this quiver is the affine space $\mathbb{A}^{2k(n-k)}$ \cite{Braverman:2016pwk}, and it quantizes to the Weyl algebra 
\begin{align}
    \mathrm{Weyl}^{\otimes k(n-k)}_{\hbar}=\mathbb{C}\langle x_{ij},y_{lm}\:|\: 1\le i,l\le k< j,m\le n\rangle/([x_{ij},y_{lm}]=\hbar\delta_{il}\delta_{jm}). \label{AQk}
\end{align}
There is only one mass parameter $\bm\vr = (z)$ which can be adjoined as a formal variable or evaluated at some complex value to get the algebra \rf{AQk}. The phase space can be equivalently described by the bow variety of Fig. \ref{fig:QBow}.

All the Q-operators are created using a single NS5 brane. The operator $Q_k$ is created by connecting $k$ D5 branes to the only NS5 brane using $k$ D3 branes.

Phase spaces of Q-operators were computed in \cite{Costello:2021zcl} by directly solving the equations of motion of 4d CS theory in the presence of minuscule 't Hooft lines. The authors further described the quantization of 't Hooft operator in terms of quantization of 3d $\cN=4$ Coulomb branches. The $n$ distinct choices of minuscle coweights of $\gl_n$ correspond in our example to the choice of $k$.

\begin{rmk}[All the Q-operators from \cite{Bazhanov:2010jq}] 
In the work of Bazhanov \textit{et. al.} \cite{Bazhanov:2010jq}, the Q operators are labeled by subsets of $\{1,\cdots,n\}$. The operators $Q_k$ in our paper are denoted by $Q_{\{1,\cdots,k\}}$ in \cite{Bazhanov:2010jq}, and other Q operators $Q_I$ with $|I|=k$ are obtained by permutation of coordinates. From the phase space perspective, the phase space of $Q_k$ is identified with the cotangent bundle of a big cell of $\mathrm{Gr}(k,n)$ \cite{Costello:2021zcl}, and a matrix $g\in \mathrm{GL}_n$ transforms this phase space to the cotangent bundle of other open cells. If $g$ is taken to be permutation matrix, then we get the Weyl conjugations of the standard big cell, and in total there are ${n \choose k}$ such Weyl conjugations, which are in one-to-one correspondence with the operators $Q_I$ with $|I|=k$ in \cite{Bazhanov:2010jq}.
\end{rmk}

\subsection{Example: The L-Operators (Wilson-'t Hooft)} \label{sec:L}
A class of more complicated examples of line operators in 4d Chern-Simons theory are L-operators. They are labeled by $L_0,L_1,\cdots,L_{n}$, where $L_0$ is the trivial line operator whose phase space is just a point, $L_{n}$ is the T-operator from \S\ref{sec:T}, and for $0<k<n$, the phase space of $L_k$ is the Coulomb branch of the 3d $\mathcal{N}=4$ theory of the quiver Fig. \ref{fig:LQuiver}.
\begin{figure}[h]
\centering
\begin{subfigure}[b]{0.9\textwidth}
\centering
\begin{tikzpicture}
\newcommand\Square[1]{+(-#1,-#1) rectangle +(#1,#1)}
\tikzmath{\d=1.8; \r=.55; \s=.5;}
\coordinate[draw=black] (nk);
\node[right=.75*\d cm of nk] (dot2) {$\cdots$};
\coordinate[right=.75*\d cm of dot2, draw=black] (nk');
\coordinate[right=\d cm of nk', draw=black] (nk'+1);
\node[right=.75*\d cm of nk'+1] (dot3) {$\cdots$};
\coordinate[right=.75*\d cm of dot3, draw=black] (nn-1);

\path[draw=black] (nk) -- (dot2);
\path[draw=black] (dot2) -- (nk');
\path[draw=black] (nk') -- (nk'+1);
\path[draw=black] (nk'+1) -- (dot3);
\path[draw=black] (dot3) -- (nn-1);
\coordinate[below=\d cm of nk] (f1);
\path[draw=black] (nk) -- (f1);
\filldraw[color=black, fill=white] (nk) circle (\r);
\filldraw[color=black, fill=white] (nk') circle (\r);
\filldraw[color=black, fill=white] (nk'+1) circle (\r);
\filldraw[color=black, fill=white] (nn-1) circle (\r);
\filldraw[color=black, fill=white] (f1) \Square{\s cm};
\node at (nk) {$k$};
\node at (nk') {$k$};
\node at (nk'+1) {$k-1$};
\node at (nn-1) {$1$};
\node at (f1) {$k$};
\end{tikzpicture}
\caption{Quiver for the 3d $\cN=4$ theory whose Coulomb branch corresponds to the phase space of the L-operator $L_k$. There are $n-1$ circles in the quiver. The Coulomb branch quantizes to the algebra \rf{ALgebra}} \label{fig:LQuiver}
\end{subfigure}

\vspace{.5cm}

\begin{subfigure}[b]{0.9\textwidth}
\centering
\begin{tikzpicture}[baseline=-.7ex]
\tikzmath{\d=.2; \e=.07; \a=.5; \s=1.3;}
\node[circle, draw=black] (NS51) {};
\node[above=0cm of NS51] () {$r_1$};
\node[above right=-.6*\d cm and \d cm of NS51] (M1) {$1$};
\node[below right=-.6*\d cm and \d cm of M1, circle, draw=black] (NS52) {};
\node[above=0cm of NS52] () {$r_2$};
\path[draw=black, snake it] (NS51) -- (NS52);
\node[above right=-.6*\d cm and 3*\e cm of NS52] (M2) {$2$};
\node[right=1cm of NS52] (dot1) {$\cdots$};
\path[draw=black, snake it] (NS52) -- (dot1);

\node[right=1cm of dot1, circle, draw=black] (NS5p) {};
\node[above=0cm of NS5p] () {$r_k$};
\path [draw=black, snake it] (dot1) -- (NS5p);
\node [above left=-.6*\d cm and \e cm of NS5p] (Mp-1) {$k-1$};

\coordinate[right=1.5cm of NS5p] (D51);
\node[right=\s cm of D51] (dot2) {$\cdots$};
\coordinate[right=\s cm of dot2] (D5n-1);
\node[right=\s cm of D5n-1] (dot3) {$\cdots$};
\coordinate[right=\s cm of dot3] (D5N-1);
\coordinate[right=\s cm of D5N-1] (D5N);
\path[draw=black, snake it] (NS5p) -- (D51);
\path[draw=black, snake it] (D51) -- (dot2);
\path[draw=black, snake it] (dot2) -- (D5n-1);
\path[draw=black, snake it] (D5n-1) -- (dot3);
\path[draw=black, snake it] (dot3) -- (D5N-1);
\path[draw=black, snake it] (D5N-1) -- (D5N);
\node at (D51) {$\Cross$};
\node at (D5n-1) {$\Cross$};
\node at (D5N-1) {$\Cross$};
\node at (D5N) {$\Cross$};
\node[above left=0cm and .5cm of D51] () {$k$};
\node[above right=0cm and .3cm of D51] () {$k$};
\node[above left=0cm and .3cm of D5n-1] () {$k$};
\node[above right=0cm and .2cm of D5n-1] () {$k-1$};
\node[above left=0cm and .3cm of D5N-1] () {$2$};
\node[above right=0cm and .3cm of D5N-1] () {$1$};
\end{tikzpicture}
\caption{Bow diagram with $n$ crosses whose associated variety is the Coulomb branch of the above 3d $\cN=4$ quiver.} \label{fig:LBow}
\end{subfigure}
\caption{3d $\cN=4$ quiver and bow diagram associated with the L-operator $L_k$.} 
\end{figure}
We claim that the quantized Coulomb branch algebra of this quiver is 
\begin{align}
    U_{\hbar}(\gl_k)\otimes \mathrm{Weyl}_{\hbar}^{\otimes k(n-k)}. \label{ALgebra}
\end{align}
In fact, using \cite[Lemma 6.3]{Moosavian:2021ibw} we see that the Coulomb branch algebra of this quiver is the $\mathrm{GL}_k$ invariant subalgebra of $\mathrm{Weyl}^{\otimes kn}_{\hbar}$, where $\mathrm{GL}_k$ embeds into $\mathrm{GL}_n$ diagonally and acts on the second tensor component of $\mathbb{C}^{kn}=\mathbb{C}^k\otimes\mathbb{C}^n$. Thus, the Coulomb branch algebra is $(\mathrm{Weyl}_{\hbar}^{\otimes k^2})^{\mathrm{GL}_k}\otimes \mathrm{Weyl}_{\hbar}^{\otimes k(n-k)}$, and it is easy to see that $(\mathrm{Weyl}_{\hbar}^{\otimes k^2})^{\mathrm{GL}_k}\cong U_{\hbar}(\gl_k)$. The phase space of the L-operator can be equivalently described by the bow variety associated to the diagram Fig. \ref{fig:LBow}.

In 4d CS theory the L-operator is a dyonic line whose electric charge corresponds to a representation of $U_{\hbar}(\gl_k)$ determined by mass parameters $(\vr_1^\bC, \cdots, \vr_{k-1}^\bC)$ and a spectral parameter $z$ that are related to the $r_i$'s from Fig. \ref{fig:LBow} by a relation similar to \rf{rrho}. The magnetic charge is determined by a minuscle coweight of $\gl_n$ (cf. \S\ref{sec:Q}).

\subsection{Open Bow Diagrams}
We define an open bow diagram to be a linear diagram with circles and crosses such that it is allowed to have nothing on one or both ends, i.e. there could be semi-infinite D3 branes which do not end on five-branes. And we define the open bow variety associated to an open bow diagram to be similar to a bow variety, except that the symmetries at the open ends are not being gauged or quotient out. For example, Fig. \ref{fig:openD} is an open bow diagram whose corresponding open bow variety is well-known to be $T^*\mathrm{GL}_k$.
\begin{figure}[h]
\begin{center}
    \begin{tikzpicture}[baseline=-.7ex]
\tikzmath{\d=.2; \e=.07; \a=.5; \s=1.3;}
\node[] (END) {};

\coordinate[right=1.5cm of END] (D51);

\node[right=\s cm of D51] (dot3) {$\cdots$};
\coordinate[right=\s cm of dot3] (D5N-1);
\coordinate[right=\s cm of D5N-1] (D5N);
\path[draw=black, snake it] (END) -- (D51);
\path[draw=black, snake it] (D51) -- (dot3);
\path[draw=black, snake it] (dot3) -- (D5N-1);
\path[draw=black, snake it] (D5N-1) -- (D5N);
\node at (D51) {$\Cross$};
\node at (D5N-1) {$\Cross$};
\node at (D5N) {$\Cross$};
\node[above left=0cm and .3cm of D51] () {$\GL_k$};
\node[above right=0cm and .2cm of D51] () {$k-1$};
\node[above left=0cm and .4cm of D5N-1] () {$2$};
\node[above right=0cm and .4cm of D5N-1] () {$1$};
\end{tikzpicture}
\end{center}
\caption{The open bow variety corresponding to pure Dirichlet boundary condition. The $\GL_k$ over the open edge, as opposed to just $k$, is meant to imply that the corresponding $\GL_k$ symmetry has not been gauged.} \label{fig:openD}
\end{figure}
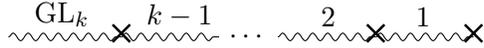
It is a D5 type boundary providing the pure Dirichlet boundary condition.

Note that for any Poisson variety $\mathcal X$ with a Hamiltonian $\mathrm{GL}_k$ action, the Hamiltonian reduction $(\mathcal X\times T^*\mathrm{GL}_k)\sslash \mathrm{GL}_k$ is isomorphic to $\mathcal X$ as a Poisson variety. This implies that gluing the above open bow diagram to some other open bow diagram does not change the variety, therefore we see that an open bow variety of a given open bow diagram is actually isomorphic to the bow variety associated to the new bow diagram formed by gluing the open ends of the above bow diagram and the given bow diagram. In terms of branes, we then get an open bow variety by allowing the D3 branes to end on D5 branes such that all of these D5 branes acquire linking number $1$.

The identification between open bow variety and bow variety also holds at the quantum level. In fact, it is known that $T^*\mathrm{GL}_k$ quantizes to the ring $D_{\hbar}(\mathrm{GL}_k)$ of differential operators on $\mathrm{GL}_k$, and for any $\mathbb C[\hbar]$-algebra $\mathcal A$ equipped with a Hamiltonian $\mathrm{GL}_k$-action, the quantum Hamiltonian reduction $(\mathcal A\otimes D_{\hbar}(\mathrm{GL}_k))\sslash \mathfrak{gl}_k$ is isomorphic to $\mathcal A$.

We can glue any two open bow varieties that have the same ranks at their open edges. Let $\cA_L$ and $\cA_R$ be two algebras associated to two open bow diagrams with rank $k$ at their open edges. The same rank $k$ at the open edges means both $\cA_L$ and $\cA_R$ are equipped with Hamiltonian $\GL_k$ actions. Then the glued variety quantizes to the quantum Hamiltonian reduction $(\cA_L \otimes \cA_R) \sslash \gl_k$. This leads to the notion of boundary algebras, such that we get the algebra associated to a bow variety by gluing two boundary algebras. Similar boundary algebras were studied by Dedushenko-Gaiotto \cite{Dedushenko:2020yzd, Dedushenko:2020vgd} where they not only constructed boundary algebras but also their traces by computing correlation functions in a hemisphere background (as opposed to our $\Om$-background).

In terms of branes, the gluing corresponds to identifying D5 branes at the Dirichlet boundaries, joining D3s from opposite sides of the same D5, and removing the D5s from the configuration. We illustrate the process for $k=3$ in Fig. \ref{fig:braneFuse}.
\begin{figure}[h]
\centering
\begin{subfigure}[b]{{0.45\textwidth}}
\centering
\begin{tikzpicture}[square/.style={regular polygon,regular polygon sides=4}]
	\tikzmath{\d=1; \s=.3;}
	
	\node[square, draw=black, fill=black!15] (AL) {$\cA_L$};
	\coordinate (lu1) at ($(AL.30) + (\d cm, 0) $);
	\coordinate (lu2) at ($(AL.0) + (\d cm, 0) + (\s*\d cm, 0)$);
	\coordinate (lu3) at ($(AL.-30) + (\d cm, 0) + (2*\s*\d cm, 0)$);
	\draw ($(AL.30)$) -- (lu1);
	\draw ($(AL.0)$) -- (lu2);
	\draw ($(AL.-30)$) -- (lu3);
	
	\node[above right=0 and 4*\d cm of AL, square, draw=black, fill=black!15] (AR) {$\cA_R$};
	\coordinate (ru3) at ($(AR.210) - (\d cm, 0) $);
	\coordinate (ru2) at ($(AR.180) - (\d cm, 0) - (\s*\d cm, 0)$);
	\coordinate (ru1) at ($(AR.150) - (\d cm, 0) - (2*\s*\d cm, 0)$);
	\draw ($(AR.150)$) -- (ru1);
	\draw ($(AR.180)$) -- (ru2);
	\draw ($(AR.210)$) -- (ru3);
	
	\draw ($(lu1) + (0, \d cm)$) -- ($(lu1) - (0, \d cm)$);
	\draw ($(lu2) + (0, 1.2*\d cm)$) -- ($(lu2) - (0, .8*\d cm)$);
	\draw ($(lu3) + (0, 1.4*\d cm)$) -- ($(lu3) - (0, .6*\d cm)$);
	
	\draw ($(ru1) + (0, .6*\d cm)$) -- ($(ru1) - (0, 1.4*\d cm)$);
	\draw ($(ru2) + (0, .8*\d cm)$) -- ($(ru2) - (0, 1.2*\d cm)$);
	\draw ($(ru3) + (0, \d cm)$) -- ($(ru3) - (0, \d cm)$);
\end{tikzpicture}
\caption{Two different brane configurations with Dirichlet boundary conditions at one end, corresponding to the algebras $\cA_L$ and $\cA_R$.}
\end{subfigure}
\hfill
\begin{subfigure}[b]{{0.45\textwidth}}
\centering
\begin{tikzpicture}[square/.style={regular polygon,regular polygon sides=4}]
	\tikzmath{\d=1; \s=.3;}
	
	\node[square, draw=black, fill=black!15] (AL) {$\cA_L$};
	\coordinate (lu1) at ($(AL.30) + (\d cm, 0) $);
	\coordinate (lu2) at ($(AL.0) + (\d cm, 0) + (\s*\d cm, 0)$);
	\coordinate (lu3) at ($(AL.-30) + (\d cm, 0) + (2*\s*\d cm, 0)$);
	\draw ($(AL.30)$) -- (lu1);
	\draw ($(AL.0)$) -- (lu2);
	\draw ($(AL.-30)$) -- (lu3);
	
	\node[above right=0 and 2.65*\d cm of AL, square, draw=black, fill=black!15] (AR) {$\cA_R$};
	\draw ($(AR.150)$) -| ($(lu1) - (0, \d cm)$);
	\draw ($(AR.180)$) -| ($(lu2) - (0, .8*\d cm)$);
	\draw ($(AR.210)$) -| ($(lu3) - (0, .6*\d cm)$);
	
	\draw ($(lu1) + (0, 2*\d cm)$) -- ($(lu1) - (0, \d cm)$);
	\draw ($(lu2) + (0, 2.2*\d cm)$) -- ($(lu2) - (0, .8*\d cm)$);
	\draw ($(lu3) + (0, 2.4*\d cm)$) -- ($(lu3) - (0, .6*\d cm)$);
\end{tikzpicture}
\caption{Identifying the D5 branes from the two different Dirichlet boundaries.}
\end{subfigure}

\vspace{.5cm}

\begin{subfigure}[b]{{0.45\textwidth}}
\centering
\begin{tikzpicture}[square/.style={regular polygon,regular polygon sides=4}]
	\tikzmath{\d=1; \s=.3;}
	
	\node[square, draw=black, fill=black!15] (AL) {$\cA_L$};
	\coordinate (lu1) at ($(AL.30) + (\d cm, 0) $);
	\coordinate (lu2) at ($(AL.0) + (\d cm, 0) + (\s*\d cm, 0)$);
	\coordinate (lu3) at ($(AL.-30) + (\d cm, 0) + (2*\s*\d cm, 0)$);
	
	\node[right=2.65*\d cm of AL, square, draw=black, fill=black!15] (AR) {$\cA_R$};
	\draw ($(AL.30)$) -- ($(AR.151)$);
	\draw ($(AL.0)$) -- ($(AR.180)$);
	\draw ($(AL.-30)$) -- ($(AR.209)$);
	
	\draw ($(lu1) + (0, .9*\d cm)$) -- ($(lu1) - (0, 1.1*\d cm)$);
	\draw ($(lu2) + (0, 1.1*\d cm)$) -- ($(lu2) - (0, .9*\d cm)$);
	\draw ($(lu3) + (0, 1.3*\d cm)$) -- ($(lu3) - (0, .7*\d cm)$);
\end{tikzpicture}
\caption{Joining the D3 branes from opposite sides of the D5 branes.}
\end{subfigure}
\hfill
\begin{subfigure}[b]{{0.45\textwidth}}
\centering
\begin{tikzpicture}[square/.style={regular polygon,regular polygon sides=4}]
	\tikzmath{\d=1; \s=.3;}
	
	\node[square, draw=black, fill=black!15] (AL) {$\cA_L$};
	\coordinate (lu1) at ($(AL.30) + (\d cm, 0) $);
	\coordinate (lu2) at ($(AL.0) + (\d cm, 0) + (\s*\d cm, 0)$);
	\coordinate (lu3) at ($(AL.-30) + (\d cm, 0) + (2*\s*\d cm, 0)$);
	
	\node[right=2.65*\d cm of AL, square, draw=black, fill=black!15] (AR) {$\cA_R$};
	\draw ($(AL.30)$) -- ($(AR.151)$);
	\draw ($(AL.0)$) -- ($(AR.180)$);
	\draw ($(AL.-30)$) -- ($(AR.209)$);
\end{tikzpicture}
\caption{The D5 branes are moved away and we are left with the configuration corresponding to the algebra $(\cA_L \otimes \cA_R) \sslash \gl_3$.}
\end{subfigure}
\caption{Gluing two boundary algebras $\cA_L$ and $\cA_R$ with Hamiltonian $\GL_3$ actions -- the brane picture. The gray squares with $\cA_L$ and $\cA_R$ hide whatever configurations of branes are needed to get the algebras $\cA_L$ and $\cA_R$. The vertical and horizontal lines are D5 and D3 branes respectively.} \label{fig:braneFuse}
\end{figure}
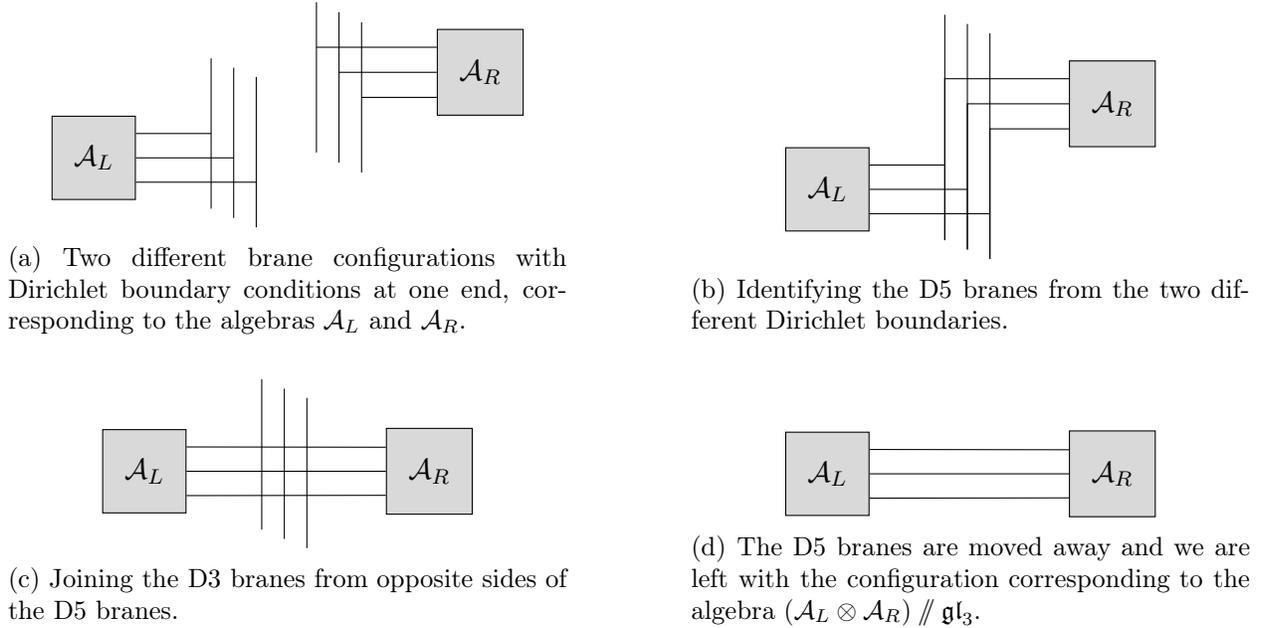

\section{Conclusion}
Let us end with a few remarks on open threads. Instead of putting the line operators $\bL_{\bm\vr}^z(\bm K, \bm L)$ horizontally just to create a monodromy operator as in Fig. \ref{fig:LKLmono}, we can put a collection of these operators vertically to create the spin chain itself. We then get direct products of bow varieties as phase spaces of these spin chains.\footnote{Assuming that the 4d CS theory itself has a trivial phase space (a single point) without the lines.} It will be interesting to consider geometric and brane quantization of these phase spaces following \cite{Witten:2010zr, Gukov:2008ve, Gaiotto:2021kma}, especially considering that quantization of these bow varieties should result in quantum integrable systems according to this construction.
We are further exploring operator relations in 4d CS theory corresponding to the famous TQ and QQ-relations  \cite{BAXTER1972193, baxter2007exactly, Bazhanov:2010ts, Marboe:2016yyn} and their relations to fusions of bow varieties. When the phase spaces can be identified with vacuum moduli spaces of 3d $\cN=4$ quiver gauge theories, these results are along the lines of many known relations between integrable spin chains and such 3d theories \cite{Moore:1997dj, Nekrasov:2009uh, Nekrasov:2009ui, Gaiotto:2013bwa,  Chung:2016lrm, Bullimore:2017lwu, Gu:2022dac}. Known connections between spin chains and 3d gauge theories with Hanany-Witten type brane construction was used in \cite{Gaiotto:2013bwa} to establish bispectral dualities of integrable spin chains. Brane constructions similar to that of \S\ref{sec:T} were used in \cite{Ishtiaque:2021jan} to create spin chains with Verma modules and to illustrate fermionic dualities of superspin chains. We hope that our brane constructions and characterization of line operators using Cherkis bow varieties will further aid the study of non-trivial dualities of integrable spin chains and related gauge theories.

\section*{Acknowledgments}
\addcontentsline{toc}{section}{\numberline{}Acknowledgments}

Majority of the work done by NI for this paper took place at the Institute for Advanced Study in Princeton, NJ 08540, USA with support from the National Science Foundation under Grant No. PHY-1911298. At IH\'ES NI is supported by the Huawei Young Talents Program Fellowship. Kavli IPMU is supported by World Premier International Research Center Initiative (WPI), MEXT, Japan. YZ also thanks Perimeter Institute for Theoretical Physics in Waterloo, ON N2L 2Y5, Canada, where part of his work was done as a graduate student.

\appendix

\section{$\Om$-deformed Supersymmetry} \label{sec:Vsusy}
Let $V$ be a $\U(1)$ space-time symmetry acting on $\bR^2_{12}$ -- the world-volume of the 2d B-model from \S\ref{sec:KWas2d}. Turning on $\Om$-deformation using $V$ deforms the B-model supersymmetry \rf{QBvect}, \rf{QBch1}, and \rf{QBch0}. We denote the deformed supersymmetry variation by $\de_V$. It acts on the B-model multiplets as follows.

On vector multiplet (deformation of \rf{QBvect}):
\beq\begin{aligned}
	\de_V \cA^{(2)} =&\; \iota_V \chi^{(2)} \,, \qquad& \de_V \ov\cA^{(2)} =&\; \psi^{(2)} - \iota_V \chi^{(2)} \,, \\
	\de_V \psi^{(2)} =&\; 2 \iota_V F^{(2)} - 2 \ii D^{(2)} \iota_V \phi^{(2)}\,, \qquad& \de_V \eta =&\; P\,, \\
	\de_V P =&\; \iota_V \cD^{(2)} \eta \,, \qquad& \de_V \chi^{(2)} =&\; \cF^{(2)}\,.
\end{aligned}\label{QBvectV}\eeq
On the first chiral multiplet (deformation of \rf{QBch1}):
\beq\begin{aligned}
	\de_V \cA^{(c)} =&\; \iota_V \chi^{(c)}\,, \qquad& \de_V \ov\cA^{(c)} =&\; \psi^{(c)} - \iota_V \chi^{(c)}\,, \\
	\de_V \chi^{(c)} =&\; \dd^{(c)} \cA^{(2)} + \cD^{(2)} \cA^{(c)} + \iota_V \sfF\,, \qquad & \de_V \left(\psi^{(c)} - \iota_V \chi^{(c)} \right) =&\; \iota_V \left(\dd^{(c)} \cA^{(2)} + \cD^{(2)} \ov\cA^{(c)}\right)\,, \\
	\de_V \sfF =&\; \cD^{(2)} \chi^{(c)} + \cD^{(c)} \chi^{(2)}  \qquad& \de_V \ov\sfM =&\; \ov\sfF\,, \\
	& \qquad& \de_V \ov\sfF =&\; \cD^{(2)} \iota_V \ov\sfM\,.
\end{aligned}\label{QBch1V}\eeq
On the second chiral multiplet (deformation of \rf{QBch0}):
\beq\begin{aligned}
	\de_V \si =&\; \iota_V \ov\psi^{(2)} \,, \qquad& \de_V \ov\si =&\; \ov\eta\,, \\
	\de_V \ov\psi^{(c)} =&\; \cD^{(2)} \si + \iota_V \sfG\,, \qquad& \de_V \ov\eta =&\; \iota_V \cD^{(2)} \ov\si \,, \\
	\de_V \sfG =&\; \cD^{(2)} \ov\psi^{(c)} + \si \wedge \chi^{(2)} \,, \qquad & \de_V \ov\sfN =&\; \ov{\sfG}\,, \\
	& \qquad&\; \de_V \ov{\sfG} =&\; \cD^{(2)} \iota_V \ov\sfN\,.
\end{aligned}\label{QBch0V}\eeq
The deformed supersymmetry satisfies the algebra (deformation of \rf{QB2=0}):
\beq
	\de_V^2 = \cD^{(2)} \iota_V + \iota_V \cD^{(2)} = \cL_V + \de^\text{gauge}_{\iota_V \cA^{(2)}}\,.
\eeq
Here $\cL_V$ is Lie derivative with respect to $V$ and $\de^\text{gauge}_{\iota_V \cA^{(2)}}$ denotes gauge transformation generated by $\iota_V \cA^{(2)}$. The $\Om$-deformed theory localizes to the fixed point of $V$ with $\cL_V$-invariant fields and action given by the $\cL_V$-invariant superpotential of the B-model.

\bibliographystyle{JHEP}
\bibliography{LO4dCS}
\end{document}